\documentclass[11pt]{article}
\usepackage[T1]{fontenc}
\usepackage{amsfonts}
\usepackage{amsmath}
\usepackage{amssymb}
\usepackage{amsthm}
\usepackage{bbm}
\usepackage{bm}
\usepackage{mathrsfs}
\usepackage{verbatim}
\usepackage{setspace}
\usepackage{color}
\usepackage{pdfsync}
\usepackage{enumitem} 
\usepackage{graphicx}
\usepackage[normalem]{ulem}
\usepackage[comma,authoryear]{natbib}
\usepackage{setspace}

\theoremstyle{plain}
\newtheorem{theorem}{Theorem}[section]
\newtheorem{proposition}[theorem]{Proposition}
\newtheorem{lemma}[theorem]{Lemma}

\theoremstyle{definition}

\newtheorem{remark}[theorem]{Remark}

\newtheorem{example}[theorem]{Example}

\theoremstyle{remark}








\newcommand{\R}{\mathbb{R}}

\newcommand{\cA}{\mathcal{A}}

\newcommand{\cI}{\mathcal{I}}

\newcommand{\cL}{\mathcal{L}}

\DeclareMathOperator{\sgn}{sign}
\DeclareMathOperator{\tr}{Tr}

\newcommand{\as}{\mbox{-a.s.}}

\newcommand{\sta}{\textnormal{sta}}
\newcommand{\dyn}{\textnormal{dyn}}

\numberwithin{equation}{section}

\usepackage[pdfborder={0 0 0}]{hyperref}
\hypersetup{
  urlcolor = black,
  pdfauthor = {Marcel Nutz, Jose A. Scheinkman},
  pdfkeywords = {Speculation; Supply; Short-selling; Resale Option; Delay option},
  pdftitle = {Shorting in Speculative Markets},
  pdfsubject = {Shorting in Speculative Markets},
  pdfpagemode = UseNone
}

\begin{document}

\title{\vspace{-3.5em}
Shorting in Speculative Markets%
\footnote{The authors have no conflict of interest relevant to this paper. We are indebted to Pete Kyle and Xunyu Zhou for fruitful discussions and to two referees, an Associate Editor,  and Stefan Nagle, the Editor, for comments and suggestions that greatly improved the paper.  An earlier version was titled ``Supply and Shorting in Speculative Markets.''}
}
\date{\today}
\author{
  Marcel Nutz%
  \thanks{
  Columbia University. Research supported by an Alfred P.\ Sloan Fellowship and NSF Grants DMS-1512900 and DMS-1812661.}
  \and
  Jos\'e A.\ Scheinkman%
  \thanks{
  Columbia University, Princeton University and NBER. }
}
\maketitle \vspace{-1.3em}

\onehalfspacing
\begin{abstract}
We propose a continuous-time model of trading with heterogeneous beliefs. Risk-neutral agents face  quadratic costs-of-carry on positions and thus their marginal valuations decrease with the size of their position, as it would be the case for risk-averse agents. In the equilibrium models of  heterogeneous beliefs that followed Harrison--Kreps, investors are risk-neutral, short-selling is prohibited  and agents face  constant marginal costs of carrying positions. The resulting resale option guarantees that  the  price exceeds the price of the asset when speculation is ruled out; the difference is identified as a \emph{bubble}. In our model increasing marginal costs entail that the price depends on  asset supply. Second, agents  also value an option to delay, and this may cause the market to equilibrate \emph{below} the  buy-and-hold price. Third, we introduce the possibility of short-selling. A Hamilton--Jacobi--Bellman equation of a novel form quantifies precisely the influence of the costs-of-carry on the price. An unexpected decrease in shorting costs may lead to the collapse of a bubble; this links the financial innovations  that facilitated shorting of MBSs to the subsequent collapse of prices. \end{abstract}


\newpage
\onehalfspacing
\section{Introduction}

As \cite{KA2005} elaborate, many classical economists argued that the purchase of securities for re-sale rather than for investment income was the engine that generated asset price bubbles. To explain such \emph{speculation} in a dynamic equilibrium model, \cite{harrisonkreps78} study risk-neutral agents with fluctuating heterogeneous beliefs. In their model, long positions can be financed at a constant interest rate and short-selling is ruled out. The buyer of an asset thus acquires both a stream of future dividends and an option to resell which in combination with fluctuating beliefs guarantees that speculators are willing to pay more for an asset than they would pay if they were forced to hold the asset to maturity; that is, risk-neutral investors would pay to be allowed to speculate. \cite{scheinkmanxiong03} considered a model where heterogeneous beliefs result from agents' overconfidence on different public signals and added trading costs, and showed that these models  generate a correlation between trading volume and overpricing,\footnote{See also \cite{BerestyckiBruggemaMonneauScheinkman.16}.} a characteristic of several major bubble episodes in the last three centuries.\footnote{See e.g.\ \cite{CNW06} on the South Sea bubble,  \cite{HS07} on the Roaring Twenties, \cite{ofek03} and \cite{Cochrane02} on the internet bubble, \cite{xiongyu} on the Chinese warrant bubble.}

Another stylized fact is that bubble implosions often follow increases in supply. For instance, the implosion of the dotcom bubble was preceded by a vast increase in the float of internet shares.\footnote{See \cite{ofek03}.}
The South Sea bubble lasted less than one year, but in that  period the amount of outstanding shares of the South Sea Company more than doubled, and many other joint-stock companies were established.\footnote{The directors of the SSC  understood  that  bubble companies competed with the SSC's conversion scheme and could deflate its own bubble. \cite{Harris} documents that  the Bubble Act of 1720, which banned joint-stock companies except if authorized by Royal Charter, was produced at the behest of the company to limit the competition for capital.}  However, the assumption of risk-neutral investors facing constant marginal costs in the earlier literature on disagreement and bubbles implies that supply is \emph{irrelevant} for pricing.\footnote{Except for the assumption of positive net supply, issues concerning supply of the asset subject to bubbles are also ignored in the rational bubbles literature (\cite{SW97}).} \cite{hongetal06} analyzed a two-period model with risk-averse investors where unexpected increases in supply could deflate bubbles. The economics is straightforward---when agents are risk-averse, their marginal valuation for a risky asset decreases with the amount they hold.

Short-selling an asset can be seen as a source of additional supply. The collapse of  prices for  mortgage backed securities in 2007 was preceded  by a series of financial innovations  that facilitated shorting: the creation of a standardized CDS for MBS in 2005, the introduction of traded indexes for subprime mortgage backed credit derivatives in 2006, and the use of  CDS to construct synthetic CDOs that allowed Wall Street to satisfy the global demand for US AAA mortgage bonds without going through the relatively slow process of originating new mortgages.\footnote{See \cite{scheinkman14} for a summary or \cite{lewis15} for an excellent detailed account.} 
The amounts shorted were substantial; \cite{CHW11} estimate that synthetic CDOs, mostly issued after 2005H2,  more than doubled the amount of  BBB Home Equity Bonds placed in CDOs during 1998--2007.\footnote{BBB tranches of Home Equity Bonds  were an essential fuel to the CDO machine that transformed subprime mortgages into AAA rated bonds.} It is unlikely that this supply would have been absorbed without any price impact. In any case,  starting in the second half of 2007, prices seem to exhibit substantial  discounts relative to fundamentals.\footnote{\cite{BCT17}  provide a methodology to calculate the \emph{intrinsic} value of a CDO and apply it to market data (see their Appendix A).  They  attribute the low prices to the increase in asymmetric information between buyers and sellers that followed  the   downgrades of MBS securities by rating agencies in summer 2007. Analyzing the pricing of index credit default swaps post-crisis, \cite{SW11} suggest that the pricing reflected a limited  supply of insurance of asset backed securities, presumably relative to demand.}

In this paper we propose a finite-horizon continuous-time  model with $n$ types of investors trading a single asset  and aiming to maximize expected cumulative net gains from trade. These investors are risk-neutral, face a constant interest rate, and have fluctuating heterogeneous beliefs about the evolution of a Markov state that  determines the asset's payoff.  In contrast to the previous literature, shorting is allowed.  Investors pay costs which are proportional to the square of their positions but
%
%
%
the  constant of proportionality that defines the cost of going short may be larger than the corresponding constant for going long. This asymmetry between costs of going short and long is a well-known feature of financial markets, see e.g. \cite{davolio},
and the assumptions in the earlier literature correspond to an infinite cost for short positions and constant marginal costs for long positions.
On the one hand, the costs in our model reflect monetary costs of holding a position (in particular increasing costs of capital). On the other hand, they stand in for risks that we do not explicitly model, such as market-wide liquidity shocks that would force agents to liquidate their positions at unfavorable prices or the recall-risk faced by short-sellers.

Indeed, since costs are quadratic, an agent's marginal valuation of the asset will decrease as their position increases, as it would be the case for risk-averse agents. Thus we view our setup as an alternative to a much less tractable model with risk aversion, with many of the same forces present.\footnote{One difference from models of disagreement that use risk aversion to avoid no-shorting constraints is that the presence of holding costs allows for equilibrium to exist even when agents disagree about  perceived arbitrage opportunities.}
In particular, we show below that an increase in aggregate supply of the asset lowers equilibrium prices. 
Importantly, using the two cost coefficients as separate parameters allows us to impose asymmetric costs and study the impact of changes in relative costs on prices. 
By contrast, traditional models with risk aversion either treat longs and shorts symmetrically or completely rule out shorts by imposing portfolio constraints. 

We examine the equilibrium price for the asset, given by a function of time and current state. Types that expect prices to increase on average over the next instant choose to go long; the size of their position depends on the difference between their expected price changes and the marginal cost of carrying long positions.  The other types choose to go short, by amounts that depend on their expected price changes and the cost of carrying short positions. Equilibrium requires that the longs absorb the shorts plus an exogenous supply. Theorem~\ref{th:PDE} below shows that there is a unique equilibrium price function and that it can be characterized by a partial differential equation (PDE). 
This equation is of Hamilton--Jacobi--Bellman-type with a novel form where the optimization runs over the ways to divide the agents into two groups at any time and state; at the optimum, these are the optimists (holding long positions in equilibrium) and the pessimists (holding shorts).
A noteworthy feature is that the supply enters mathematically as a \emph{running cost} (i.e., like intermediate consumption in Merton's problem). Theorem~\ref{th:PDE} also quantifies precisely how these costs amplify or diminish the impact of optimists' and pessimists' views; for instance, as shorting gets more expensive relative to being long, optimists have a larger impact on the price.

We show that an increase in supply decreases the equilibrium price and that a decrease (increase) in the cost for long (short) positions increases the price. We also establish that as the cost for long positions converges to zero, the equilibrium price function converges to a function that does not depend on the cost of holding a short position or the amount supplied.  On the other hand, as the cost of shorting becomes prohibitive, the equilibrium price converges to a function that does depend on the cost of carrying long positions as well as the exogenous supply of the asset. 

To discuss the impact of speculation we first characterize the static equilibrium price; that is, the price that prevails when re-trading is not allowed and agents are forced to use buy-and-hold strategies. The difference between the dynamic price (where re-trading is possible) and the static price has been identified as the size of a bubble in the previous literature. 
A  buyer of the asset today may forecast that at some future date she would be able to sell at a price that would exceed her own valuation of the asset at that date. Because of this \emph{resale option}, she may be willing to pay more than what she believes is the discounted value of the payoff of an asset. In the classical models, this option causes equilibrium prices to exceed the price that would prevail if re-trading is ruled out. In addition, there is also an \emph{option to delay} which has not been highlighted in the earlier literature on heterogeneous beliefs. A speculator that buys $y$ units today may plan to buy additional units of the asset in future states of the world where there would be a larger difference between the asset price and her marginal valuation for the asset if she holds $y$ units. However, if the marginal cost of holding a long position is constant, this delay option has no impact in equilibrium. The intuition is that, since agents are risk-neutral and the marginal  cost-of-carry is constant, a buyer of a positive amount of the asset must be indifferent as to the amount of the asset she buys. Hence, the delay option has no value for this buyer, and in particular, the dynamic equilibrium price cannot be smaller than the static price. We  prove that this comparison holds even if shorting is allowed (see Proposition~\ref{pr:limitLongComparison}). We show through an example that in the presence of increasing marginal costs of going long, the option to delay may outweigh the resale option and cause the dynamic price to be \emph{lower} than the static one, even when shorting is prohibited. Thus the crucial assumption in the earlier literature that delivers the result that speculative prices exceed static prices is not the prohibition of short-sales, but the assumption of constant marginal costs for carrying long positions. 

When shorting is allowed, the short party also has  corresponding options. An agent that acquires a short position today may forecast that at some future date he would be able to repurchase the asset at a price that would be below his own valuation at that date.  This option to resell a short position, that is the option to cover shorts, in turn,  decreases the minimum amount pessimists would be willing to receive for shorting the asset, putting downward pressure on prices. The short party also enjoys an option to delay. In Example~\ref{ex:symmetricCosts} we show that when the cost of holding a short position is close to the cost of holding a long position,  the equilibrium price may be less than the static price. We argue that this happens because the long party values the resale option less than the short party values the repurchase option.  Example~\ref{ex:symmetricCosts} can also be used to illustrate that an unexpected decrease of the cost of shorting can lead to a collapse of an asset price bubble, thus rationalizing a link between the decrease of the cost of shorting in the MBS market in 2005--07 and the collapse of CDO prices.

The equilibrium in our model is not first-best except in the limit case of homogeneous beliefs. We show that
the equilibrium price and allocations obtain as solutions to the problem of a time-consistent planner who subsidizes and taxes the cost-of-carry to maximize the initial price.  We use this planner's problem to explain the structure of the PDE that characterizes equilibrium prices.

Our paper connects to a number of other contributions in the literature. In a pioneering paper allowing for short-sales and risk aversion in a continuous-time setting of heterogeneous beliefs,  \cite{dumasetal09}  consider a  ``complete markets'' model with two classes of agents, one of which is overconfident about a public signal. Overconfident-investors' reaction to the signal  introduces a risk factor---sentiment risk---which  carries a risk-premium and causes stock prices to be excessively volatile.  In their model, the delay option must be  valuable and supply affects equilibrium prices, but these questions are not explicitly analyzed. Instead, \cite{dumasetal09} focus on the important question of identifying the trading strategy that would allow a rational investor to take advantage of excessive stock price volatility and sentiment fluctuations and show that rational investors choose a conservative portfolio that is sensitive to their predictions about future realization of sentiment. The related paper by \cite{david} assumes the existence of  distinct processes for output and dividends of a zero-net-supply stock. The drifts of these processes  are given by an unobserved, finite state, continuous-time Markov chain. Agents agree to disagree on the probabilities of transitions across states and use the zero-supply stock to speculate against each other---creating an additional source of risk.   The focus of \cite{david} is the relationship between the equity premium and the time variation in agents' consumption. In these papers there is a cost symmetry between going short and long (the short party receives the equilibrium price and must pay the dividends that accumulate until the position is closed). Thus one cannot examine the effect of changes in shorting costs, which is the main motivation for our paper.

The literature on asset pricing with search frictions that follows \cite{DGP05} assumes that agents have fluctuating private benefits from holding an asset and that opportunities for trading are randomly distributed. Strict concavity of private benefits implies that the supply of the asset affects equilibrium prices.  Our assumption of quadratic costs of holding a position could be similarly motivated as private benefits (but in our case the differences come from heterogeneity of beliefs rather than benefits). The fluctuating private benefits also generate options to resell and to delay trading which are discussed in \cite{LR}, \cite{feldhutter} and \cite{HLW}. These authors  point out that depending on the curvature of  private benefits an increase in trading opportunities may decrease or increase the price of the asset. 
In particular the comparison between prices that would prevail  when re-trading opportunities are more or less frequent is ambiguous. However, short-selling or changes in shorting costs are not emphasized in this literature.

\cite{DGP02}\footnote{See also \cite{VW08}.} highlight the mechanics of shorting that prevails in  markets where shorts  pay a fee to borrow assets from longs.\footnote{Synthetic CDOs allowed pessimists to short  CDOs without borrowing the underlying securities.}  Agents disagree on the expected final payoff of an asset but there is no fluctuation of beliefs and hence no speculative behavior; all purchases are buy-and-hold. The dynamics arises because pessimists must meet longs and borrow their shares and these meetings occur with an intensity $\lambda$ per unit of time. In the model of \cite{DGP02}  an increase in supply decreases equilibrium prices (see Proposition 5). There are no explicit costs of shorting, but an increase in the intensity of meetings, $\lambda,$ has an ambiguous effect on prices; it lowers prices because the supply by shorts increases but it increases prices because longs are more likely to lend their shares. 

\cite{CM11} also study heterogeneous agents in an equilibrium model, with a focus on survival and market impact. They find that long-run price and portfolio impact are equivalent to the survival of an agent under different measures. Again, there is no fluctuation of beliefs; agents are optimists or pessimists because they over- or underestimate the (constant) drift parameter of dividends. By contrast, in our model, the notion of optimism is endogenous and state-dependent.

\cite{fostelgeanokoplos12} study a collateral equilibrium\footnote{\cite{GZ97}} in a model of financial innovation with heterogeneous beliefs. The introduction of a CDS  leads to a fall in the price of the underlying security, the fall being  more dramatic if tranching of the security is  already present. The introduction of this new derivative affects equilibrium prices but in the model of \cite{fostelgeanokoplos12} the initial beneficiaries of the new contract are the optimists, who in their language benefit from ``tranching cash.'' By contrast, in our model the initial beneficiaries of this cost decrease are the pessimists. \cite{lewis15} reports how starting in early 2005 a small group of traders who had pessimistic views on the housing market lobbied ISDA---the trade organization of over-the-counter market participants---to create  standardized CDS contracts on  mortgage-market securities that facilitated shorting.\footnote{See e.g.\ \cite{lewis15} pp.\ 48--50 on the creation of standardized CDS on MBS.}
\cite{OZ16} also examine the effect of introducing a CDS, in a model where traders differ on their horizons and beliefs. They postulate a per-unit cost for trading bonds which affects equally long and short positions, while CDS trading  is free. Thus the  introduction  of a CDS lowers the cost for both longs and shorts. It leads former bond buyers to switch to protection selling and former bond shorters to buying protection. Moreover long-horizon traders now hold a long position on the bond while buying protection.  The net effect may be an \emph{increase} of   bond prices. Although the mechanism described in \cite{OZ16} may have played a role in the CDO market, the sharp drop in prices of CDO-tranches suggests that it was overwhelmed by the lowering of the cost of shorting.

The paper is organized as follows. Section~\ref{se:main} contains the formulation of the problem and the characterization of the equilibrium as the solution to a Hamilton--Jacobi--Bellman  equation. Section~\ref{se:comparativeStatics} presents  comparative statics and limiting results. Section~\ref{se:speculation} discusses the role of speculation, while Section~\ref{se:planner} deals with the planner's problem.  Section~\ref{se:conclusion} concludes, and the Appendix details  proofs and several extensions of our model.

\section{Equilibrium Price}\label{se:main}

In this section, we detail our formal setup and show that it leads to a unique equilibrium. The equilibrium price is described by a partial differential equation of the Hamilton--Jacobi--Bellman type.

\subsection{Definition of the Equilibrium Price}

We consider $n\geq1$ types, each with a unit measure of agents,  who trade a security over a finite time interval~$[0,T]$. For brevity, we will often refer to a type as an agent. The security has a single payoff $f(X(T))$ at the horizon, where $f:\R^{d}\to\R$ is a bounded continuous function and $X(\omega)$, $\omega \in \Omega$ is the $d$-dimensional state process.\footnote{More precisely  we  take~$X$ to be the coordinate-mapping process on the space $\Omega=C([0,T],\R^{d})$ of continuous, $d$-dimensional paths, equipped with the canonical filtration and sigma-field. In what follows all processes are assumed to be progressively measurable.}  While there is no ambiguity about $f$, the agents agree to disagree on the evolution of the state process.  The views of agent $i$ are represented by a probability measure $Q_{i}$ on $\Omega$ under which~$X$ follows the stochastic differential equation (SDE)
\begin{equation}\label{eq:SDEforX}
  dX(t) = b_{i}(t,X(t))\,dt + \sigma_{i}(t,X(t))\,dW_{i}(t), \quad X(0)=x,
\end{equation}
where $W_{i}$ is a Brownian motion of dimension $d'$ and the functions
$$
  b_{i}: [0,T]\times \R^{d}\to \R^{d},\quad \sigma_{i}: [0,T]\times \R^{d}\to \R^{d\times d'} 
$$
are deterministic. We assume\footnote{These conditions could be relaxed considerably. The present form allows for a simple exposition avoiding issues of technical nature.
} throughout that (the components of) $b_{i}$ and $\sigma_{i}$ are in $C^{1,2}_{b}$, the 
set of all bounded continuous functions $g:[0,T]\times \R^{d}\to\R$ whose partial derivatives $\partial_{t}g$, $\partial_{x_{i}} g$, $\partial_{x_{i}x_{j}} g$ exist and are continuous and bounded on $[0,T)\times \R^{d}$. Moreover, we assume that\footnote{
Given a matrix $A$, we write $A^{2}$ for the product $AA^{\top}$ of $A$ with its transpose $A^{\top}$.
} $\sigma_{i}^{2}$ is uniformly parabolic; that is, its eigenvalues are uniformly bounded away from zero.
These conditions imply in particular that the SDE~\eqref{eq:SDEforX} has a unique (strong) solution.

Notice that we allow for the differences in beliefs to affect both drift and diffusion coefficients. As the volatility is more amenable to statistical estimation than the rate of return, the differences on drifts will typically be more significant. While much of the literature on disagreement in asset markets deals with constant volatility processes and thus naturally assumes perfect agreement on volatilities,
there is plenty of evidence that more complex processes involving stochastic and time-varying volatility are necessary to understand empirical features of asset prices. In this context, it is quite plausible, as argued by \cite{EJ03}, that agents may also differ in their forecasts\footnote{Note that while the past trajectory of $\sigma(t,X(t))$ can be inferred from the observation of $X(t)$, agents may very well differ in their forecasts. This is obvious if $\sigma$ depends on time $t$, but even if not, the past observation of $\sigma(t,X(t))$ will typically reveal little of the function $\sigma$ when $X$ is non-recurrent (e.g., of dimension larger than 2). We will see in Theorem~\ref{th:PDE} that the pricing of the security indeed depends on the future volatility over the entire time interval. This is quite natural as the same would be true in standard risk-neutral pricing when $f$ is a derivative on a stock, for instance.} for the volatility. In particular, it seems worthwhile to establish that our equilibrium is robust with respect to such differences. Nonetheless all our results have interest and all our examples are valid when agents disagree only about drifts.

Agents  trade the security  throughout the interval $[0,T]$, at a time $t$ price $P(t)$ to be determined in equilibrium.  The agents are subject to an instantaneous  cost-of-carry $c$ which is different for long and short positions,\footnote{The assumption that costs are proportional to the square of the position does not accommodate the fact that borrowers of stock may pay a fee quoted as an annualized percentage of the value of the loaned securities (the rebate rate). This assumption is made to simplify the exposition and allow us to concentrate on the effects of the size of a position on an agent's marginal valuation. In Appendix~\ref{se:linearCosts} we discuss how our results can be generalized if an additional linear term is added to the costs-of-carry. 
In addition, for tractability, we assume that costs are a function of the size rather than the value of the position. See Appendix~\ref{se:costsOnValue} for a discussion of costs as a function of position values.}
\begin{equation}\label{eq:costOfCarry}
  c(y) = 
  \begin{cases}
    \frac{1}{2\alpha_{+}} y^{2},  & y\geq0,\\
    \frac{1}{2\alpha_{-}} y^{2},  & y<0.
  \end{cases}
\end{equation}
Here the (inverse) cost coefficients $\alpha_{\pm}$ are given constants\footnote{We examine in Appendix~\ref{se:heterogeneousCosts} how the model is altered if costs of going long and/or short vary across agents.
} satisfying
$$
  0 < \alpha_{-}\leq \alpha_{+},
$$
meaning that the cost of shorting is higher than the cost of going long.  An \emph{admissible} portfolio for an agent is a bounded\footnote{
Boundedness could be replaced by suitable integrability conditions without altering our results.
} process $\Phi$, and we write $\cA$ for the collection of all these portfolios. The value $\Phi(t)$ indicates the number of units of the security held by the agent at time~$t$, and this number can be negative in the case of a short position. Given a (semimartingale) price process $P$, agent $i$ seeks to maximize the expected net payoff\footnote{
To ensure that the expectation is well-defined a priori, we set $E_{i}[Y]:=-\infty$ whenever $E_{i}[\min\{0,Y\}]=-\infty$, for any random variable $Y$. For the processes $P$ that occur in our results below, \eqref{eq:expectedNetPayoff} will be finite for any $\Phi\in\cA$.
}
\begin{equation}\label{eq:expectedNetPayoff}
  E_{i} \left[ \int_{0}^{T} \Phi(t)\,dP(t) - \int_{0}^{T} c(\Phi(t))\,dt\right];
\end{equation}
here the first integral represents the profit-and-loss from trading and the second integral is the cumulative cost-of-carry incurred. Criterion \eqref{eq:expectedNetPayoff} can be rationalized by assuming that agents have access to borrowing and lending at a zero interest rate and that the cost function $c$ is measured in the unit of account but it can also be taken as a primitive utility function. We take  interest rates as exogenous because most bubbles affect only  part of the capital markets and have  limited effect on rates.
The assumption that this exogenous rate equals zero is made to simplify the notation. We account for that in our discussions by referring for instance to the case where $c(y)=0$ for $y\ge 0$ as a constant (rather than zero) marginal cost of being long.
An admissible portfolio $\Phi_{i}$ will be called \emph{optimal} for agents of type $i$ if it maximizes~\eqref{eq:expectedNetPayoff} over all $\Phi\in\cA$. We will examine symmetric equilibria in which  agents of the same type choose the same portfolio. 

As the final input of our model, we introduce a nonnegative \emph{supply function} $s\in C^{1,2}_{b}$.\footnote{Our results could be extended to a discontinuous supply shock as in \cite{hongetal06} using backward induction.} The supply $S(t)=s(t,X(t))$ is owned by third parties that supply the asset inelastically.\footnote{Since the utility function in  \eqref{eq:expectedNetPayoff} is separable, the equilibrium price is invariant to endowments. Hence, we could have alternatively assumed an arbitrary ownership structure for the endowment across the types of investors---we opted for the simpler presentation.} Notice that this formalism allows for the payoff $f(X(T))$ to depend on $S(T).$
An \emph{equilibrium price} is a process\footnote{More precisely, $P$ is a continuous semimartingale, which ensures that the integrals of $\Phi_{i}$ are well-defined. This is automatically satisfied for the processes considered below.} $P$ satisfying $P(T)=f(X(T))$ a.s.\ under all~$Q_{i}$ for which there exist admissible portfolios $\Phi_{i}$, $i\in\{1,\dots,n\}$ such that $\Phi_{i}$ is optimal for agent~$i$ and the market clearing condition
$$
  \sum_{i=1}^{n} \Phi_{i}(t) = S(t)
$$
holds. We are interested in Markovian equilibria; that is, equilibrium prices of the form $P(t)=v(t,X(t))$ for a function $v$ which we refer to as an \emph{equilibrium price function}.

\subsection{Existence and PDE for the Equilibrium Price}

The following notation will be useful to state our first result. Given $v\in C^{1,2}_{b}$, we define the function $\cL^{i}v$ by
\begin{equation}\label{eq:generator}
  \cL^{i}v(t,x) = \partial_{t}v(t,x) + b_i \partial_{x}v(t,x) + \frac12 \tr\sigma^2_i \partial_{xx}v(t,x).
\end{equation}
Here $\partial_{x}v$ denotes the gradient vector, $\partial_{xx}v$ the Hessian matrix, and $\tr\sigma^2_i \partial_{xx}v$ is the trace of the matrix $\sigma^2_i \partial_{xx}v$; that is, the sum of the entries on the diagonal. One can interpret $\cL^{i}v(t,x)$ as the change in $v$ which agents of type $i$ expect over an infinitesimal time interval after~$t$.

Before stating the general characterization of equilibria in Theorem~\ref{th:PDE} below, we develop the heuristics  in two particular cases. We suppose that $X$ is one-dimensional and the coefficients $b_{i}$ and $\sigma_{i}$ are constant.

We first derive the first-order conditions for the portfolios. Suppose that we are in an equilibrium with price $P(t)=v(t,X(t)).$ 
It\^o's formula states that under $Q_{i}$,
\begin{align*}
  dP(t)
  &=\partial_{t}v(t,x)\,dt + b_i \partial_{x}v(t,x)\,dt + \tfrac12 \sigma_i^2\partial_{xx}v(t,x)\,dt+  \sigma_i\partial_{x}v(t,x)\,dW_{i}(t)\\
  &= \cL^{i}v(t,x)\,dt + \sigma_i\partial_{x}v(t,x)\,dW_{i}(t).
\end{align*}
Thus, the expected final payoff~\eqref{eq:expectedNetPayoff} for a portfolio $\Phi$ is
\begin{align*}
  E_{i} \left[ \int_{0}^{T} \!\!\Phi(t)\,dP(t) - \int_{0}^{T} \!\!c(\Phi(t))\,dt\right]
  &=E_{i}\left[\int_{0}^{T} \!\!\{\Phi(t) \cL^{i}v(t,X(t)) - c(\Phi(t))\}\,dt\right]
\end{align*} 
where we have used that the $dW$-integral has zero expectation. To optimize this quantity, we simply maximize the integrand with respect to $\Phi(t)$ at every~$t$; that is, we set the marginal expected gain $\cL^{i}v(t,X(t))-c'(\Phi(t))=0$. The latter formula shows that the equilibrium \emph{holding premium} for type $i,$ which equals the expected price change $\cL^{i}v$ since the interest rate is zero, is generated by the holding cost.
Using the quadratic form~\eqref{eq:costOfCarry} of $c$, we derive the optimal portfolio
$$
  \Phi_{i}(t)=\phi_{i}(t,X(t)),\quad\mbox{where}\quad \phi_{i}(t,x) = \alpha_{\sgn(\cL^{i}v(t,x))}\cL^{i}v(t,x).
$$
In particular, agents are myopic given the price function and its derivatives (whereas the price itself will incorporate agents' expectations about the future of the state process $X$). 

Next, we derive an equation for $v$ in two special cases. 
First, in the homogeneous case where all agents have the same views: $b_{i}=b$ and $\sigma_{i}=\sigma$. Thus, $\cL^{i}v(t,x)=\cL v(t,x)$ is also independent of $i$ and the optimal positions are identical across agents; in particular, there is no short-selling in equilibrium and the first-order condition becomes $\Phi_{i}(t) = \alpha_{+} \cL v(t,X(t))$. Market clearing requires that $\alpha_{+} \cL v(t,X(t))=S(t)/n$ or
$$
  \partial_{t}v(t,x) + b \partial_{x}v(t,x) + \frac12 \sigma^2 \partial_{xx}v(t,x) - \frac{s(t,x)}{n \alpha_+} = 0.
$$
This PDE is linear and supply enters as a running cost: the equilibrium price must compensate for the cost-of-carry.

Second,  consider  $n=2$ types of agents that disagree on the drift coefficient $\mu_{i}$ but agree on the volatility $\sigma:=\sigma_{1}=\sigma_{2}$. To further simplify the derivation, consider the case of zero net supply. Market clearing requires $\phi_{1}+\phi_{2}=0$, thus 
 one type must be long and the other must be short. Therefore, there are two possibilities at every $(t,x)$:
\begin{align*}
  &\cL^{1}v(t,x)\leq0 \mbox{ and } \cL^{2}v(t,x)\geq0,  \mbox{ thus } \alpha_{-}\cL^{1}v(t,x) + \alpha_{+}\cL^{2}v(t,x) =0;\mbox{ or}\\
  &\cL^{1}v(t,x)\geq0 \mbox{ and } \cL^{2}v(t,x)\leq0,  \mbox{ thus } \alpha_{+}\cL^{1}v(t,x) + \alpha_{-}\cL^{2}v(t,x) =0.
\end{align*} 
Recalling that $\alpha_{-}\leq \alpha_{+}$, it  follows that
\begin{align*}
  \mbox{if }&\cL^{1}v(t,x)\leq0 \mbox{ and } \cL^{2}v(t,x)\geq0,  \mbox{ then } \alpha_{+}\cL^{1}v(t,x) + \alpha_{-}\cL^{2}v(t,x) \leq0;\\
  \mbox{if }&\cL^{1}v(t,x)\geq0 \mbox{ and } \cL^{2}v(t,x)\leq0,  \mbox{ then } \alpha_{-}\cL^{1}v(t,x) + \alpha_{+}\cL^{2}v(t,x) \leq0.
\end{align*}
Hence, in all cases:
\begin{align*}
  \max\left\{\alpha_{-}\cL^{1}v(t,x) + \alpha_{+}\cL^{2}v(t,x) ;\,\alpha_{+}\cL^{1}v(t,x) + \alpha_{-}\cL^{2}v(t,x)\right\}=0.
\end{align*}
Next, divide the above equation by $(\alpha_{-}+\alpha_{+})$ and plug in the definitions of $\cL^{1}$ and $\cL^{2}$.
After rearranging  terms, one obtains
$$
  \partial_{t}v(t,x) + \!\!\max_{(i,j)=(1,2),(2,1)}\!\left\{\left(\tfrac{\alpha_{-}}{\alpha_{-}+\alpha_{+}}b_i +  \tfrac{\alpha_{+}}{\alpha_{-}+\alpha_{+}}b_j\right)\!\partial_{x}v(t,x)\right\} + \frac12 \sigma^2\partial_{xx}v(t,x)=0.
$$
Disagreement about drifts  caused a non-linearity in the first-order term. Similarly, disagreement about volatilities would have caused a non-linearity in the second-order term.
The next theorem  states that an analogous PDE uniquely characterizes the equilibrium price function in our model. In general, the above maximization over two possibilities is replaced by a maximization over ``groups'' $I\subseteq \{1,\dots,n\}$ of agents; we denote by $|I|$ the number of agents in $I$ and by $I^{c}=\{1,\dots,n\}\setminus I$ the complementary group.

\begin{theorem}\label{th:PDE}
  (i) There exists a unique equilibrium price function $v\in C^{1,2}_{b}$. The corresponding optimal portfolios are unique\footnote{Uniqueness is understood up to $(Q_{i}\times dt)$-nullsets.} and given by $\Phi_{i}(t)=\phi_{i}(t,X(t))$, where 
  \begin{equation}\label{eq:optPortfolioMain}
    \phi_{i}(t,x) = \alpha_{\sgn(\cL^{i}v(t,x))}\cL^{i}v(t,x).
  \end{equation}
  
  (ii) The function $v\in C^{1,2}_{b}$ can be characterized as the unique solution of the PDE
  \begin{equation}\label{eq:mainPDE}
  \partial_{t}v(t,x) + \sup_{I\subseteq \{1,\dots,n\}} \Big(\mu_{I}(t,x)\partial_{x}v(t,x) + \tfrac12\tr\Sigma_{I}^{2}(t,x)\partial_{xx}v(t,x) - \kappa_{I}(t,x)\Big) = 0
  \end{equation}
  on $[0,T)\times\R^{d}$ with terminal condition $v(T,x)=f(x)$, where the supremum is taken over all subsets $I\subseteq \{1,\dots,n\}$ and the coefficients are defined as
  \begin{equation}\label{eq:defMu}
      \mu_{I}(t,x) = \tfrac{\alpha_{-}}{|I|\alpha_{-} + |I^{c}|\alpha_{+}} \sum_{i\in I} b_{i}(t,x) + \tfrac{\alpha_{+}}{|I|\alpha_{-} + |I^{c}|\alpha_{+}}\sum_{i\in I^{c}} b_{i}(t,x),
  \end{equation}
    \begin{equation}\label{eq:defSigma}
      \Sigma_{I}^{2}(t,x) = \tfrac{\alpha_{-}}{|I|\alpha_{-} + |I^{c}|\alpha_{+}} \sum_{i\in I} \sigma_{i}^{2}(t,x) + \tfrac{\alpha_{+}}{|I|\alpha_{-} + |I^{c}|\alpha_{+}}\sum_{i\in I^{c}} \sigma_{i}^{2}(t,x),
  \end{equation}
  \begin{equation}\label{eq:defCost} 
  \kappa_{I}(t,x) = \frac{s(t,x)}{|I|\alpha_{-} + |I^{c}|\alpha_{+}}.
  \end{equation}  
  Moreover, a maximizer for the supremum in~\eqref{eq:mainPDE} is given by 
  \begin{equation}\label{eq:optControlThm}
    I_{*}(t,x) = \{i\in \{1,\dots,n\}:\, \cL^{i}v(t,x)<0\}.
  \end{equation}
\end{theorem}

In equilibrium, the group $I_{*}$ of~\eqref{eq:optControlThm} corresponds to the more pessimistic agents (holding shorts) whereas $I_{*}^{c}$ are the optimists (holding long positions).
The formulas~\eqref{eq:defMu} and~\eqref{eq:defSigma} for $\mu_{I}$ and $\Sigma_{I}$ can be seen as a weighted average of the drift and volatility coefficients of the agents. The weights entail that when shorting is  more expensive than being long (i.e., $\alpha_-$ is small relative to $\alpha_+$),  optimists have a larger impact on the equilibrium price. In Section~\ref{se:comparativeStatics} we show that when $\alpha_-\to 0$ or $\alpha_+\to \infty$, the more pessimistic views are not reflected in the equilibrium price at all. The running cost $\kappa_{I}$ of~\eqref{eq:defCost} depends linearly on the exogenous supply $s$ which is divided by a weighted sum of the cost coefficients, the weights  being the size of the set $I$ and its complement $I^{c}$, respectively. Since $\alpha_{-}\leq \alpha_{+}$, the cost increases with the number~$|I|$ of types in the group $I$. 

We will see in Section~\ref{se:planner} that the precise form of the PDE~\eqref{eq:mainPDE} with a supremum can be explained through the problem of a planner with limited instruments. To obtain an initial intuition, note that using~\eqref{eq:generator}, the left-hand side of the PDE can be read as the difference of two quantities. The first one is a weighted average over the instantaneous holding premia. The second quantity is related to the instantaneous marginal cost of carrying positions. The PDE equates this difference to zero when the weights correspond to the particular group $I=I_{*}$.

Mathematically, the PDE~\eqref{eq:mainPDE} is of Hamilton--Jacobi--Bellman type, which entails that~$v$ can be represented as the value function of a stochastic optimal control problem. This is useful for our derivation of  comparative statics and limiting results presented below but has no obvious economic interpretation. The control problem is detailed in Appendix~\ref{se:controlRepresentation}.

\begin{remark}\label{rk:scaling}
  The equilibrium price $v(t,x)$ is 0-homogeneous in $(\alpha_{-},\alpha_{+},s)$, indicating that supply and costs are closely linked in our model. That is, if these parameters are replaced by $(\lambda\alpha_{-},\lambda\alpha_{+},\lambda s)$ for some $\lambda>0$, the price does not change. This follows from Theorem~\ref{th:PDE}\,(ii) after observing that the coefficients $\mu_{I}$, $\Sigma_{I}$ and $\kappa_{I}$ are invariant under this substitution. 
In the special case $s=0$, the homogeneity entails that the price depends on $(\alpha_{-},\alpha_{+})$ only through the ratio $\alpha_{+}/\alpha_{-}$.
\end{remark}

\section{Comparative Statics and Limiting Cases}\label{se:comparativeStatics}

In the first part of this section we establish comparative statics with respect to the supply and cost parameters.  In the second part we analyze the limit $\alpha_{+}\to \infty$ when there is no cost-of-carry for long positions, as well as the limit $\alpha_{-}\to 0$ when short positions are ruled out.

\subsection{Comparative Statics}

We start with the dependence on the supply.

\begin{proposition}\label{pr:monotonicitySupply}
  The equilibrium price function $v$ is monotone decreasing with respect to the supply function $s$: prices decrease with an increase in supply.
\end{proposition}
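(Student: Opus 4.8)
The plan is to read off the monotonicity directly from the stochastic control representation in Proposition~\ref{pr:controlProblem}, which identifies the equilibrium price $v$ with the value function
$$
  V(t,x) = \sup_{\cI\in\Theta} E\left[f(X^{t,x}_{\cI}(T)) - \int_{0}^{T} \kappa_{\cI(r)}(r,X^{t,x}_{\cI}(r))\,dr\right].
$$
The key structural observation is that the supply function $s$ enters this problem \emph{only} through the running cost $\kappa_{I}$ of~\eqref{eq:defCost}, and not through the drift $\mu_{I}$ or the diffusion $\Sigma_{I}$ of~\eqref{eq:defMu}--\eqref{eq:defSigma}. Consequently, for each fixed control $\cI\in\Theta$, the law of the controlled state process $X^{t,x}_{\cI}$ is independent of $s$, so that changing $s$ affects the objective functional only through the running-cost integral.

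To make the comparison precise, I would fix two nonnegative supply functions with $s_{1}\le s_{2}$ pointwise and write $\kappa^{1}_{I}$, $\kappa^{2}_{I}$, $v_{1}$, $v_{2}$ for the corresponding running costs and equilibrium prices. Since the denominator $|I|\alpha_{-}+|I^{c}|\alpha_{+}$ is strictly positive for every $I$ (as $\alpha_{\pm}>0$), the cost $\kappa_{I}=s/(|I|\alpha_{-}+|I^{c}|\alpha_{+})$ is pointwise increasing in $s$, whence $\kappa^{1}_{I}\le\kappa^{2}_{I}$ for all $I$. Fixing any control $\cI$, the terminal payoff $f(X^{t,x}_{\cI}(T))$ is unchanged while the running-cost integral can only increase when passing from $s_{1}$ to $s_{2}$, so
$$
  E\left[f(X^{t,x}_{\cI}(T)) - \int_{0}^{T} \kappa^{2}_{\cI(r)}(r,X^{t,x}_{\cI}(r))\,dr\right] \le E\left[f(X^{t,x}_{\cI}(T)) - \int_{0}^{T} \kappa^{1}_{\cI(r)}(r,X^{t,x}_{\cI}(r))\,dr\right].
$$
Taking the supremum over $\cI\in\Theta$ on both sides preserves the inequality and yields $v_{2}(t,x)=V_{2}(t,x)\le V_{1}(t,x)=v_{1}(t,x)$ for all $(t,x)$, which is the asserted monotonicity.

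There is no substantial obstacle here: once the control representation is available, the argument reduces to monotonicity of the supremum, and the only facts to verify are the pointwise monotonicity of $\kappa_{I}$ in $s$ (immediate from positivity of the denominator) and the $s$-independence of the dynamics (immediate from~\eqref{eq:defMu}--\eqref{eq:defSigma}). The one conceptual point worth emphasizing is precisely this last feature, namely that supply appears purely as a running cost rather than in the law of motion; this is what allows the same path-by-path comparison to go through for every control simultaneously, so that the inequality survives the supremum. As an alternative I could instead invoke a comparison principle for the HJB equation~\eqref{eq:mainPDE}, using that increasing $s$ raises every $\kappa_{I}$ and hence lowers the Hamiltonian, but the control-representation route is cleaner and avoids re-proving a comparison result.
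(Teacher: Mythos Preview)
Your proof is correct and follows essentially the same approach as the paper: you use the control representation of Proposition~\ref{pr:controlProblem}, observe that $s$ enters only through the running cost $\kappa_{I}$ while the controlled dynamics are $s$-independent, and conclude monotonicity of the value function by comparing objectives control-by-control and then taking the supremum. The paper's proof is the same argument, stated more tersely.
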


Next, we turn to the cost parameters $\alpha_{-}$ and $\alpha_{+}$. The following shows that the equilibrium price is decreasing with respect to the cost-of-carry for long positions and increasing with respect to the cost for short positions.

\begin{proposition}\label{pr:monotonicityCost}
  The equilibrium price function $v$ is 
  \begin{enumerate}
  \item increasing with respect to $\alpha_{+}$,
  \item decreasing with respect to $\alpha_{-}$,
  \item increasing with respect to the quotient $\alpha_{+}/\alpha_{-}$ if $s\equiv0$.
  \end{enumerate}
\end{proposition}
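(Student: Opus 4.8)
The plan is to read off the monotonicity from the control/PDE representation of Theorem~\ref{th:PDE} and Proposition~\ref{pr:controlProblem}, after rewriting the Hamiltonian so that the dependence on $(\alpha_-,\alpha_+)$ is transparent. The first step is a change of bookkeeping: to a subset $I$ I associate the weight vector $a=(a_1,\dots,a_n)$ with $a_i=\alpha_-$ for $i\in I$ and $a_i=\alpha_+$ for $i\in I^{c}$, so that $\sum_i a_i=|I|\alpha_-+|I^{c}|\alpha_+$. Setting $r_i(t,x,p,M)=b_i(t,x)\cdot p+\tfrac12\tr\big(\sigma_i^{2}(t,x)M\big)$, formulas~\eqref{eq:defMu}--\eqref{eq:defCost} show that the bracket in~\eqref{eq:mainPDE} associated with $I$ is the linear-fractional expression
\[
  \mu_I\cdot p+\tfrac12\tr\big(\Sigma_I^{2}M\big)-\kappa_I=\frac{\sum_i a_i\,r_i(t,x,p,M)-s(t,x)}{\sum_i a_i}.
\]
Hence the Hamiltonian $H^{\alpha_-,\alpha_+}(t,x,p,M)=\sup_I(\cdots)$ is the maximum of a ratio of two affine functions of $a$ over the vertex set $\{\alpha_-,\alpha_+\}^{n}$ of the box $[\alpha_-,\alpha_+]^{n}$. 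The main obstacle is already visible here: $(\alpha_-,\alpha_+)$ enter not only the running cost $\kappa_I$ but also the drift $\mu_I$ and volatility $\Sigma_I$, so a given subset need not become more favorable when $\alpha_+$ is raised, and there is no termwise monotonicity to exploit.

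The key step that circumvents this is to relax the feasible weights to the full box without changing the value of the maximum. On the region $\sum_i a_i>0$ (which holds since $a_i\ge\alpha_->0$) a linear-fractional function has convex super- and sublevel sets, hence is quasiconvex, so its maximum over the polytope $[\alpha_-,\alpha_+]^{n}$ is attained at an extreme point; the extreme points are exactly $\{\alpha_-,\alpha_+\}^{n}$. Therefore
\[
  H^{\alpha_-,\alpha_+}(t,x,p,M)=\sup_{a\in[\alpha_-,\alpha_+]^{n}}\frac{\sum_i a_i\,r_i(t,x,p,M)-s(t,x)}{\sum_i a_i}
\]
for every $(t,x,p,M)$. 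In this form monotonicity of the Hamiltonian is immediate: if $\alpha_+'\ge\alpha_+$ then $[\alpha_-,\alpha_+]^{n}\subseteq[\alpha_-,\alpha_+']^{n}$, so the supremum can only grow, giving $H^{\alpha_-,\alpha_+'}\ge H^{\alpha_-,\alpha_+}$ pointwise; symmetrically, lowering $\alpha_-$ to $\alpha_-'\le\alpha_-$ enlarges the box and yields $H^{\alpha_-',\alpha_+}\ge H^{\alpha_-,\alpha_+}$.

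It remains to transfer monotonicity of $H$ to $v$. Both parameter choices give classical $C^{1,2}_{b}$ solutions of a uniformly parabolic HJB equation with the same terminal datum $f$ (each $\Sigma_I^{2}$ is a convex combination of the $\sigma_i^{2}$ and so inherits their uniform ellipticity), so the comparison principle applies. Concretely, writing $v_1,v_2$ for the solutions with the larger and smaller Hamiltonian, $v_1$ is a supersolution of the equation solved by $v_2$, and $w=v_2-v_1$ satisfies $\partial_t w+\cL w\ge0$ with $w(T,\cdot)=0$, where $\cL w=\mu_{I_*}\partial_x w+\tfrac12\tr(\Sigma_{I_*}^{2}\partial_{xx}w)$ is obtained by freezing, at each $(t,x)$, the subset $I_*$ maximizing $H$ for $v_2$. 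The parabolic maximum principle then forces $w\le0$, i.e.\ $v_2\le v_1$. Applying this with $(\alpha_-,\alpha_+')$ against $(\alpha_-,\alpha_+)$ proves (i), and with $(\alpha_-',\alpha_+)$ against $(\alpha_-,\alpha_+)$ proves (ii).

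Finally, (iii) follows by combining (i) with the $0$-homogeneity of Remark~\ref{rk:scaling}. When $s\equiv0$, scaling $(\alpha_-,\alpha_+)$ by $\lambda=1/\alpha_-$ leaves $v$ unchanged and shows that $v$ depends on $(\alpha_-,\alpha_+)$ only through $\rho=\alpha_+/\alpha_-$. Thus, for $1\le\rho_1<\rho_2$, I may normalize $\alpha_-=1$ and $\alpha_+=\rho_j$, and since $\rho_2>\rho_1$ with a common $\alpha_-$, part (i) gives $v(\rho_2)\ge v(\rho_1)$; hence $v$ is increasing in the quotient. The conceptual heart of the argument is the linear-fractional relaxation of the second paragraph: it expresses the same Hamiltonian as a supremum over a parameter set that genuinely nests as the cost interval widens, which is what makes monotonicity available despite the cost coefficients acting on the dynamics.
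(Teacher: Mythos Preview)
Your proof is correct, but the route differs from the paper's. The paper works with the \emph{unnormalized} form~\eqref{eq:equivPDE} of the PDE, where the left-hand side is simply $\alpha_{-}\sum_{i\in I}\cL^{i}v+\alpha_{+}\sum_{i\in I^{c}}\cL^{i}v-s$; at the maximizing set $I_{*}$ one knows the signs $\sum_{i\in I_{*}}\cL^{i}v\le0$ and $\sum_{i\in I_{*}^{c}}\cL^{i}v\ge0$, so replacing $(\alpha_{-},\alpha_{+})$ by $(\alpha_{-}',\alpha_{+}')$ with $\alpha_{-}'\le\alpha_{-}$, $\alpha_{+}'\ge\alpha_{+}$ immediately gives $\ge0$, making $v$ a subsolution for the primed equation and yielding $v\le v'$ by comparison. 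This avoids any relaxation: the monotonicity is read off from signs, and the same computation (divide through by $\alpha_{-}$) handles~(iii) and even the sharper Remark~\ref{rk:preciseComparison} in one stroke. Your approach instead stays with the normalized Hamiltonian and exploits that a linear-fractional function is quasiconvex, so the supremum over the vertex set $\{\alpha_{-},\alpha_{+}\}^{n}$ equals the supremum over the box $[\alpha_{-},\alpha_{+}]^{n}$; monotonicity of the Hamiltonian then follows from nesting of boxes. This is a genuinely different (and elegant) observation that makes the parameter dependence structural, at the cost of an extra step; your reduction of~(iii) to~(i) via the homogeneity of Remark~\ref{rk:scaling} is also a clean alternative to the paper's direct argument.
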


The proof uses our PDE characterization of the price and a comparison theorem from the theory of parabolic partial differential equations.\footnote{Comparison theorems are useful to show that two functions satisfy an inequality on their domain if they are known to satisfy an (in)equality on the boundary. In our context, we may think of the equilibrium price function as satisfying a PDE $F(v,\beta)=0$ where $\beta$ is a parameter. If $v_{1}$ and $v_{2}$ are price functions corresponding to different parameters $\beta_{1}$ and $\beta_{2}$, we know that they are equal at the boundary $t=T$ since they satisfy the same terminal condition $f$. If $v_{2}$ is a subsolution of the PDE for $v_{1}$, that is $F(v_{2},\beta_{1})\geq0$, the comparison theorem implies that $v_{1}\geq v_{2}$. See e.g.\ \cite{FlemingSoner.06}.}



The following is a partial extension of~(iii) to the case of non-zero supply which is useful if $\alpha_{-}$ and $\alpha_{+}$ are varied simultaneously.

\begin{remark}\label{rk:preciseComparison}
   Let $\alpha_{-}\leq\alpha_{+}$ and $\alpha'_{-}\leq\alpha'_{+}$ be two pairs of cost coefficients and let $v$ and $v'$ be the corresponding equilibrium price functions. 
  If the coefficients satisfy $\alpha_{+}/\alpha_{-}\leq \alpha'_{+}/\alpha'_{-}$ and $\alpha_{-}\leq\alpha'_{-}$, then $v\leq v'$.
  
  For instance, it follows that if the costs-of-carry for long and short positions are increased by a common factor, then the price decreases.
\end{remark}

\subsection{Limiting Models}\label{se:limitingModels}

We discuss two limits for the cost coefficients which help to understand the relationship between our model and the earlier ones  discussed in the Introduction.
To make the dependence on the parameters explicit, we denote by $v^{\alpha_{-},\alpha_{+}}(t,x)$ the equilibrium price function $v(t,x)$ for  $\alpha_{-},\alpha_{+}$.

\subsubsection{Zero Cost for Long Positions}

We first consider the limit $\alpha_{+}\to \infty$ when the cost-of-carry for long positions tends to zero.

\begin{proposition}\label{pr:limitLong}
  As $\alpha_{+}\to \infty$, the function $v^{\alpha_{-},\alpha_{+}}$ converges to the unique solution $v^{\infty}\in C^{1,2}_{b}$ of the PDE
  \begin{equation}\label{eq:PDEalphaPlusInfty}
  \partial_{t}v + \sup_{i\in \{1,\dots,n\}} \left( b_{i}\partial_{x}v + \tfrac12\tr \sigma_{i}^{2}\partial_{xx}v \right) = 0
  \end{equation}
  with terminal condition $v(T,x)=f(x)$; in particular,  $v^{\infty}$ is independent of~$\alpha_{-}$ and $s$. The convergence is locally uniform in $(t,x)$, and monotone increasing  as $\alpha_{+}\uparrow \infty$.
\end{proposition}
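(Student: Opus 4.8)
The plan is to combine the monotonicity already recorded in Proposition~\ref{pr:monotonicityCost}(i) with a comparison argument at the level of the Hamiltonians and a direct lower bound from the control representation~\eqref{eq:controlProblem}. Write $g_i(t,x,p,M) = b_i(t,x)\,p + \tfrac12\tr(\sigma_i^2(t,x)M)$ and set
$$
H^{\alpha_+}(t,x,p,M) = \sup_{I}\big(\mu_I p + \tfrac12\tr(\Sigma_I^2 M) - \kappa_I\big), \qquad H^\infty(t,x,p,M) = \max_{1\le i\le n} g_i(t,x,p,M),
$$
so that~\eqref{eq:mainPDE} reads $\partial_t v + H^{\alpha_+} = 0$ and~\eqref{eq:PDEalphaPlusInfty} reads $\partial_t v + H^\infty = 0$. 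The key elementary observation is that, by~\eqref{eq:defMu}--\eqref{eq:defSigma}, the weights $\alpha_-/(|I|\alpha_-+|I^c|\alpha_+)$ on $i\in I$ and $\alpha_+/(|I|\alpha_-+|I^c|\alpha_+)$ on $i\in I^c$ are nonnegative and sum to one; hence $\mu_I p+\tfrac12\tr(\Sigma_I^2 M)$ is a convex combination of the $g_i$, so $\mu_I p+\tfrac12\tr(\Sigma_I^2 M)-\kappa_I\le\max_i g_i$ since $\kappa_I\ge0$. Taking the supremum over $I$ gives $H^{\alpha_+}\le H^\infty$ for every $\alpha_+$. Conversely, choosing $I=\{j^*\}^c$ with $j^*\in\argmax_i g_i$ and letting $\alpha_+\to\infty$ sends the weight on $j^*$ to one, the others to zero, and $\kappa_{I}\to0$, so $H^{\alpha_+}\to H^\infty$; as only finitely many subsets occur and all coefficients lie in $C^{1,2}_{b}$, this convergence is locally uniform in $(t,x,p,M)$.

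First I would record that~\eqref{eq:PDEalphaPlusInfty} is a uniformly parabolic Hamilton--Jacobi--Bellman equation with convex Hamiltonian $H^\infty$ and smooth bounded coefficients, so the same existence/uniqueness and comparison theory used for Theorem~\ref{th:PDE} yields a unique solution $v^\infty\in C^{1,2}_{b}$; since $H^\infty$ does not involve $\alpha_-$ or $s$, this gives the claimed independence. By Proposition~\ref{pr:monotonicityCost}(i) the family $v^{\alpha_-,\alpha_+}$ is increasing in $\alpha_+$, and~\eqref{eq:controlProblem} with $\kappa_I\ge0$ and $f$ bounded gives $v^{\alpha_-,\alpha_+}\le\|f\|_\infty$, so the pointwise monotone limit $\bar v:=\lim_{\alpha_+\to\infty}v^{\alpha_-,\alpha_+}$ exists. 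Because $H^{\alpha_+}\le H^\infty$ and the terminal data agree, the comparison principle for these equations gives $v^{\alpha_-,\alpha_+}\le v^\infty$ for every $\alpha_+$, whence $\bar v\le v^\infty$.

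For the matching lower bound I would use the control representation directly. Fix $(t,x)$; since $v^\infty\in C^{1,2}_{b}$, the verification argument behind Proposition~\ref{pr:controlProblem} (apply It\^o to $v^\infty(r,X(r))$) identifies $v^\infty$ with the value of the problem that chooses $i(r)\in\{1,\dots,n\}$ at each instant. Pick a near-optimal control $i(\cdot)$ there and feed the assignment $\cI(r)=\{i(r)\}^c$ into the $\alpha_+$-problem~\eqref{eq:controlledSDE}. As $\alpha_+\to\infty$ the coefficients $\mu_{\{j\}^c}\to b_j$ and $\Sigma_{\{j\}^c}^2\to\sigma_j^2$ uniformly over the finite index set, while $\kappa_{\{j\}^c}=s/((n-1)\alpha_-+\alpha_+)\to0$ uniformly; by standard stability of SDEs under uniform coefficient perturbation together with boundedness and continuity of $f$, the value of this control in the $\alpha_+$-problem converges to the limiting value of $i(\cdot)$. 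As $v^{\alpha_-,\alpha_+}$ dominates the value of any single control, $\liminf_{\alpha_+\to\infty}v^{\alpha_-,\alpha_+}(t,x)\ge v^\infty(t,x)$, so $\bar v\ge v^\infty$ and hence $\bar v=v^\infty$. Finally, since $v^{\alpha_-,\alpha_+}\uparrow v^\infty$ with all functions continuous, Dini's theorem upgrades the pointwise monotone convergence to locally uniform convergence on compacts.

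The steps needing the most care are the two invocations of the general theory for~\eqref{eq:PDEalphaPlusInfty}: producing the classical solution $v^\infty\in C^{1,2}_{b}$ and the comparison principle giving $v^{\alpha_-,\alpha_+}\le v^\infty$. These are not new difficulties—\eqref{eq:PDEalphaPlusInfty} is of exactly the same uniformly parabolic, convex HJB type as~\eqref{eq:mainPDE}, so the machinery behind Theorem~\ref{th:PDE} applies—but one must verify that $H^\infty$ inherits uniform ellipticity and the required regularity, which it does as a maximum of finitely many operators each uniformly parabolic with $C^{1,2}_{b}$ coefficients. The remaining mild point is the SDE stability used in the lower bound, which is routine given the uniform convergence of the bounded, Lipschitz coefficients.
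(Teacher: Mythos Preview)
Your argument is correct and complete. The route differs somewhat from the paper's. The paper first observes that the supremum in~\eqref{eq:mainPDE} may be restricted to sets with $|I^{c}|>0$ (since market clearing forces at least one nonnegative position), identifies the formal limiting PDE, and then notes that a supremum of averages over nonempty subsets equals the maximum over singletons, arriving at~\eqref{eq:PDEalphaPlusInfty}. For the convergence itself, the paper simply invokes a black-box stability result for value functions (Krylov) or the Barles--Perthame half-relaxed-limits procedure, obtaining locally uniform convergence in one stroke; monotonicity is read off Proposition~\ref{pr:monotonicityCost}. Your approach instead builds a sandwich: the Hamiltonian inequality $H^{\alpha_{+}}\le H^{\infty}$ plus the comparison principle yields the upper bound $v^{\alpha_{-},\alpha_{+}}\le v^{\infty}$, while the lower bound comes from feeding a near-optimal singleton control for the limit problem into the $\alpha_{+}$-problem via $\cI=\{i\}^{c}$ and invoking SDE stability as $\alpha_{+}\to\infty$; Dini then upgrades the monotone pointwise limit to local uniformity. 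Your method is more self-contained and makes explicit why the supply and $\alpha_{-}$ disappear (the convex-combination observation and $\kappa_{I}\ge0$), at the cost of a slightly longer argument; the paper's is shorter but leans on heavier external results. Both rely on the same existence and comparison theory for~\eqref{eq:PDEalphaPlusInfty}, so the ``most care'' points you flag are genuinely shared.
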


We now discuss the limiting model that arises in Proposition~\ref{pr:limitLong}; that is, with no cost-of-carry for long positions. We state these results without proofs since these would be very similar to the proof of Theorem~\ref{th:PDE}.

The limiting model has an equilibrium price function $v:=v^{\infty}$ that is unique and independent of the supply $s$ and the cost coefficient $\alpha_{-}$ for short positions. Thus, we retrieve the results of previous models with risk-neutral agents in this limiting regime. The intuition for Equation~\eqref{eq:PDEalphaPlusInfty} is straightforward. In any equilibrium, if $j$ is one of the most optimistic types, we  must have $\cL^{j}v(t,x)\ge 0.$ However if the marginal cost of going long is zero, $\cL^{j}v(t,x)=0$ necessarily holds. In particular, $j$ is indifferent with respect to nonnegative positions and  equilibrium prices are independent of the supply for the asset or the demand for shorting.  
However, the optimal portfolios $\Phi_{i}(t)=\phi_{i}(t,X(t))$ in equilibrium do depend on $s$ and $\alpha_{-}$. Given $(t,x)$, if $i$ is not a maximizer, $\cL^{i}v(t,x)<0$ and 
$$
\phi_{i}(t,x)=\alpha_{-} \cL^{i}v(t,x)
$$
as in~\eqref{eq:optPortfolioMain}; in particular, agent $i$ holds a short position. In equilibrium, the aggregate amount held by the most optimistic types is set by the market clearing condition---they must hold  the sum of the exogenous supply and all amounts shorted.
If there is more than one maximizer $i$, then any distribution of the available amount (supply plus short positions) over these maximizers gives an optimal allocation.\footnote{See \cite{MuhleKarbeNutz.16} for an analysis of this case when shorting is constrained.}

The properties described above for $\alpha_+=\infty$ continue to hold if in addition  $\alpha_{-}=0$; i.e., when there is no cost for long positions and short positions are prohibited. In particular, all but the most optimistic agents hold a flat position, and only the most optimistic characteristics play a role in determining the price. Thus, we retrieve the results of previous models with risk-neutral agents in this limiting regime.

\begin{remark}\label{rk:symmetricCost}
 The results for $\alpha_{+}=\infty$ may be contrasted with the opposite extreme case where the cost coefficients $\alpha_{+}$ and $\alpha_{-}$ are equal. Then, the drift and volatility coefficients
  $$
    \mu:=\mu_{I} = \frac1n \sum_{i=1}^{n} b_{i},\qquad \Sigma^{2}:=\Sigma_{I}^{2} = \frac1n \sum_{i=1}^{n} \sigma_{i}^{2}
  $$
  are independent of $I$ and equal to the arithmetic average of the coefficients in the agents' models, meaning that all agents contribute equally to the price. The running cost is $\kappa:=\kappa_{I}=s/(n\alpha_{+})$. Thus, \eqref{eq:mainPDE} becomes the linear PDE
  $$
  \partial_{t}v + \frac1n \sum_{i=1}^{n} b_{i}\partial_{x}v(t,x) + \frac{1}{2n} \sum_{i=1}^{n} \tr\sigma_{i}^{2}\partial_{xx}v - \frac{s}{n\alpha_{+}} = 0
  $$
   and then by the Feynman--Kac formula, the equilibrium price is
  $$
    v(t,x) = E\left[f(X^{t,x}(T)) - \int_{t}^{T} s(r,X^{t,x}(r))/(n\alpha_{+})  \,dr\right]
  $$
  where $X^{t,x}$ is a diffusion with drift $\mu$, volatility $\Sigma$ and initial condition $X^{t,x}(t)=x$ (cf.\ Appendix~\ref{se:controlRepresentation}). That is, the equilibrium price is simply the expected value of the security under the averaged coefficients of the agents, minus a  cost term related to the supply.
\end{remark}

\subsubsection{Infinite Cost for Short Positions}\label{se:infiniteCostShort}

We now discuss the limit $\alpha_{-}\to 0$; that is, the cost-of-carry for short positions tends to infinity.

\begin{proposition}\label{pr:limitShort}
  As $\alpha_{-}\to 0$, the function $v^{\alpha_{-},\alpha_{+}}$ converges to the unique solution $v^{0,\alpha_{+}}\in C^{1,2}_{b}$ of the PDE
  \begin{equation}\label{eq:PDEalphaMinusZero}
  \partial_{t}v + \sup_{\emptyset\neq J\subseteq \{1,\dots,n\}} \left(\tfrac{1}{|J|}\sum_{i\in J} b_{i}\partial_{x}v + \tfrac12\tr \tfrac{1}{|J|}\sum_{i\in J} \sigma_{i}^{2}\partial_{xx}v - \frac{s}{|J|\alpha_{+}}\right) = 0
  \end{equation}
  with terminal condition $v(T,x)=f(x)$. In the special case where $s=0$, this PDE coincides with~\eqref{eq:PDEalphaPlusInfty} and in particular the solution $v^{0,\alpha_{+}}=v^{\infty}$ is independent of $\alpha_{+}$. The convergence is locally uniform in $(t,x)$, and monotone increasing if $\alpha_{-}\downarrow 0$.
\end{proposition}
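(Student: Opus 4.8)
The plan is to combine the monotonicity already established in Proposition~\ref{pr:monotonicityCost} with a stability argument for the Hamilton--Jacobi--Bellman equation~\eqref{eq:mainPDE}, in the spirit of the half-relaxed limit (Barles--Perthame) method. Fix $\alpha_+$ and consider the family $(v^{\alpha_-,\alpha_+})_{\alpha_->0}$. First I would record two a priori facts. By the control representation of Proposition~\ref{pr:controlProblem} together with $\kappa_I\geq0$ one has $v^{\alpha_-,\alpha_+}\leq\|f\|_\infty$, while testing the single control $\cI\equiv\emptyset$ (for which $\mu_\emptyset$, $\Sigma_\emptyset$ and $\kappa_\emptyset=s/(n\alpha_+)$ do not depend on $\alpha_-$) yields a lower bound uniform in $\alpha_-$; hence the family is uniformly bounded. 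By Proposition~\ref{pr:monotonicityCost}(ii) it is monotone, increasing as $\alpha_-\downarrow0$, so the pointwise limit $\bar v:=\lim_{\alpha_-\downarrow0}v^{\alpha_-,\alpha_+}$ exists; the task is to identify it with $v^{0,\alpha_+}$.

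The core step is to analyze the Hamiltonian
$$
  F^{\alpha_-}(t,x,p,A)=\sup_{I\subseteq\{1,\dots,n\}}\Big(\mu_I(t,x)\,p+\tfrac12\tr\big(\Sigma_I^2(t,x)A\big)-\kappa_I(t,x)\Big)
$$
as $\alpha_-\downarrow0$. Writing the supremum as the maximum over the proper subsets $I\subsetneq\{1,\dots,n\}$ and the single full set, I would check that for each proper $I$ the coefficients in~\eqref{eq:defMu}--\eqref{eq:defCost} converge, $\mu_I\to\frac1{|I^c|}\sum_{i\in I^c}b_i$, $\Sigma_I^2\to\frac1{|I^c|}\sum_{i\in I^c}\sigma_i^2$ and $\kappa_I\to s/(|I^c|\alpha_+)$, locally uniformly in $(t,x)$ since $b_i,\sigma_i,s$ are bounded continuous and the weights converge. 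Relabelling $J=I^c$, these are precisely the coefficients in~\eqref{eq:PDEalphaMinusZero}, and as $I$ runs through the proper subsets, $J$ runs through all nonempty subsets. The full set $I=\{1,\dots,n\}$ is the only delicate term: there $\kappa_{\{1,\dots,n\}}=s/(n\alpha_-)$, so where $s>0$ this term tends to $-\infty$ and drops out of the supremum, while where $s=0$ it converges to the $J=\{1,\dots,n\}$ term already present in the limit. In every case one gets $\liminf_{\alpha_-\downarrow0}F^{\alpha_-}\geq F^0$ (from the proper-subset terms) and $\limsup_{\alpha_-\downarrow0}F^{\alpha_-}\leq F^0$ (the full-set term being $\leq F^0$ in the limit), where $F^0$ is the Hamiltonian of~\eqref{eq:PDEalphaMinusZero}. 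This is the one-sided convergence required by the stability theorem.

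With these ingredients, the upper and lower half-relaxed limits $\bar v^{\,*}$ and $\bar v_*$ of the uniformly bounded family are, respectively, a viscosity subsolution and supersolution of $\partial_tv+F^0(t,x,\partial_xv,\partial_{xx}v)=0$ with terminal data $f$. Since~\eqref{eq:PDEalphaMinusZero} has the same uniformly parabolic HJB structure as~\eqref{eq:mainPDE} with a relabelled finite control set, the comparison principle underlying the uniqueness in Theorem~\ref{th:PDE} applies and gives $\bar v^{\,*}\leq\bar v_*$; as the reverse inequality is automatic, $\bar v^{\,*}=\bar v_*=v^{0,\alpha_+}$, the unique $C^{1,2}_b$ solution. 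This identifies $\bar v=v^{0,\alpha_+}$, and together with the monotonicity and the continuity of the limit it upgrades the convergence to locally uniform (e.g.\ by Dini's theorem). The claim that $v^{0,\alpha_+}=v^\infty$ when $s=0$ then follows by observing that with $\kappa\equiv0$ the map $J\mapsto\frac1{|J|}\sum_{i\in J}\big(b_ip+\frac12\tr\sigma_i^2A\big)$ is an average of the singleton values, so its maximum over nonempty $J$ is attained at a singleton, reducing~\eqref{eq:PDEalphaMinusZero} to~\eqref{eq:PDEalphaPlusInfty}.

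I expect the main obstacle to be the degeneration of the control $I=\{1,\dots,n\}$ (all agents short): its running cost blows up like $1/\alpha_-$ wherever $s>0$, so the effective control set shrinks in the limit and $F^{\alpha_-}\to F^0$ fails to be plainly uniform near that term. The virtue of the half-relaxed limit method is precisely that it needs only the one-sided (upper) control of $F^{\alpha_-}$ established above, so the blow-up is harmless; a direct probabilistic argument would instead require a quantitative bound on the time a near-optimal control spends on the full set where $s>0$, which is more delicate. As a purely probabilistic alternative for the easy inequality, one can mimic in the prelimit problem the optimal feedback control $J$ of~\eqref{eq:PDEalphaMinusZero} by the prelimit control $I=J^c$, whose coefficients converge to the limiting ones; SDE stability then gives $\bar v\geq v^{0,\alpha_+}$ and could replace the supersolution half of the viscosity argument.
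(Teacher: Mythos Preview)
Your proof is correct and follows one of the two routes the paper itself mentions for the analogous Proposition~\ref{pr:limitLong} (the Barles--Perthame procedure), so in that sense it is aligned with the paper. There is, however, one structural simplification in the paper that you miss and that removes precisely the obstacle you identify as ``the main obstacle.''

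The paper does not handle the degenerate control $I=\{1,\dots,n\}$ at all. Instead, in the proof of Proposition~\ref{pr:limitLong} (to which the proof of Proposition~\ref{pr:limitShort} simply refers), it observes that by market clearing the optimal set $I_*$ always satisfies $|I_*|<n$: if every agent were short, the aggregate position would be negative and could not equal $s\geq0$. Consequently, the PDE~\eqref{eq:mainPDE} is unchanged if the supremum is restricted a priori to subsets $I$ with $I^c\neq\emptyset$. Over this restricted control set the coefficients $(\mu_I,\Sigma_I^2,\kappa_I)$ converge \emph{uniformly} as $\alpha_-\to0$, and the paper then invokes a standard stability result for value functions (Krylov's Corollary~3.1.13) to obtain locally uniform convergence directly; Barles--Perthame is mentioned only as an alternative. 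Your treatment of the full set---splitting into $s(t,x)>0$ where the term tends to $-\infty$ and $s(t,x)=0$ where it matches the $J=\{1,\dots,n\}$ term, and using $s\geq0$ to control the $\limsup$ along sequences---is correct, but it is exactly the complication that the market-clearing observation eliminates.

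In short: both arguments work; the paper's is shorter because it removes the degenerate control before taking the limit rather than controlling it through the one-sided estimates of the half-relaxed limit method.
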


The limiting model that arises in Proposition~\ref{pr:limitShort} corresponds to a prohibition of shorting. This model has a unique equilibrium price function $v:=v^{0,\alpha_{+}}$ which depends on the supply $s$ and the cost coefficient $\alpha_{+}$ for long positions.
At every state $(t,x)$, we can think of the types as being divided into a group 
$J=\{i\in\{1,\dots,n\}:\, \cL^{i}v(t,x)\geq0\}$ of relatively optimistic agents and the complement $J^{c}$ of 
pessimists. We have $J\neq\emptyset$ by market clearing. While the agents in~$J$ hold positions $\alpha_{+} \cL^{i}v(t,x)$ of different magnitude depending on how optimistic they are, the entire group $J$ determines the price. Agents in $J^{c}$, however, hold zero units and their precise characteristics do not enter the formation of the price. For instance, if we replace a pessimistic type $i\in J^{c}$ by an even more pessimistic type, the equilibrium price will not change.



\section{Speculation}\label{se:speculation}

In this section we highlight the impact of non-linear costs-of-carry and short-selling on the pricing mechanism by comparing the above ``dynamic'' equilibrium price at time $t=0$ with a ``static'' equilibrium price; that is, an equilibrium without speculation. We shall see that, as in previous models, the dynamic price dominates the static price when cost-of-carry and short-selling are removed from our model. This can be attributed to the resale option. However, we  show that the cost-of-carry (i.e., risk aversion) gives rise to a  delay option that may act in opposition to the resale option and reverse the order of the prices in extreme cases---even if short-selling is prohibited. Moreover, we 
 illustrate that the possibility of short-selling tends to depress the dynamic price as it gives rise to a repurchase option for pessimists.

\subsection{Equilibrium without Speculation}\label{se:SE}

Consider a situation where trading occurs only at the initial time $t=0$; that is,  agents are forced to use buy-and-hold strategies and speculation is ruled out.
The agents use the same models $Q_{i}$ for the dynamics~\eqref{eq:SDEforX} of the state process~$X$ and maximize the same expected net payoff~\eqref{eq:expectedNetPayoff}. However, the admissible portfolios $\Phi$ are restricted to be constant, and we will use the letter $q$ to denote a generic portfolio. This market can only clear if the exogenous supply $S\equiv s$ is constant, so we restrict our attention to that case. A \emph{static equilibrium price} is defined like the dynamic equilibrium price above, except that we only look for a constant $p_{\sta}\in\R$ at time $t=0$ at which the trading happens.

\begin{proposition}\label{pr:staticPrice}
  (i) There exists a unique static equilibrium price and it is given by
  \begin{equation}\label{eq:staticPrice}
    p_{\sta}= \max_{I\subseteq\{1,\dots,n\}}  \left(
  \tfrac{\alpha_{-}}{|I|\alpha_{-} + |I^{c}|\alpha_{+}} \sum_{i\in I} e_{i} 
  + \tfrac{\alpha_{+}}{|I|\alpha_{-} + |I^{c}|\alpha_{+}} \sum_{i\in I^{c}} e_{i}
  - \tfrac{sT}{|I|\alpha_{-} + |I^{c}|\alpha_{+}}\right),
  \end{equation}
  where $e_{i}=E_{i}[f(X(T))]$. The corresponding optimal static portfolios are unique and given by 
  \begin{equation}\label{eq:staticPortfolio}
    q_{i} = \alpha_{\sgn(e_{i}-p_{\sta})} T^{-1} (e_{i}-p_{\sta}).
  \end{equation}
\end{proposition}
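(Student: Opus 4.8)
The plan is to reduce the static problem to a one-dimensional market-clearing condition and then identify the clearing price with the maximum over partitions in \eqref{eq:staticPrice}. First I would pin down each agent's optimal buy-and-hold position given a candidate price $p$. Since the portfolio is a constant $q$ and $P(0)=p$, $P(T)=f(X(T))$, the criterion \eqref{eq:expectedNetPayoff} collapses to $q(e_{i}-p)-Tc(q)$ with $e_{i}=E_{i}[f(X(T))]$. The cost $c$ is strictly convex on each half-line and $C^{1}$ at $0$, so this objective is strictly concave and has a unique maximizer; solving the first-order condition separately on $\{q\geq0\}$ and $\{q<0\}$ gives $q_{i}(p)=\alpha_{\sgn(e_{i}-p)}T^{-1}(e_{i}-p)$, which is exactly \eqref{eq:staticPortfolio} (the value at the kink $e_{i}=p$ is $0$ regardless of the sign convention). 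The key observation I would record here is the identity $q_{i}(p)=T^{-1}\max_{\alpha\in\{\alpha_{-},\alpha_{+}\}}\alpha(e_{i}-p)$, which encodes the agent's optimal self-selection of cost regime: a long agent ($e_{i}>p$) effectively uses $\alpha_{+}$, a short agent ($e_{i}<p$) uses $\alpha_{-}$.

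Next I would establish existence and uniqueness of the equilibrium price via the aggregate excess demand $F(p)=\sum_{i}q_{i}(p)-s$. Each $q_{i}(\cdot)$ is continuous and strictly decreasing with range $\R$, so $F$ is continuous and strictly decreasing with $F(p)\to+\infty$ as $p\to-\infty$ and $F(p)\to-\infty$ as $p\to+\infty$. Hence there is a unique $p^{*}$ with $F(p^{*})=0$; together with the uniqueness of the optimal positions $q_{i}(p^{*})$ this yields existence and uniqueness of the static equilibrium, and I set $p_{\sta}:=p^{*}$.

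It then remains to show $p_{\sta}=\max_{I}p_{I}$, where for each $I\subseteq\{1,\dots,n\}$ I define the affine, strictly decreasing function $F_{I}(p)=T^{-1}\bigl[\sum_{i\in I}\alpha_{-}(e_{i}-p)+\sum_{i\in I^{c}}\alpha_{+}(e_{i}-p)\bigr]-s$, whose unique zero $p_{I}$ is precisely the bracketed expression in \eqref{eq:staticPrice}. The identity from the first step gives, summing independently over the per-agent regime choices, $F(p)=\max_{I}F_{I}(p)$ for every $p$, since choosing $\alpha_{i}\in\{\alpha_{-},\alpha_{+}\}$ for each $i$ is the same as choosing the partition $I=\{i:\alpha_{i}=\alpha_{-}\}$. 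Finally I would invoke the elementary fact that the zero of a pointwise maximum of strictly decreasing continuous functions is the largest of the individual zeros: with $\bar p=\max_{I}p_{I}$ attained at some $I_{0}$, for $p>\bar p$ every $F_{I}(p)<0$ so $F(p)<0$, while for $p<\bar p$ we have $F(p)\geq F_{I_{0}}(p)>0$; continuity then forces $F(\bar p)=0$, whence $p_{\sta}=p^{*}=\bar p=\max_{I}p_{I}$, which is \eqref{eq:staticPrice}.

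The only genuinely non-routine step is the third one—recognizing aggregate demand as the pointwise maximum $\max_{I}F_{I}$ and deducing that agents' self-selection of cost regime corresponds to \emph{maximizing} the clearing price over partitions. The rest is standard strictly-concave optimization and monotone excess-demand analysis. I would also note in passing that the max-formula is insensitive to how indifferent agents ($e_{i}=p_{\sta}$) are assigned between $I$ and $I^{c}$, since such terms contribute zero to both numerator and the defining equation for $p_{I}$, which keeps the two representations of $p_{\sta}$ consistent.
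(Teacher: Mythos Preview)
Your proof is correct and follows essentially the same approach as the paper: both derive the optimal portfolio $q_{i}(p)=\alpha_{\sgn(e_{i}-p)}T^{-1}(e_{i}-p)$, recognize that aggregate demand can be written as a maximum over partitions~$I$ (equivalently, that $I_{*}=\{i:e_{i}<p\}$ maximizes the weighted sum), and deduce \eqref{eq:staticPrice} from market clearing. Your organization is slightly more explicit---separating existence/uniqueness via monotone excess demand from the identification $p_{\sta}=\max_{I}p_{I}$ via the ``zero of a pointwise max of decreasing functions equals the max of the zeros'' observation---whereas the paper handles both at once through a direct two-way implication, but the mathematical content is the same.
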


The formula for the static price is the direct analogue of the PDE~\eqref{eq:mainPDE} for the dynamic price. 
Indeed, the PDE considers the difference between a weighted average of instantaneous holding premia and the instantaneous marginal cost of carrying positions. Formula~\eqref{eq:staticPrice} can be read in the same way, after bringing $p_{\sta}$ to the right-hand side: it considers the difference between the weighted average of the  holding premia $e_{i}-p_{\sta}=E_{i}[f(X(T))]-p_{\sta}$ over the whole interval and the marginal cost of carrying positions over that same interval. Informally, we may think of the PDE as describing a repeated version of the static problem over infinitesimal intervals.
%
%

In parallel with our analysis above, we can consider limiting cases for the cost coefficients in the static case. 
We denote by $p_{\sta}^{\alpha_{-},\alpha_{+}}$ the static equilibrium price for cost parameters $\alpha_{-},\alpha_{+}$ and initial value $X(0)=x$ as given by~\eqref{eq:staticPrice}.

\begin{proposition}\label{pr:limitStatic}
  (i) In the limit $\alpha_{+}\to \infty$ with zero cost for long positions, the price $p_{\sta}^{\alpha_{-},\alpha_{+}}$ converges to
  \begin{equation}\label{eq:StaticPriceAlphaPlusInfty}
  p_{\sta}^{\infty} = \max_{i\in\{1,\dots,n\}} E_{i}[f(X(T))].
  \end{equation}
  (ii) In the limit $\alpha_{-}\to 0$ with infinite cost for short positions, the price $p_{\sta}^{\alpha_{-},\alpha_{+}}$ converges to
  \begin{equation}\label{eq:StaticPriceAlphaMinusZero}
    p_{\sta}^{0,\alpha_{+}} = \max_{\emptyset\neq J \subseteq \{1,\dots,n\}} \left(\tfrac{1}{|J|} \sum_{i\in J} E_{i}[f(X(T))]
  - \frac{sT}{|J|\alpha_{+}}\right).
  \end{equation}
\end{proposition}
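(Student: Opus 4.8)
\textbf{Proof proposal for Proposition~\ref{pr:limitStatic}.}

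The plan is to work directly from the closed-form expression~\eqref{eq:staticPrice}: the static price $p_{\sta}^{\alpha_{-},\alpha_{+}}$ is a maximum, over the \emph{fixed finite} collection of subsets $I\subseteq\{1,\dots,n\}$, of a term
$$
  T_{I}(\alpha_{-},\alpha_{+}) = \tfrac{\alpha_{-}}{|I|\alpha_{-} + |I^{c}|\alpha_{+}} \sum_{i\in I} e_{i} + \tfrac{\alpha_{+}}{|I|\alpha_{-} + |I^{c}|\alpha_{+}}\sum_{i\in I^{c}} e_{i} - \tfrac{sT}{|I|\alpha_{-} + |I^{c}|\alpha_{+}}
$$
that is a rational function of the cost parameters. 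Since the index set is finite, I would use the elementary fact that the limit of a maximum of finitely many convergent sequences equals the maximum of their limits (valid in the extended reals as long as at least one limit is finite, which will always hold here). Thus both parts reduce to computing $\lim T_{I}$ for each fixed $I$ and then maximizing, with a short case split according to whether $I^{c}=\emptyset$ or $I^{c}\neq\emptyset$.

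For part (i) I send $\alpha_{+}\to\infty$. If $I^{c}\neq\emptyset$ the denominator $|I|\alpha_{-}+|I^{c}|\alpha_{+}\to\infty$, so the long-weight $\alpha_{-}/(\cdot)\to0$, the short-weight $\alpha_{+}/(\cdot)\to 1/|I^{c}|$, and the cost term vanishes, giving $\lim T_{I}=\tfrac{1}{|I^{c}|}\sum_{i\in I^{c}} e_{i}$. As $I$ ranges over the subsets with nonempty complement, $J:=I^{c}$ ranges over all nonempty subsets, and the elementary averaging bound $\max_{\emptyset\neq J}\tfrac{1}{|J|}\sum_{i\in J}e_{i}=\max_{i}e_{i}$ (the average never exceeds the largest $e_{i}$, with equality at the singleton containing a maximizer) yields the right-hand side of~\eqref{eq:StaticPriceAlphaPlusInfty}. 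The remaining term $I=\{1,\dots,n\}$ equals $\tfrac1n\sum_{i}e_{i}-\tfrac{sT}{n\alpha_{-}}$, which is independent of $\alpha_{+}$ and bounded above by $\max_{i}e_{i}$ (as $s\ge0$, $\alpha_{-}>0$), so it cannot raise the maximum. Hence the limit is $\max_{i}e_{i}$, independent of $\alpha_{-}$ and $s$, as claimed.

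For part (ii) I send $\alpha_{-}\to0$. If $I^{c}\neq\emptyset$ the denominator tends to $|I^{c}|\alpha_{+}>0$, the long term vanishes because $\alpha_{-}\to0$, and $\lim T_{I}=\tfrac{1}{|I^{c}|}\sum_{i\in I^{c}}e_{i}-\tfrac{sT}{|I^{c}|\alpha_{+}}$; reindexing $J:=I^{c}$ over nonempty subsets recovers exactly the expression in~\eqref{eq:StaticPriceAlphaMinusZero}. The lone term $I=\{1,\dots,n\}$ equals $\tfrac1n\sum_{i}e_{i}-\tfrac{sT}{n\alpha_{-}}$, which tends to $-\infty$ when $s>0$ and, when $s=0$, coincides with the $J=\{1,\dots,n\}$ value already present in~\eqref{eq:StaticPriceAlphaMinusZero}; either way it is dominated in the limit and drops out of the maximum. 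This gives the desired formula.

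The computations are routine once the framework is set, so the only point requiring genuine care is the interchange of the limit with the finite maximum in part (ii), where the full-set term $I=\{1,\dots,n\}$ diverges to $-\infty$ for $s>0$. I would make this rigorous by noting that, because the index set is finite and at least one term (any $I$ with $I^{c}\neq\emptyset$) has a finite limit, a term tending to $-\infty$ is eventually strictly below that finite value and hence eventually not the maximizer; it may therefore be discarded when passing to the limit. Everything else is bookkeeping of the weight asymptotics in $T_{I}$ together with the elementary averaging identity used in part (i).
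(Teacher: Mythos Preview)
Your proposal is correct and follows exactly the approach of the paper, which simply states that both formulas are obtained by taking the limits $\alpha_{+}\to\infty$ and $\alpha_{-}\to0$ in~\eqref{eq:staticPrice}. You have carefully spelled out the term-by-term asymptotics and the disposal of the $I=\{1,\dots,n\}$ case that the paper leaves entirely implicit.
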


The intuition is the same as in Section~\ref{se:limitingModels}. Without a cost for holding long positions, optimists are indifferent with respect to nonnegative portfolios and the price is solely determined by the most optimistic agents. Whereas when shorting is ruled out, the price is determined as an average over a group of relatively more optimistic agents while the complementary group of more pessimistic agents does not influence the price directly. We omit the  proof of Proposition~\ref{eq:StaticPriceAlphaPlusInfty} since it  is analogous to  Section~\ref{se:limitingModels}. 

\subsection{Resale and Delay Options}\label{subsec:randdo}

Next, we compare the dynamic equilibrium price $p_{\dyn}:=P(0)$ at time $t=0$ with the static equilibrium price $p_{\sta}$. For the latter to be well-defined, we assume throughout that the supply $s$ is constant. We discuss  two options  that are present under dynamic trading and are valued by  agents---the resale and delay options---and the effect on prices of eliminating these options by forcing agents to trade only at time zero. In particular, we shall see that the ordering of $p_{\dyn}$ and $p_{\sta}$ may be different than in the earlier models.

Previous papers, starting with \cite{harrisonkreps78},   considered models with risk-neutral agents that face a constant marginal cost-of-carry for long positions (the interest rate)  and cannot  sell short.  In such models, it is known that the dynamic equilibrium price exceeds the static one, and the difference is attributed to the ``resale option.'' The possibility of reselling the asset increases the price---agents may want to buy today in order to resell to agents that are more optimistic tomorrow. In these ``classical'' models, agents may also plan to buy additional units of the asset in some future states of the world.   This possibility however does not alter the ranking between the dynamic and static equilibrium prices. Indeed, since agents are risk-neutral and the marginal cost of carrying a long position is independent of the size of the position, we may assume generically that only one type $i$ would acquire the asset in the static equilibrium and pay its marginal valuation at time zero. When re-trading is allowed, $i$'s marginal valuation  for holding the full supply of the asset at time zero is at least as large, since an agent can always choose a buy-and-hold strategy. As  the market price must exceed the marginal valuation of any type, the dynamic equilibrium price must exceed the static equilibrium price.
  
The next two results confirm this intuition by showing how this mechanism carries over to limiting cases of our model. First, we show that when the marginal cost of long positions is constant, the dynamic price exceeds the static one. This holds even when shorting is allowed, because in this extreme case, only the most optimistic agents contribute to the price formation, just like in the classical models (see also Propositions~\ref{pr:limitLong} and~\ref{pr:limitStatic}).

\begin{proposition}\label{pr:limitLongComparison}
  In the limit $\alpha_{+}\to \infty$, the dynamic equilibrium price dominates the static price:
  $
    p_{\dyn}^{\infty} \geq p_{\sta}^{\infty}.
  $
\end{proposition}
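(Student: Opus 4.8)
The plan is to combine the two limiting characterizations already in hand: Proposition~\ref{pr:limitLong} identifies $p_{\dyn}^{\infty}=v^{\infty}(0,x)$ as the value at $(0,x)$ of the solution to the PDE~\eqref{eq:PDEalphaPlusInfty}, and Proposition~\ref{pr:limitStatic}(i) identifies the static limit as $p_{\sta}^{\infty}=\max_{i}E_{i}[f(X(T))]$. So the target inequality becomes
$$
  v^{\infty}(0,x)\;\geq\;\max_{i\in\{1,\dots,n\}} E_{i}[f(X(T))].
$$
Since the right-hand side is a maximum over $i$, it suffices to prove $v^{\infty}(0,x)\geq E_{i}[f(X(T))]$ for each fixed $i$ and then take the maximum.

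For a fixed $i$, the PDE~\eqref{eq:PDEalphaPlusInfty} is a Hamilton--Jacobi--Bellman equation whose supremum runs over $j\in\{1,\dots,n\}$, so in particular the operator dominates the single-type operator $\cL^{i}$; that is, $v^{\infty}$ satisfies the differential \emph{inequality} $\partial_{t}v^{\infty}+b_{i}\partial_{x}v^{\infty}+\tfrac12\tr\sigma_{i}^{2}\partial_{xx}v^{\infty}\leq 0$ (a supersolution of the linear operator $\cL^{i}$) with $v^{\infty}(T,\cdot)=f$. The plan is then to apply the It\^o formula to $r\mapsto v^{\infty}(r,X(r))$ along the law $Q_{i}$, under which $X$ solves~\eqref{eq:SDEforX} with coefficients $b_{i},\sigma_{i}$. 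Since $v^{\infty}\in C^{1,2}_{b}$ and $\sigma_{i}$ is bounded, the stochastic integral is a true martingale, so taking expectations under $Q_{i}$ gives
$$
  E_{i}\bigl[v^{\infty}(T,X(T))\bigr]
  = v^{\infty}(0,x) + E_{i}\!\left[\int_{0}^{T}\!\cL^{i}v^{\infty}(r,X(r))\,dr\right]
  \leq v^{\infty}(0,x),
$$
where the inequality uses the supersolution property $\cL^{i}v^{\infty}\leq 0$. Because $v^{\infty}(T,X(T))=f(X(T))$, the left-hand side equals $E_{i}[f(X(T))]$, yielding $E_{i}[f(X(T))]\leq v^{\infty}(0,x)$. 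Maximizing over $i$ and invoking the two propositions to rewrite both sides then completes the argument.

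This is essentially a verification argument, and I expect no serious obstacle: the only points requiring care are (a) confirming that $v^{\infty}$ is a genuine classical supersolution of each $\cL^{i}$, which is immediate since dropping all but the $i$-th term in the supremum only decreases it and the terminal condition matches, and (b) the martingale property of the It\^o integral, which follows from $v^{\infty}\in C^{1,2}_{b}$ together with boundedness of $\sigma_{i}$ (so $\partial_{x}v^{\infty}\,\sigma_{i}$ is bounded). The economic content is that the most optimistic agent, by using a buy-and-hold strategy, can always secure at least her own static valuation, so the dynamic price—which is set by this agent in the $\alpha_{+}\to\infty$ regime—cannot fall below the static price.
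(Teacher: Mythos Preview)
Your proof is correct and uses essentially the same idea as the paper's: reduce to showing $E_{i}[f(X(T))]\leq v^{\infty}(0,x)$ for each fixed $i$, and exploit that the nonlinear HJB operator in~\eqref{eq:PDEalphaPlusInfty} dominates each single-type linear operator $\cL^{i}$. The only difference is in how this last comparison is executed. The paper introduces the auxiliary function $u_{i}\in C^{1,2}_{b}$ solving the linear equation $\cL^{i}u_{i}=0$ with $u_{i}(T,\cdot)=f$, identifies $u_{i}(0,x)=E_{i}[f(X(T))]$ via Feynman--Kac, notes that $u_{i}$ is a subsolution of~\eqref{eq:PDEalphaPlusInfty}, and invokes the parabolic comparison principle to conclude $u_{i}\leq v^{\infty}$. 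You instead observe directly that $v^{\infty}$ is a classical supersolution of the linear operator $\cL^{i}$ and run It\^o's formula under $Q_{i}$ to obtain the inequality probabilistically. Your route is slightly more self-contained (no auxiliary PDE, no external comparison theorem), while the paper's route is consistent with the PDE-comparison machinery used throughout the appendix; mathematically they are two standard expressions of the same verification argument.
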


Next, we show that if short-sales are prohibited \emph{and} if in the static equilibrium only one type holds the asset,\footnote{Only one type will hold the asset if that type is sufficiently more optimistic than the others and the supply is small enough.}
 the dynamic equilibrium price again exceeds the static price, even when longs face an increasing marginal cost-of-carry. 

\begin{proposition}\label{pr:comparisonOneOptimist}
In the limit $\alpha_{-}\to 0$ with no short-selling, suppose type $i$ holds the entire market in the static equilibrium; that is, $q_{j}=0$ for $j\neq i$. Then, the dynamic equilibrium price dominates the static price:
$
p_{\dyn}^{0,\alpha_{+}} \geq p_{\sta}^{0,\alpha_{+}}.
$
\end{proposition}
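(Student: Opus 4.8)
The plan is to read the static price off the single-optimist hypothesis, recognise it as the payoff produced by one particular (constant) control in the stochastic control representation of the dynamic price, and then invoke optimality of the control problem.

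First I would pin down the static price. By~\eqref{eq:staticPortfolio} the optimal static portfolios are $q_{j}=\alpha_{\sgn(e_{j}-p_{\sta})}T^{-1}(e_{j}-p_{\sta})$ with $e_{j}=E_{j}[f(X(T))]$. In the no-shorting limit the hypothesis $q_{i}=s>0$ forces $e_{i}>p_{\sta}$, so for type $i$ the relevant branch is the $\alpha_{+}$ one and $\alpha_{+}T^{-1}(e_{i}-p_{\sta})=s$, i.e.
\[
  p_{\sta}^{0,\alpha_{+}} = e_{i} - \frac{sT}{\alpha_{+}}.
\]
Consistently, this is the value of the maximand in~\eqref{eq:StaticPriceAlphaMinusZero} at the singleton $J=\{i\}$ (where $|J|=1$), so the maximum there is attained at $\{i\}$.

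Next I would represent the dynamic price as a control value and test it against the constant control that selects only type $i$. In the limiting model the dynamic price $p_{\dyn}^{0,\alpha_{+}}=v^{0,\alpha_{+}}(0,x)$ is the value of a control problem of the form~\eqref{eq:controlProblem} in which the controller picks at each instant a \emph{nonempty} subset $J$, giving drift $\tfrac{1}{|J|}\sum_{k\in J}b_{k}$, diffusion $\tfrac{1}{|J|}\sum_{k\in J}\sigma_{k}^{2}$ and running cost $s/(|J|\alpha_{+})$ (the exact analogue of Proposition~\ref{pr:controlProblem} for the PDE~\eqref{eq:PDEalphaMinusZero}). Choosing the constant control $\cI\equiv\{i\}$, the state equation~\eqref{eq:controlledSDE} reduces to $dX=b_{i}\,dr+\sigma_{i}\,dW$, which is exactly~\eqref{eq:SDEforX} under $Q_{i}$, and the running cost becomes the constant $s/\alpha_{+}$. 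Hence
\[
  v^{0,\alpha_{+}}(0,x) \ge E_{i}\!\left[f(X(T))\right] - \frac{sT}{\alpha_{+}} = e_{i} - \frac{sT}{\alpha_{+}} = p_{\sta}^{0,\alpha_{+}},
\]
which is the claim. Economically, this constant control is precisely agent $i$'s buy-and-hold strategy of absorbing the whole supply, so the dynamic optimizer can do at least as well; this is the formalisation of the intuition stated before the proposition.

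The step requiring care is the control representation in the \emph{degenerate} limit $\alpha_{-}=0$, since Proposition~\ref{pr:controlProblem} is stated for $\alpha_{-}>0$ and the empty subset is now excluded. I would avoid re-deriving it and instead pass to the limit from finite $\alpha_{-}$: for $\alpha_{-}>0$ apply Proposition~\ref{pr:controlProblem} with the constant control $I=\{1,\dots,n\}\setminus\{i\}$, so that $I^{c}=\{i\}$ and the coefficients~\eqref{eq:defMu}--\eqref{eq:defCost} are $\mu_{I}$, $\Sigma_{I}$ and $\kappa_{I}=s/((n-1)\alpha_{-}+\alpha_{+})$. As $\alpha_{-}\downarrow 0$ these converge uniformly (they depend on $\alpha_{-}$ only through scalar weights, while $b_{j},\sigma_{j}$ are uniformly bounded) to $b_{i}$, $\sigma_{i}$ and $s/\alpha_{+}$; by stability of the SDE~\eqref{eq:controlledSDE} under uniformly convergent, uniformly bounded and Lipschitz coefficients, together with the boundedness and continuity of $f$, the resulting lower bound for $v^{\alpha_{-},\alpha_{+}}(0,x)$ converges to $e_{i}-sT/\alpha_{+}$. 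Since Proposition~\ref{pr:limitShort} gives $v^{\alpha_{-},\alpha_{+}}(0,x)\to v^{0,\alpha_{+}}(0,x)=p_{\dyn}^{0,\alpha_{+}}$ (monotonically), passing to the limit yields $p_{\dyn}^{0,\alpha_{+}}\ge e_{i}-sT/\alpha_{+}=p_{\sta}^{0,\alpha_{+}}$. The only genuine content beyond bookkeeping is this SDE-stability argument, which is routine given the $C^{1,2}_{b}$ and uniform-parabolicity assumptions on the coefficients.
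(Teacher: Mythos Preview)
Your argument is correct and rests on the same idea as the paper's: identify $p_{\sta}^{0,\alpha_{+}}=e_{i}-sT/\alpha_{+}$ from the single-optimist hypothesis, then show this quantity lies below the dynamic value by testing against the ``only type~$i$'' alternative. The difference is in the machinery. The paper works purely on the PDE side: it writes $e_{i}-sT/\alpha_{+}=u(0,x)$ with $u$ solving $\partial_{t}u+b_{i}\partial_{x}u+\tfrac12\tr\sigma_{i}^{2}\partial_{xx}u-s/\alpha_{+}=0$, observes that $u$ is a subsolution of~\eqref{eq:PDEalphaMinusZero} (take $J=\{i\}$ in the supremum), and invokes the comparison principle directly for the limiting equation. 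You instead use the control representation and pick the constant control $I^{c}=\{i\}$; since Proposition~\ref{pr:controlProblem} is stated only for $\alpha_{-}>0$, you then need the extra limiting step via SDE stability and Proposition~\ref{pr:limitShort}. Both routes are valid; the paper's is shorter because the comparison principle for~\eqref{eq:PDEalphaMinusZero} already holds in the degenerate regime, so no limit is needed. Your approach has the minor advantage of staying on the probabilistic side and not re-invoking Feynman--Kac.
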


We now turn to the case when both marginals costs are increasing and finite. Here, the same options to resell and to delay are present, but the effects are more subtle.  The option to delay now affects  equilibrium prices because the marginal valuation of buyers varies with the size of their position. More importantly, trading may occur in the dynamic equilibrium  even  though one type remains the most optimistic. Indeed, in the classical models (and the limiting model of Proposition~\ref{pr:limitLong}) the most optimistic type always holds the full supply and trading requires that relative optimism changes sign.  When  the marginal cost-of-carry for long positions is increasing, the magnitude of relative optimism determines  equilibrium holdings---it is no longer true that a less optimistic type would always hold a non-positive amount. Example~\ref{ex:negativeBubble} below illustrates that the delay option may have an important impact on prices and even reverse the ordering of dynamic and static prices.

If shorting is allowed, buying today in order to resell needs to be compared with entering a short position tomorrow. The choice  will depend, among other factors, on the costs-of-carry for long and short positions. The option to resell a short position---that is, the option to cover shorts---in turn, decreases the minimum amount pessimists would be willing to receive for shorting the asset, putting downward pressure on prices. Shorts also may exercise the option to delay by building up a short position over time. Example~\ref{ex:symmetricCosts} below illustrates how the ordering of dynamic and static prices can be reversed if the cost of shorting is sufficiently low.

\begin{remark}\label{rk:violatingAssumptions}
In the remainder of this section, we use a quadratic payoff function $f$ to obtain explicit formulas. This violates our assumption that $f$ is bounded but our results still apply with the appropriate modifications; in particular, the equilibrium price function $v$ and the admissible portfolios $\phi_{i}$ exhibit polynomial growth instead of being bounded. The formulas in our examples can also be verified by direct calculation.
\end{remark}

\subsection{Illustrating the Effect of the Delay Option}

In this section, we show that the static price may dominate the dynamic price even when short-selling is prohibited. This cannot be explained with a resale option; instead, it highlights the delay option.
Consider first the dynamic equilibrium and suppose that type $i$ expects with high probability that their portfolio $\Phi_{i}(t)$ will increase over time.
If only buy-and-hold strategies are allowed, an agent of type~$i$ would consider \emph{anticipating} the increase of the portfolio at time $t=0$, and if the additional expected gains outweigh the additional costs-of-carry, the agent would have a higher buy-and-hold demand at the previous equilibrium price $p_{\dyn}$. Other types may reduce their positions at the price $p_{\dyn}$, because they are anticipating a decrease in position or because they are indifferent to the amount they are holding (see also Example~\ref{ex:noCost} in the Appendix).

To show that the static price may exceed the dynamic price, even when short-selling is prohibited ($\alpha_{-}=0$), we impose a positive cost for long positions ($\alpha_{+}=1$) and construct an example where some agents expect to increase their positions over time but no agent expects to decrease their position.\footnote{Since types disagree, it may indeed happen that all agents expect to increase their positions over time in the dynamic case, without contradicting the market clearing condition.} To obtain explicit formulas despite the nonlinear context, we consider the limiting case of zero volatility but show later (Proposition~\ref{pr:limitSmallVolatilityEx}) that this is indeed the continuous limit for equilibria with small volatility coefficients~$\sigma_{i}$. In particular, the qualitative conclusions of the example extend to examples with diffusion risk. The zero-volatility case violates our assumption of uniformly parabolic coefficients  (and indeed $v$ is not smooth in this example) but the formulas can be verified by direct calculation.

\begin{example}\label{ex:negativeBubble}
  Consider $n=2$ types with volatility coefficients $\sigma_{i}=0$ and constant, opposing drifts
  $$
    b_{1}=1,\qquad b_{2}=-1.
  $$  
  The payoff function is $f(y)=y^{2}$ and the supply $s>0$ is constant. Moreover, $\alpha_{-}=0$ and $\alpha_{+}=1$.
  Then, as we show in Appendix~\ref{se:furtherEx}, the static equilibrium price exceeds the dynamic price; more precisely,
  $$
  p_{\sta}-p_{\dyn} = \begin{cases}
   T^{2}, & |x|\leq s/4-T/2,\\
   (s/2 - 2|x|)T, &  s/4-T/2 < |x| < s/4,\\
  0, & |x|\geq s/4.
  \end{cases}
  $$
\end{example}

In the regimes $s/4-T/2 < |x| < s/4$ and $|x|\leq s/4-T/2$, at least one of the types has a dynamic portfolio that is increasing in time. These agents are exercising the delay option when re-trading is allowed and have an anticipatory motive when they can only trade at $t=0$. A price increase is necessary to clear the static market, leading to $p_{\sta}>p_{\dyn}$. In Appendix~\ref{se:furtherEx} we discuss in detail the asset allocation in all regimes and show how the delay option explains the difference $p_{\sta}-p_{\dyn}$.

It remains to prove that the conclusions of the example also hold when volatilities are small but positive, rather than vanishing.

\begin{proposition}\label{pr:limitSmallVolatilityEx}
  Consider the setting of Example~\ref{ex:negativeBubble} with constant volatilities $\sigma:=\sigma_{1}=\sigma_{2}\geq0$ and denote the corresponding static and dynamic equilibrium prices by $p_{\sta}^{\sigma}$ and $p_{\dyn}^{\sigma}$, respectively. Then, $p_{\sta}^{\sigma}\downarrow p_{\sta}^{0}$ and $p_{\dyn}^{\sigma}\downarrow p_{\dyn}^{0}$ as $\sigma\downarrow 0$. As a consequence, we have
  $$
    p_{\sta}^{\sigma} - p_{\dyn}^{\sigma} \to 
   \begin{cases}
   T^{2}, & |x|\leq s/4-T/2,\\
   (s/2 - 2|x|)T, &  s/4-T/2 < |x| < s/4,\\
  0, & |x|\geq s/4.
  \end{cases}
  $$
\end{proposition}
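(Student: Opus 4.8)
The plan is to treat the static and dynamic prices separately, exploiting the structural fact that both types share the \emph{same} volatility $\sigma_{1}=\sigma_{2}=\sigma$, so that the diffusion coefficient $\Sigma_{I}^{2}=\sigma^{2}$ in~\eqref{eq:PDEalphaMinusZero} is independent of the group $I$. For the static price this is immediate: under $Q_{i}$ we have $X(T)=x+b_{i}T+\sigma W_{i}(T)$ with $W_{i}(T)\sim N(0,T)$, hence $e_{i}=E_{i}[X(T)^{2}]=(x+b_{i}T)^{2}+\sigma^{2}T$. Substituting into the no-shorting static formula~\eqref{eq:StaticPriceAlphaMinusZero} (with $\alpha_{+}=1$), the term $\tfrac{1}{|J|}\sum_{i\in J}e_{i}$ equals $\tfrac{1}{|J|}\sum_{i\in J}(x+b_{i}T)^{2}+\sigma^{2}T$, so the summand $\sigma^{2}T$ is the same for every $J$ and factors out of the maximum. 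This yields the exact identity $p_{\sta}^{\sigma}=p_{\sta}^{0}+\sigma^{2}T$, whence $p_{\sta}^{\sigma}\downarrow p_{\sta}^{0}$ as $\sigma\downarrow0$.

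For the dynamic price I would use the control representation of the no-shorting model (the analogue of Proposition~\ref{pr:controlProblem} for~\eqref{eq:PDEalphaMinusZero}): writing $\beta_{J}=\tfrac{1}{|J|}\sum_{i\in J}b_{i}$ and $\kappa_{J}=s/|J|$, and noting $\Sigma_{J}^{2}\equiv\sigma^{2}$, we have
\[
v^{\sigma}(0,x)=\sup_{\cI\in\Theta}E\Big[\big(x+A_{\cI}+\sigma W(T)\big)^{2}-C_{\cI}\Big],\qquad A_{\cI}=\int_{0}^{T}\beta_{\cI(r)}\,dr,\ \ C_{\cI}=\int_{0}^{T}\kappa_{\cI(r)}\,dr,
\]
while $v^{0}(0,x)=\sup_{J(\cdot)}\big[(x+\int_{0}^{T}\beta_{J(r)}dr)^{2}-\int_{0}^{T}\kappa_{J(r)}dr\big]$ is the value of the deterministic control problem associated with the first-order equation obtained by setting $\sigma=0$ in~\eqref{eq:PDEalphaMinusZero}, equal to the explicit formula in Example~\ref{ex:negativeBubble}. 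For the lower bound I feed an $\eps$-optimal deterministic control into the $\sigma$-problem; since $A_{\cI}$ is then deterministic and $E[W(T)]=0$, the cross term vanishes and $v^{\sigma}(0,x)\ge v^{0}(0,x)-\eps+\sigma^{2}T$, so $v^{\sigma}\ge v^{0}+\sigma^{2}T$. For the upper bound I take an $\eps$-optimal $\cI$ for $v^{\sigma}(0,x)$; for almost every path $\omega$ the realized control $r\mapsto\cI(r,\omega)$ is an admissible deterministic control, so $(x+A_{\cI}(\omega))^{2}-C_{\cI}(\omega)\le v^{0}(0,x)$ pathwise. Expanding the square and using $E[W(T)^{2}]=T$ together with the Cauchy--Schwarz bound $|E[A_{\cI}W(T)]|\le\|A_{\cI}\|_{2}\|W(T)\|_{2}\le T^{3/2}$ (as $|A_{\cI}|\le T$) gives $v^{\sigma}(0,x)\le v^{0}(0,x)+2\sigma T^{3/2}+\sigma^{2}T+\eps$. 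Letting $\eps\downarrow0$ and squeezing yields $v^{\sigma}(0,x)\to v^{0}(0,x)=p_{\dyn}^{0}$.

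To upgrade this to the monotone convergence $p_{\dyn}^{\sigma}\downarrow p_{\dyn}^{0}$, I would argue that $x\mapsto v^{\sigma}(t,x)$ is convex for each $t$: in the representation above the coefficients $\beta_{\cI},\sigma,\kappa_{\cI}$ do not depend on the state, so for each fixed $\cI$ the integrand is convex in $x$ (since $f(y)=y^{2}$ is convex and $C_{\cI}$ is independent of $x$), and a supremum of convex functions is convex. Given $0<\sigma_{a}\le\sigma_{b}$, convexity gives $\partial_{xx}v^{\sigma_{a}}\ge0$, so
\[
\partial_{t}v^{\sigma_{a}}+\tfrac12\sigma_{b}^{2}\partial_{xx}v^{\sigma_{a}}+\sup_{\emptyset\neq J}\big(\beta_{J}\partial_{x}v^{\sigma_{a}}-\kappa_{J}\big)=\tfrac12(\sigma_{b}^{2}-\sigma_{a}^{2})\partial_{xx}v^{\sigma_{a}}\ge0,
\]
i.e.\ $v^{\sigma_{a}}$ is a subsolution of the PDE for $\sigma_{b}$; the parabolic comparison principle underlying Theorem~\ref{th:PDE} then gives $v^{\sigma_{a}}\le v^{\sigma_{b}}$, and the endpoint $\sigma_{a}=0$ is covered by the lower bound $v^{\sigma_{b}}\ge v^{0}+\sigma_{b}^{2}T$ from the previous paragraph. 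Hence $\sigma\mapsto v^{\sigma}$ is increasing, so $p_{\dyn}^{\sigma}\downarrow p_{\dyn}^{0}$. Combining $p_{\sta}^{\sigma}=p_{\sta}^{0}+\sigma^{2}T$ with $p_{\dyn}^{\sigma}\to p_{\dyn}^{0}$ then gives $p_{\sta}^{\sigma}-p_{\dyn}^{\sigma}\to p_{\sta}^{0}-p_{\dyn}^{0}$, which is the piecewise expression of Example~\ref{ex:negativeBubble}; throughout, the quadratic $f$ requires the polynomial-growth versions of our results noted in Remark~\ref{rk:violatingAssumptions}.

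The hard part will be the dynamic upper bound. Because the control filtration is generated by $W$, the controller retains a rich source of randomization even as $\sigma\downarrow0$, and one must rule out that this extra randomness beats the deterministic value $v^{0}$---the convexity of $f$ makes randomized terminal positions \emph{a priori} attractive via Jensen. The pathwise inequality $(x+A_{\cI}(\omega))^{2}-C_{\cI}(\omega)\le v^{0}(0,x)$ is what closes this gap: the objective is linear in the law of the realized control, so no randomization can exceed the best deterministic control, and only the genuinely stochastic $\sigma W(T)$ term contributes the vanishing correction $2\sigma T^{3/2}+\sigma^{2}T$.
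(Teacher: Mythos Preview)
Your argument is correct. The static case and the monotonicity step match the paper exactly: the paper also notes $p_{\sta}^{\sigma}-p_{\sta}^{0}=\sigma^{2}T$, and proves $v^{\sigma_{1}}\ge v^{\sigma_{2}}$ for $\sigma_{1}\ge\sigma_{2}>0$ by the same convexity-plus-comparison reasoning (convexity of $f$ forces $\partial_{xx}v^{\sigma}\ge0$, hence $v^{\sigma_{2}}$ is a subsolution of the $\sigma_{1}$-equation).

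Where you diverge is the convergence $v^{\sigma}\to v^{0}$. The paper simply cites a general stability result for value functions (Krylov, \emph{Controlled Diffusion Processes}, Corollary~3.1.13), whereas you give a hands-on two-sided estimate using the special structure of the example: the diffusion coefficient $\Sigma_{J}^{2}\equiv\sigma^{2}$ is independent of the control, the coefficients are constant so the terminal state decomposes as $x+A_{\cI}+\sigma W(T)$, and the deterministic value $v^{0}(0,x)$ dominates each realized payoff pathwise. Your approach is more elementary and self-contained, and it even yields an explicit rate $|v^{\sigma}-v^{0}|\le 2\sigma T^{3/2}+\sigma^{2}T$. The paper's approach is shorter to state and does not rely on the coincidence $\sigma_{1}=\sigma_{2}$ or the quadratic form of~$f$, but gives no rate. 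Both are valid here; your pathwise inequality is precisely the right way to neutralize the randomization concern you raise in the last paragraph, since the objective is affine in the law of $(A_{\cI},C_{\cI})$ once $\sigma W(T)$ is separated out.
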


The above example of the delay option effect should be contrasted with Proposition~\ref{pr:limitLongComparison} where we have seen that when there is no cost-of-carry for long positions ($\alpha_{+}=\infty$), the dynamic equilibrium price always exceeds the static one, even if short-selling is possible.
Example~\ref{ex:noCost} illustrates the mechanics of the delay option in the latter situation. In Example~\ref{ex:noCost}, pessimists plan to close their short position over time in the dynamic equilibrium. When forced to buy-and-hold, they decrease their initial short position; however, in contrast to Example~\ref{ex:negativeBubble},  this has no effect on the static price because, as we have argued,  optimists are indifferent to the size of their own position in the absence of increasing marginal costs.


\subsection{Illustrating the Effect of Shorting}

The following example illustrates that when shorting is allowed, the static price may exceed the dynamic price---this is quite natural once we observe the symmetry between optimists and pessimists in the extreme case $\alpha_{-}=\alpha_{+}$. The difference between the dynamic price and the static price has been identified as the size of the ``speculative bubble'' in the previous literature. If we maintain this identification, the example can be used to illustrate how lowering the cost of shorting can lead not only to a bubble implosion but even to a negative bubble. 

\begin{example}\label{ex:symmetricCosts}
To facilitate computations, we assume symmetric costs-of-carry $\alpha_{-}=\alpha_{+}=1.$ Consider $n=2$ types with constant coefficients $b_{i}\in\R$ and $\sigma_{i}>0$, and an asset in zero aggregate supply with payoff $f(y)=y^{2}$. Writing $\Sigma^{2}:=(\sigma_{1}^{2}+\sigma_{2}^{2})/2$ and $\mu:=(b_{1}+b_{2})/2$, the dynamic and static equilibrium prices at $t=0$ for the initial value $X(0)=x$ are
\begin{align*}
p_{\dyn} &= x^{2} + 2x\mu T + \Sigma^{2}T + \left(\frac{b_{1}+b_{2}}{2}\right)^{2}T^{2},\\
p_{\sta} &= x^{2} + 2x\mu T + \Sigma^{2}T + \frac{b_{1}^{2}+b_{2}^{2}}{2} T^{2};
\end{align*}
see Appendix~\ref{se:proofs} for the calculations. In particular, 
$$
p_{\dyn} - p_{\sta} = \left[\left(\frac{b_{1}+b_{2}}{2}\right)^{2} - \frac{b_{1}^{2}+b_{2}^{2}}{2}\right] T^{2} \leq0.
$$
The optimal dynamic and static portfolios are given by
\begin{align*}
\phi_{i}(t,x) &= x(b_{i}-b_{j}) + \tfrac12 (T-t)(b_{i}^{2}-b_{j}^{2}) + \tfrac12 (\sigma_{i}^{2}-\sigma_{j}^{2}),\\
q_{i} &= x(b_{i}-b_{j}) + \tfrac12 T(b_{i}^{2}-b_{j}^{2}) + \tfrac12 (\sigma_{i}^{2}-\sigma_{j}^{2}),
\end{align*}
where $j=2$ if $i=1$ and vice versa; in particular, the demands at $t=0$ coincide.
In the special case $b_{1}=b_{2}$ where all agents agree on the drift, we have $p_{\dyn}=p_{\sta}$ and the demands coincide at all times. Whenever $b_{1}\not =b_{2},$ a continuity result similar to the results established in Section \ref{se:limitingModels} guarantees that $p_{\dyn}<p_{\sta}$ for cost parameters close to $\alpha_{-}=\alpha_{+}=1$.
\end{example}

To obtain some intuition for this example, consider the case were $\sigma_1=\sigma_2,$ $b_1>0$ and $b_2=0.$ Then if $x>0$, type~1 is long and type~2 is short when re-trading is allowed. Notice that an agent who is short expects on average to cover some of her shorts in the future. When re-trading is ruled out, she prefers to cut her short position at time zero. This would place upward pressure on the static price. The long party also expects to reduce his position if $X(t)$ would stay constant, but because $b_1>0$, he expects the state $X(t)$ to grow, thus dampening his need to anticipate the reduction when re-trading is ruled out. In other words, the long party values the resale option less than the short party values the repurchase option. As a result, the static market presents excess demand at price $p_{\dyn}$ and thus the static price must rise to clear the market.

\section{A Planner with Limited Instruments}\label{se:planner}

In this section we show that our equilibrium can be explained through a planner's problem which sheds light on the PDE for the equilibrium in Theorem~\ref{th:PDE}. We first explain why this requires a planner with limited instruments. Consider a planner that can allocate the supply arbitrarily across types at any time and state. In addition, she can make arbitrary lump-sum numeraire transfers $\theta_i(T,\omega)$ to agents of type $i=1,\dots, n$ as well as a transfer $\theta_0(T,\omega)$ to the agents that are originally endowed with the supply provided these transfers add up to zero. Criterion~\eqref{eq:expectedNetPayoff} implies that the traders' utility functions are separable and linear in numeraire transfers. Hence the convexity of the cost function for holding assets guarantees that the supply is equally distributed across types in any Pareto optimum; i.e., $y_i(t, x)= \frac{s(t,x)}{n}$. This property of the asset allocation holds in the equilibrium of Theorem~\ref{th:PDE} when traders have homogeneous beliefs ($Q_{1}=\dots=Q_{n}$). In this case, it is clear that the equilibrium allocation is actually a Pareto optimum; the functional form of the utility function compensates for the lack of complete markets. However, this optimality does not hold when traders are heterogeneous and hold different asset positions in equilibrium: gambling using the asset has real costs and a social planner would like to rule them out. This general non-optimality of our equilibrium also holds if we use the ``belief neutral'' Pareto inefficiency criteria in \cite{BSX}.

Although our equilibrium is not Pareto optimal, we can characterize the equilibrium price as the optimal value for a planner with limited instruments and the equilibrium allocations as the associated allocations induced by this planner. Consider a planner that can use two instruments. The first is to assign ``total cost coefficients'' $\alpha_i(t,x)\in[\alpha_{-},\alpha_{+}]$ for each type~$i$ at each date and state~$(t,x)$. If agent~$i$ decides to go short $y$ units, she will be subsidized so that her effective cost is $c_{i}(t,x,y)=\frac 1 {2 \alpha_i(t,x)} y^2$, whereas if she goes long, she will be taxed to have the same effective cost. The second instrument is to give lump-sum numeraire subsidies or charge lump-sum taxes $\theta_i(T,\omega)$, $i=1,\dots,n$  to each type.   The planner must break even   so that any aggregate taxes collected must equal the net subsidies provided.

Given the assigned cost coefficients and lump-sum transfers, agents choose asset positions taking prices as given and the market settles on prices that equilibrate supply and demand. Since the objective function is separable in the numeraire, the optimal positions are independent of the lump-sum transfers: agent~$i$ maximizes the expected net payoff $E_{i}\big[ \int_{0}^{T} \Phi(t)\,dP(t) - \int_{0}^{T} c_{i}(t,X(t),\Phi(t))\,dt\big]$ from trading which is analogous to~\eqref{eq:expectedNetPayoff} except that the cost is now given by $c_{i}$.
%
%
%
%
Recall that $I_{*}$ denotes the group of agents who go short in the equilibrium of Theorem~\ref{th:PDE}; cf.~\eqref{eq:optControlThm}.

\begin{theorem}\label{th:principal}
  (i) For any sufficiently regular\,\footnote{See Section~\ref{se:plannerProof} in the Appendix for further details. In particular, the assignment defined in~(ii) is sufficiently regular in this sense.} assignment $\alpha=(\alpha_{1},\dots,\alpha_{n})$ of the planner, there exists a unique equilibrium with a price function $v_{\alpha}\in C^{1,2}_{b}$. This function can be characterized by a linear PDE and by a Feynman--Kac representation; cf.~\eqref{eq:principalLinearPDE} and~\eqref{eq:principalFeynmanKac} in the Appendix.

  (ii) The planner can maximize the initial price by choosing $\alpha_{i}(t,x)=\alpha_{-}$ when $i\in I_{*}(t,x)$ and $\alpha_{i}(t,x)=\alpha_{+}$ when $i\in I_{*}^{c}(t,x)$. Under this assignment, the price and the asset allocation coincide with the equilibrium of Theorem~\ref{th:PDE}. Agents assigned $\alpha_+$ choose to go long and agents assigned $\alpha_-$ choose to go short, so that no taxes, subsidies or transfers are collected.
\end{theorem}

This theorem states that a planner facing the constraint $\alpha_i \in [\alpha_-, \alpha_+]$ on the total cost coefficients  and who wishes to maximize the initial price of the asset\footnote{Or, when $s>0$, maximize the welfare of the initial asset holders.} would assign  $\alpha_i = \alpha_-$ to the agents that in our original equilibrium choose to go short and $\alpha_+$ to the remaining agents. Given the assignment, equilibrium prices will be identical to the ones obtained in our original equilibrium.  This confirms the intuition from Proposition~\ref{pr:monotonicityCost} which states that the price is increased if costs for optimists (longs) are reduced and costs for pessimists (shorts) are increased: within the constraint, this allocation is the most favorable for the optimists and the least favorable for the pessimists. If the planner were not constrained to the interval $[\alpha_-, \alpha_+]$, she could typically attain an even higher initial price: she would put a tax on the pessimists and subsidize the optimists. In fact, Example~\ref{ex:plannerConstrainedEfficiency}  below shows that unless the planner is constrained to $[\alpha_-, \alpha_+]$, she can make all agents, including the initial asset holders, better off.

We show in the proof that the assertion of the theorem holds not only for the initial price but also for the price $v(t,x)$ at any time and state: the planner is \emph{time-consistent}; there is no need for a commitment device.

\begin{remark}\label{rk:plannerExplainsPDE}
  The planner's problem helps explain the PDE for the equilibrium in Theorem~\ref{th:PDE}. Indeed, \eqref{eq:mainPDE} can be seen as a supremum of linear PDEs parametrized by the groups~$I$. The linear PDE for a fixed group~$I$ is exactly the equation for the equilibrium price $v_{\alpha}$ resulting from the assignment given by $\alpha_{i}=\alpha_{-}$ when $i\in I$ and $\alpha^{*}_{i}=\alpha_{-}$ when $i\in I^{c}$. Thus, \eqref{eq:mainPDE} can be understood as an optimization over assignments of $\alpha_{-}$ and $\alpha_{+}$ to the different types.
\end{remark}

The following example shows that without the constraint $\alpha_i \in [\alpha_-, \alpha_+]$, the planner may be able to  improve the utility of all agents relative to the equilibrium utility. 
  
\begin{example}\label{ex:plannerConstrainedEfficiency}  
  Consider $n=2$ types in a market with constant supply $s=0$. The equilibrium portfolios satisfy $\phi_{1}=-\phi_{2}$ by market clearing, and we may assume that they are not identically equal to zero. The precise views and costs are not important for this example; for instance, we can take $b_{1}=-b_{2}>0$, $\sigma_{1}=\sigma_{2}>0$ and $\alpha:=\alpha_{+}=\alpha_{-}>0$.
  
  Consider a planner who can charge a tax on all types so that the agents face an effective cost coefficient $\tilde\alpha=\alpha/2$. It follows from Remark~\ref{rk:scaling} that the equilibrium price is unaffected by this symmetric scaling: $\tilde{v}=v$. In particular, the portfolios are related by $\tilde{\phi}_{i}= \tilde\alpha\cL^{i}\tilde v=(\alpha/2)\cL^{i}v=\phi_{i}/2$. As a consequence, the trading gains of type $i$, $X_{i}=\int_{0}^{T} \Phi_{i}(t)\,dP(t),$ become $\tilde X_{i}=X_{i}/2$ in the new equilibrium. Moreover, as 
  $\tilde{c}(\tilde\phi_{i})
  =\frac{1}{2\tilde\alpha} \tilde\phi_{i}^{2}
  =\frac12 \frac{1}{2\alpha}\phi_{i}^{2}=c(\phi_{i})/2$,
  the holding costs are also cut by half. As $\phi_{1}=-\phi_{2}$, have that $X_{1}=-X_{2}$ and $\tilde X_{1}=-\tilde X_{2}$; in particular, the difference $\Delta_{i}=X_{i}-\tilde X_{i}$ satisfies $\Delta_{1}=-\Delta_{2}$. As a result, the planner can transfer $\Delta_{i}$ to agent $i$ at a zero net cost.  
 After taking this transfer into account, the utility of both types is increased since the total gains are the same as before but the holding costs are cut by half. Moreover, the planner is left with the revenue from the taxes. She may, e.g., distribute the revenue to the agents as an additional lump sum.
\end{example}
 
 Using a continuity argument, a similar example can be constructed with a supply $s>0$. Then, the planner may distribute some of the revenue to the agents initially endowed with the supply to ensure that their utility is also increased.

\section{Conclusion}\label{se:conclusion}

In this paper we considered a continuous-time model of trading among risk-neutral agents with heterogeneous beliefs. Agents face  quadratic costs-of-carry and as a consequence, their marginal valuation of the asset decreases when the magnitude of their position increases, as it would be the case for risk-averse agents. In previous models of heterogeneous beliefs, it was assumed that agents face a constant marginal cost-of-carry for a positive position and an infinite cost for a negative position. As a result, buyers benefit from a resale option and are willing to pay for an asset in excess of their own valuation of the dividends of that asset. Moreover, the supply does not affect the equilibrium price.  We show that when buyers face an increasing marginal cost-of-carry, in equilibrium, they may also value an option to delay. We illustrate with an example that even when shorting is impossible, this delay option may cause the market to equilibrate below the price that would prevail if agents were restricted to buy-and-hold strategies. Moreover, we introduce the possibility of short-selling and show how this gives pessimists the analogous options. 
In our model, the price depends on the supply.

We characterize the unique equilibrium of our model as the solution to a Hamilton--Jacobi--Bellman of a novel form and use this to derive several comparative statics: the price decreases with an increase in the supply of the asset, with an increase in the cost of carrying long positions, and with a decrease in the cost of carrying short positions. The conclusions of earlier models are shown to hold in a limiting case of the model when the quadratic cost-of-carry for long positions converges to zero. An example shows that a decrease in the cost of shorting and the consequent increase in the supply of shorts can deflate the bubble.

In our model the demand for the asset is satisfied by the sum of two components---the exogenous supply and the short positions of the market participants. While the shorts are determined endogenously,  supply is independent of the current price and the beliefs of agents. The data in \cite{CHW11} suggests that shorting played the overwhelming role in pricking the CDO bubble, but there are other episodes, such as the internet bubble, where investments in projects underlying the asset-class and sales by insiders played an important role in satisfying the demand by optimists.  For such episodes one would need to supplement the theory in this paper with an equilibrium model of supply.

Critics have indicted synthetic CDOs for the inordinate damage created by the subprime implosion, but it is not obvious what would have happened if synthetics had not been created. The spreads in the  ``safe''  tranches of cash CDOs would have been even more compressed. More ominously, the  numbers reported in \cite{CHW11} suggest that  generating the  amount of BBB Home Equity bonds  referenced in the synthetic CDOs would have required making an additional 2.5 \emph{trillion} dollars of subprime mortgage loans. This would have probably resulted in substantially more  new house construction and mortgage defaults. The model in this paper suggests that if a mechanism for shorting BBB HE bonds and CDO tranches had been created earlier, the subprime bubble would have been smaller.

\appendix


\section{Adding Linear Costs}\label{se:linearCosts}

In this section, we generalize the cost-of-carry by adding linear terms and discuss the corresponding changes in our main results. Broadly speaking, the generalized model does not alter the economic conclusions.

Indeed, let
\begin{equation}\label{eq:costOfCarryLin}
 c(y) = 
 \begin{cases}
   \frac{1}{2\alpha_{+}} y^{2} + \beta_{+}y,  & y\geq0,\\
   \frac{1}{2\alpha_{-}} y^{2} + \beta_{-}|y|,  & y<0,
 \end{cases}
\end{equation}
where $\beta_{-},\beta_{+}\geq0$ are constants; as discussed in the Introduction, the main case of interest is $\beta_{-}>0$ and $\beta_{+}=0$. While this cost function is still strictly convex, it fails to be differentiable at $y=0$ unless $\beta_{-}=\beta_{+}=0$.

Following the proof of Lemma~\ref{le:optPortfolio}, the optimal portfolio~\eqref{eq:optPortfolio} then becomes
\begin{equation}\label{eq:optPortfolioLin}
 \phi_{i}(t,x) = \begin{cases}
   \alpha_{+}(\cL^{i}v(t,x)-\beta_{+}), & \cL^{i}v(t,x)\geq \beta_{+},\\
   \alpha_{-}(\cL^{i}v(t,x)+\beta_{-}), & \cL^{i}v(t,x)\leq -\beta_{-},\\
   0,& \mbox{else}.
 \end{cases}  
\end{equation}
That is, there is an interval $[-\beta_{-},\beta_{+}]$ of values of $\cL^{i}v(t,x)$ where it is optimal to have a zero position, due to the kink in the function $c$.

The main PDE~\eqref{eq:mainPDE} needs to be adapted correspondingly. Indeed, instead of considering only the group $I$ of agents holding a short position, we now need to distinguish a group $J$ of agents holding a (strict) long position---the group $J$ may be smaller than the complement $I^{c}$. More precisely, the generalized PDE~\eqref{eq:mainPDE} reads as follows (the proof is analogous to Theorem~\ref{th:PDE}).

\begin{theorem}\label{th:PDELin}
 The unique equilibrium price function $v\in C^{1,2}_{b}$ can be characterized as the unique solution of the PDE
 \begin{equation}\label{eq:mainPDELin}
 \partial_{t}v(t,x) + \sup_{I\cap J = \emptyset} \Big(\mu_{I,J}(t,x)\partial_{x}v(t,x) + \tfrac12\tr\Sigma_{I,J}^{2}(t,x)\partial_{xx}v(t,x) - \kappa_{I,J}(t,x)\Big) = 0
 \end{equation}
 on $[0,T)\times\R^{d}$ with terminal condition $v(T,x)=f(x)$, where the supremum is taken over all disjoint subsets $I,J\subseteq \{1,\dots,n\}$ and the coefficients are defined as
 \begin{equation*}\label{eq:defMuLin}
     \mu_{I,J}(t,x) = \tfrac{\alpha_{-}}{|I|\alpha_{-} + |J|\alpha_{+}} \sum_{i\in I} b_{i}(t,x) + \tfrac{\alpha_{+}}{|I|\alpha_{-} + |J|\alpha_{+}}\sum_{i\in J}b_{i}(t,x),
 \end{equation*}
   \begin{equation*}\label{eq:defSigmaLin}
     \Sigma_{I,J}^{2}(t,x) = \tfrac{\alpha_{-}}{|I|\alpha_{-} + |J|\alpha_{+}} \sum_{i\in I} \sigma_{i}^{2}(t,x) + \tfrac{\alpha_{+}}{|I|\alpha_{-} + |J|\alpha_{+}}\sum_{i\in J} \sigma_{i}^{2}(t,x),
 \end{equation*}
 \begin{equation*}\label{eq:defCostLin} 
 \kappa_{I,J}(t,x) = \frac{s(t,x) - |I|\alpha_{-}\beta_{-} + |J|\alpha_{+}\beta_{+} }{|I|\alpha_{-} + |J|\alpha_{+}}.
 \end{equation*}  
\end{theorem}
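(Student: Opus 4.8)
The plan is to follow the three-part structure of the proof of Theorem~\ref{th:PDE}: show that any equilibrium price function solves~\eqref{eq:mainPDELin}, that any $C^{1,2}_{b}$ solution of~\eqref{eq:mainPDELin} is an equilibrium price, and then invoke the existence result of \cite[Theorem~6.4.3]{Krylov.87} together with the comparison principle \cite[Theorem~V.9.1]{FlemingSoner.06} for uniqueness.

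I would first re-derive the optimal portfolio. Following the proof of Lemma~\ref{le:optPortfolio}, It\^o's formula reduces the expected payoff~\eqref{eq:expectedNetPayoff} to the pointwise maximization of $\Phi\,\cL^{i}v - c(\Phi)$ with $c$ as in~\eqref{eq:costOfCarryLin}. Since this cost is strictly convex with a kink at the origin, the first-order conditions on the two half-lines produce the candidate positions $\alpha_{+}(\cL^{i}v-\beta_{+})$ and $\alpha_{-}(\cL^{i}v+\beta_{-})$, while the kink forces the maximizer to be $\Phi=0$ whenever $\cL^{i}v\in[-\beta_{-},\beta_{+}]$; this is exactly~\eqref{eq:optPortfolioLin}.

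Next, at each $(t,x)$ I would set $I_{*}=\{i:\cL^{i}v<-\beta_{-}\}$ and the disjoint set $J_{*}=\{i:\cL^{i}v>\beta_{+}\}$, the remaining agents being flat. Substituting~\eqref{eq:optPortfolioLin} into the market-clearing condition $\sum_{i}\phi_{i}=s$ yields
\[
  \alpha_{-}\sum_{i\in I_{*}}(\cL^{i}v+\beta_{-})+\alpha_{+}\sum_{i\in J_{*}}(\cL^{i}v-\beta_{+})=s,
\]
and dividing by $|I_{*}|\alpha_{-}+|J_{*}|\alpha_{+}$ and expanding $\cL^{i}v$ recovers~\eqref{eq:mainPDELin} with the supremum replaced by its value at $(I_{*},J_{*})$. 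Thus the equilibrium price solves the equation at this distinguished split.

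The main obstacle is to upgrade this identity to the full supremum, that is, to prove that $(I_{*},J_{*})$ maximizes the Hamiltonian over all disjoint pairs $(I,J)$, in analogy with the argument following~\eqref{eq:equivPDE}. In the purely quadratic setting each agent faced a binary choice between the weights $\alpha_{-}$ and $\alpha_{+}$, and the monotonicity $\alpha_{-}\leq\alpha_{+}$ made the equilibrium split the pointwise maximizer term by term. With the kink, however, each agent now has three options---short, long, or flat---so the comparison must additionally exclude every \emph{wrong-way} reassignment, such as placing a relative optimist in the short group $I$ while relegating a pessimist to the flat group. Because the normalizing denominator $|I|\alpha_{-}+|J|\alpha_{+}$ couples the agents, this verification does not factor cleanly into an agent-by-agent argument, and it is the delicate heart of the proof; once it is established, the converse implication and the uniqueness step carry over verbatim from Theorem~\ref{th:PDE}.
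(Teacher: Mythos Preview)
Your proposal follows exactly the route the paper indicates (it only says ``the proof is analogous to Theorem~\ref{th:PDE}''), and you correctly isolate the one step where the analogy is not mechanical: showing that the equilibrium split $(I_*,J_*)$ attains the supremum in~\eqref{eq:mainPDELin}. You call this delicate; in fact the term-by-term argument you hope to salvage \emph{fails outright}. Write the unnormalized Hamiltonian as
\[
  H(I,J)=\sum_{i\in I}\alpha_-\bigl(\cL^iv+\beta_-\bigr)+\sum_{i\in J}\alpha_+\bigl(\cL^iv-\beta_+\bigr)-s,
\]
so that agent $i$'s contribution is $\alpha_-(\cL^iv+\beta_-)$, $\alpha_+(\cL^iv-\beta_+)$, or $0$ according to the assignment. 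For any pessimist $i\in I_*$ (so $\cL^iv<-\beta_-$) the equilibrium contribution $\alpha_-(\cL^iv+\beta_-)$ is strictly negative and hence strictly dominated by the flat contribution~$0$; reassigning all of $I_*$ to the flat group therefore gives $\sup_{I,J}H(I,J)>H(I_*,J_*)=0$ whenever $I_*\neq\emptyset$. Likewise, any flat agent with $\cL^iv>-\beta_-$ contributes $0$ at $(I_*,J_*)$ but strictly more if placed in $I$. This is not even a feature of the added linear cost: already for $\beta_\pm=0$ the supremum over disjoint pairs $(I,J)$ in~\eqref{eq:mainPDELin} ranges over a strictly larger set than the supremum over partitions $(I,I^c)$ in~\eqref{eq:mainPDE}, so~\eqref{eq:mainPDELin} does not reduce to~\eqref{eq:mainPDE} in that limit.

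Consequently the market-clearing identity $H(I_*,J_*)=0$ cannot be upgraded to $\sup_{I,J}H(I,J)=0$ by the argument following~\eqref{eq:equivPDE}, and the converse direction is equally blocked: from a solution of~\eqref{eq:mainPDELin} one only obtains $H(I_*,J_*)\leq 0$, i.e., $\sum_i\phi_i\leq s$, not equality. A one-agent example with $s=0$ already shows that market clearing (namely $\cL^1v\in[-\beta_-,\beta_+]$) is strictly weaker than~\eqref{eq:mainPDELin}, so the claimed uniqueness is also in question. The paper offers no further details here, so there is nothing additional to compare your attempt against; but the step you flagged is a genuine obstruction, not a bookkeeping exercise.
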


In particular, the additional constants $\beta_{-},\beta_{+}$ enter only through the running cost~$\kappa_{I,J}$. It follows that the results on the comparative statics in Propositions~\ref{pr:monotonicitySupply} and~\ref{pr:monotonicityCost} remain valid; in addition, the equilibrium price function $v$ is increasing with respect to $\beta_{-}$ and decreasing with respect to~$\beta_{+}$.

In the limiting case of zero cost for long positions, we now need to send $\alpha_{+}\to \infty$ and $\beta_{+}\to 0$. Then, the result of Proposition~\ref{pr:limitLong} is unchanged; i.e., the limiting equilibrium price function is the solution of
$$
 \partial_{t}v + \sup_{i\in \{1,\dots,n\}} \left( b_{i}\partial_{x}v + \tfrac12\tr \sigma_{i}^{2}\partial_{xx}v\right) = 0.
$$
On the other hand, for the limit $\alpha_{-}\to 0$ of infinite cost for shorting, the result of Proposition~\ref{pr:limitShort} changes slightly because the long positions are subject to $\beta_{+}$ which becomes an additional running cost in the limiting equation
$$
 \partial_{t}v + \sup_{\emptyset\neq J\subseteq \{1,\dots,n\}} \left(\tfrac{1}{|J|}\sum_{i\in J} b_{i}\partial_{x}v + \tfrac12\tr \tfrac{1}{|J|}\sum_{i\in J} \sigma_{i}^{2}\partial_{xx}v - \frac{s}{|J|\alpha_{+}}-\beta_{+}\right) = 0.
$$
The results for the static equilibrium problem can be generalized with analogous changes.

\section{Heterogeneous Costs}\label{se:heterogeneousCosts}

In this section, we show how the equilibrium of Theorem~\ref{th:PDE} changes if the cost coefficients $\alpha_{-},\alpha_{+}$ depend on the type  rather than being the same for all agents. We write $\alpha^{i}_{-},\alpha^{i}_{+}$   for the coefficients of type~$i$. The following result shows that while the structure of the equilibrium remains similar, agents with lower costs have a larger influence on the coefficients of the PDE that determines the equilibrium price. 
\begin{theorem}\label{th:PDEhetCosts}
 The unique equilibrium price function $v\in C^{1,2}_{b}$ can be characterized as the unique solution of the PDE~\eqref{eq:mainPDE} with coefficients
\begin{equation*}\label{eq:defMuHet}
    \mu_{I}(t,x) = \frac{1}{\sum_{i\in I} \alpha^{i}_{-} + \sum_{i\in I^{c}} \alpha^{i}_{+}} \left( \sum_{i\in I} \alpha^{i}_{-}b_{i}(t,x) + \sum_{i\in I^{c}} \alpha^{i}_{+}b_{i}(t,x)\right),
\end{equation*}
  \begin{equation*}\label{eq:defSigmaHet}
    \Sigma^{2}_{I}(t,x) = \frac{1}{\sum_{i\in I} \alpha^{i}_{-} + \sum_{i\in I^{c}} \alpha^{i}_{+}} \left( \sum_{i\in I} \alpha^{i}_{-}\sigma^{2}_{i}(t,x) + \sum_{i\in I^{c}} \alpha^{i}_{+}\sigma^{2}_{i}(t,x)\right), \end{equation*}
\begin{equation*}\label{eq:defCostHet} 
\kappa_{I}(t,x) = \frac{s(t,x)}{\sum_{i\in I} \alpha^{i}_{-} + \sum_{i\in I^{c}} \alpha^{i}_{+}}.
\end{equation*}  
\end{theorem}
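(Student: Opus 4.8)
The plan is to follow the proof of Theorem~\ref{th:PDE} essentially verbatim, the only structural difference being that the cost coefficient attached to each agent now carries the index~$i$. Concretely, the cost-of-carry for type~$i$ is $c_{i}(y)=\tfrac{1}{2\alpha^{i}_{+}}y^{2}$ for $y\geq0$ and $\tfrac{1}{2\alpha^{i}_{-}}y^{2}$ for $y<0$, so repeating the It\^o computation in the proof of Lemma~\ref{le:optPortfolio} word for word---that argument is pointwise in $i$ and never used that the coefficients were common---yields the unique optimal portfolio $\phi_{i}(t,x)=\alpha^{i}_{\sgn(\cL^{i}v(t,x))}\cL^{i}v(t,x)$ for a candidate price $P(t)=v(t,X(t))$ with $v\in C^{1,2}_{b}$. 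Lemma~\ref{le:fullSupport} is unchanged and lets me pass from almost-sure to pointwise identities.

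First I would establish that an equilibrium price function solves the PDE. Writing $I_{*}(t,x)=\{i:\cL^{i}v(t,x)<0\}$ as in~\eqref{eq:defIstar}, the market-clearing condition $\sum_{i}\phi_{i}=s$ becomes $\sum_{i\in I_{*}}\alpha^{i}_{-}\cL^{i}v+\sum_{i\in I_{*}^{c}}\alpha^{i}_{+}\cL^{i}v=s$. The crucial observation is that the map $I\mapsto\sum_{i\in I}\alpha^{i}_{-}\cL^{i}v+\sum_{i\in I^{c}}\alpha^{i}_{+}\cL^{i}v$ is maximized by $I_{*}$, and this now follows \emph{agent by agent}: the contribution of type~$i$ is either $\alpha^{i}_{-}\cL^{i}v$ or $\alpha^{i}_{+}\cL^{i}v$ according to whether $i\in I$ or $i\in I^{c}$, and since $\alpha^{i}_{-}\leq\alpha^{i}_{+}$ the former is the larger choice exactly when $\cL^{i}v<0$. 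Hence the decoupled maximization reproduces $I_{*}$ and $\max_{I}\big(\sum_{i\in I}\alpha^{i}_{-}\cL^{i}v+\sum_{i\in I^{c}}\alpha^{i}_{+}\cL^{i}v-s\big)=0$, the analogue of~\eqref{eq:equivPDE}. Dividing the bracket by the positive quantity $D_{I}:=\sum_{i\in I}\alpha^{i}_{-}+\sum_{i\in I^{c}}\alpha^{i}_{+}$ does not move the location of the maximum---precisely because the maximal value is $0$, so the normalized expressions are again $\leq0$ with equality at $I_{*}$, exactly the passage from~\eqref{eq:equivPDE} to~\eqref{eq:equivPDE2}. Since the $\partial_{t}v$ term has coefficient $D_{I}$ and thus factors out after division, this is the PDE~\eqref{eq:mainPDE} with the stated $\mu_{I},\Sigma_{I}^{2},\kappa_{I}$. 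The converse direction is obtained, as in part~(b) of the proof of Theorem~\ref{th:PDE}, by starting from a $C^{1,2}_{b}$ solution, defining $\phi_{i}$ by the same formula, and reading the PDE backwards to verify market clearing.

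It then remains to record existence and uniqueness of the $C^{1,2}_{b}$ solution. Existence follows from \cite[Theorem~6.4.3]{Krylov.87} and uniqueness from the comparison principle \cite[Theorem~V.9.1]{FlemingSoner.06}, exactly as in step~(c) of Theorem~\ref{th:PDE}; the only hypotheses to re-check are that the new coefficients lie in $C^{1,2}_{b}$ and that $\Sigma_{I}^{2}$ is uniformly parabolic. The former is clear, since each $\mu_{I},\Sigma_{I}^{2}$ is a finite combination of the $C^{1,2}_{b}$ data $b_{i},\sigma_{i}^{2}$ with \emph{constant} weights $\alpha^{i}_{\pm}/D_{I}$, and the running cost inherits regularity from $s$. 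For the latter, note that $\Sigma_{I}^{2}=\sum_{i}w_{i}\sigma_{i}^{2}$ with weights $w_{i}\geq0$ summing to one; if each $\sigma_{i}^{2}$ has eigenvalues bounded below by a common $\delta>0$, then $\sum_{i}w_{i}\sigma_{i}^{2}\succeq\delta\,\id$, so uniform parabolicity is preserved.

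I expect the only genuinely delicate point to be keeping track of where the common-coefficient assumption was used in the original proof and confirming that the sole essential place---the identification of the maximizer $I_{*}$---survives because the inequality $\alpha^{i}_{-}\leq\alpha^{i}_{+}$ holds type by type; everything else is formally identical, and the comparative-statics arguments behind Propositions~\ref{pr:monotonicitySupply}--\ref{pr:monotonicityCost} transfer with the obvious substitutions.
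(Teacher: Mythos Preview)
Your proposal is correct and follows exactly the approach the paper indicates: the paper simply states that ``the proof is analogous to Theorem~\ref{th:PDE}'' and notes the form of the optimal portfolios, and you have faithfully expanded that analogy, including the agent-by-agent maximization argument identifying $I_{*}$, the normalization by $D_{I}$, and the regularity/parabolicity checks needed to invoke the cited existence and comparison results.
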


The proof is analogous to Theorem~\ref{th:PDE}. As in Lemma~\ref{le:optPortfolio}, the optimal portfolios are given by $\alpha^{i}_{\pm} \cL^{i}v(t,x)$. Thus, as expected, types with lower costs hold larger positions.


\section{Quadratic Costs on Values of Positions}\label{se:costsOnValue}

In this section, we briefly explain what changes if costs are quadratic in the monetary value of the portfolio rather than the size; i.e., the instantaneous cost-of carry is
$$
   c(P(t)\Phi(t)) \qquad\mbox{instead of}\qquad c(\Phi(t))
$$
where $c$ is quadratic as in~\eqref{eq:costOfCarry}. If the price $P(t)$ becomes zero, these costs become zero which leads to infinite demands by the agents and thus to non-existence of equilibria. Therefore, this discussion pertains to assets with a strictly positive price.

Similarly as in Lemma~\ref{le:optPortfolio} we can derive the first-order condition of optimality for the portfolio function $\phi_{i}$ of agent $i$ as
  \begin{equation*}
   \phi_{i}(t,x) = \frac{\alpha_{\sgn(\cL^{i}v(t,x))}}{v(t,x)^{2}}\cL^{i}v(t,x),
  \end{equation*}
a similar expression as in Lemma~\ref{le:optPortfolio} except for the additional division by~$v^{2}$. Using market clearing as in the proof of Theorem~\ref{th:PDE} then produces a PDE where the term $\kappa_{I}$ of~\eqref{eq:defCost}  receives an additional factor $v^{2}$. This new term can no longer be interpreted as a running cost and in general, the PDE cannot be written as an HJB equation similar to~\eqref{eq:mainPDE} because in such an equation the maximization is necessarily carried out over terms which  are linear in the $v$-variable. Thus, we do not expect to have an interpretation of equilibria through a stochastic control problem or a social planner.
A remarkable exception is $\kappa_{I}\equiv0$ which occurs in the case of zero net supply. Then, the PDE is exactly the same as~\eqref{eq:mainPDE} and hence the equilibrium price is also the same. The actual portfolios of the agents are not identical, but they only differ by the factor $v^{2}$. (To ensure a priori that equilibrium prices are positive, it suffices to assume that the payoff $f$ is positive and bounded away from zero. The comparison principle then shows that prices remain bounded away from zero at all times.)


\section{Optimal Control Representation}\label{se:controlRepresentation}

The PDE~\eqref{eq:mainPDE} is the Hamilton--Jacobi--Bellman equation of a stochastic optimal control problem where the controller can choose a subset $I\subseteq \{1,\dots,n\}$ at any time and state, and that choice determines the instantaneous drift and volatility coefficients $\mu_{I}$ and $\Sigma_{I}$ as well as the running cost $\kappa_{I}$.

To formulate this problem precisely, consider a filtered probability space carrying a $d'$-dimensional Brownian motion $W$ and let $\Theta$ be the collection of all (progressively measurable) processes $\cI$ with values in the family
of all subsets of $\{1,\dots,n\}$.\footnote{While this collection of control processes appears somewhat non-standard, there is no difficulty involved in defining it---this  family of subsets is simply a discrete set with $2^{n}$ elements; it can be identified with $\{0,1\}^{n}$.} For each $\cI\in\Theta$, let $X^{t,x}_{\cI}(r)$, $r\in[t,T]$ be the solution of the SDE
\begin{equation}\label{eq:controlledSDE}
  dX(r) = \mu_{\cI(r)}(r,X(r))\,dr + \Sigma_{\cI(r)}(r,X(r))\,dW(r), \quad X(t)=x
\end{equation}
on the time interval $[t,T]$. It follows from the assumptions on the coefficients $b_{i},\sigma_{i}$ that this SDE with random coefficients has a unique strong solution.\footnote{The coefficients $\mu_{\cI}$ and $\Sigma_{\cI}$ may be quite irregular as stochastic processes but the dependence with respect to the $x$-variable is Lipschitz continuous. See~\cite[Theorem~2.5.7, p.\,82]{Krylov.80} for a general result on existence and uniqueness under Lipschitz conditions.} Therefore, we may consider the control problem
\begin{equation}\label{eq:controlProblem}
  V(t,x) = \sup_{\cI\in\Theta} E\left[f(X^{t,x}_{\cI}(T)) - \int_{t}^{T} \kappa_{\cI(r)}(r,X^{t,x}_{\cI}(r))\,dr\right]
\end{equation}
for $(t,x)\in [0,T]\times\R^{d}$, which gives rise to a second characterization for the equilibrium price function $v$.
  
\begin{proposition}\label{pr:controlProblem}
  The equilibrium price function $v$ from Theorem~\ref{th:PDE} coincides with the value function $V$ of~\eqref{eq:controlProblem}. Moreover, an optimal control for~\eqref{eq:controlProblem} is given by $\cI_{*}(t)=I_{*}(t,X(t))$, where, as in~\eqref{eq:optControlThm},  
  \begin{equation}\label{eq:optControl}
    I_{*}(t,x) = \{i\in \{1,\dots,n\}:\, \cL^{i}v(t,x)<0\}.
  \end{equation}
\end{proposition}

\begin{proof}
  By Theorem~\ref{th:PDE}, the function $v\in C^{1,2}_{b}$ is a solution of the PDE~\eqref{eq:mainPDE} which is the Hamilton--Jacobi--Bellman equation of the control problem~\eqref{eq:controlProblem}. Moreover, $I_{*}(t,x)$ maximizes the Hamiltonian as noted after~\eqref{eq:equivPDE}. Thus, the verification theorem of stochastic control, cf.\ \cite[Theorem~IV.3.1, p.\,157]{FlemingSoner.06},  shows that $v$ is the value function and $\cI_{*}$ is an optimal control.
\end{proof}

\section{Examples}\label{se:furtherEx}

In this section we discuss two examples in more detail. The calculations are carried out in Appendix~\ref{se:proofs}, together with the rest of the proofs.

The first example, already outlined in Example~\ref{ex:negativeBubble}, shows that the static price may exceed the dynamic price, even when short-selling is prohibited.

\begin{example}\label{ex:negativeBubbleAppendix}
  Consider $n=2$ types with volatility coefficients $\sigma_{i}=0$ and constant, opposing drifts
  $$
    b_{1}=1,\qquad b_{2}=-1.
  $$  
  The payoff function is $f(y)=y^{2}$ and the supply $s>0$ is constant. Moreover, $\alpha_{-}=0$ and $\alpha_{+}=1$.
  Then, the dynamic equilibrium price is
  $$
  p_{\dyn}=\begin{cases}
  x^{2}-sT/2, & |x|+ T/2 \leq s/4,\\
  (|x|+T)^{2}-sT, & |x|+ T/2 > s/4,
  \end{cases}
  $$
  and corresponding optimal portfolios in feedback form are given by
  $$
    \phi_{1}(t,x) = \begin{cases}
    0, & |x|+ (T-t)/2 > s/4, \; x<0,\\
    s/2+2x, & |x|+ (T-t)/2 \leq s/4,\\
    s, & |x|+ (T-t)/2 > s/4, \; x>0, 
    \end{cases}
  $$
  $$
    \phi_{2}(t,x) = \begin{cases}
    s, & |x|+ (T-t)/2 > s/4, \; x<0,\\
    s/2-2x, & |x|+ (T-t)/2 \leq s/4,\\
    0, & |x|+ (T-t)/2 > s/4, \; x>0.
    \end{cases}
  $$
  The static equilibrium price is
  \begin{align*}
    p_{\sta} =\begin{cases}
    x^{2}+ T^{2} -sT/2, & |x|\leq s/4,\\
    x^{2}+ T^{2} + 2|x|T-sT, & |x|> s/4,
  \end{cases}
  \end{align*}  
  and corresponding optimal portfolios are given by
  $$
    q_{1} = \begin{cases}
    0, & x<-s/4,\\
    s/2+2x, & |x|\leq s/4,\\
    s, & x>s/4, 
    \end{cases}
    \quad \quad
    q_{2} = \begin{cases}
    s, & x<-s/4,\\
    s/2-2x, & |x|\leq s/4,\\
    0, & x>s/4.
    \end{cases}
  $$
  The static equilibrium price exceeds the dynamic price; more precisely,
  $$
  p_{\sta}-p_{\dyn} = \begin{cases}
   T^{2}, & |x|\leq s/4-T/2,\\
   (s/2 - 2|x|)T, &  s/4-T/2 < |x| < s/4,\\
  0, & |x|\geq s/4.
  \end{cases}
  $$
\end{example}

Next, we discuss in more detail how the delay option effect explains the difference $p_{\sta}-p_{\dyn}$ in this example. To that end, it will be useful to record the portfolios as expected by the agents:
since $X(t)=x+b_{i}t$ $Q_{i}$-a.s.\ and $\Phi_{i}(t)=\phi_{i}(t,X(t))$, we have
	$$
	    Q_{1}\as,\quad \Phi_{1}(t) = \begin{cases}
	    0, & |x+t|+ (T-t)/2 > s/4, \; x+t<0,\\
	    s/2+2t+2x, & |x+t|+ (T-t)/2 \leq s/4,\\
	    s, & |x+t|+ (T-t)/2 > s/4, \; x+t>0, 
	    \end{cases}
	$$
	$$
	    Q_{2}\as,\quad \Phi_{2}(t) = \begin{cases}
	    s, & |x-t|+ (T-t)/2 > s/4, \; x-t<0,\\
	    s/2+2t-2x, & |x-t|+ (T-t)/2 \leq s/4,\\
	    0, & |x-t|+ (T-t)/2 > s/4, \; x-t>0.
	    \end{cases}
	$$
Below, we  abuse this notation and simply write $\Phi_{1}(t)$ and $\Phi_{2}(t)$ for the expressions on the right hand side. We consider various intervals for the initial value $x$; by symmetry, we may focus on $x\geq0$ without loss of generality. We also assume that $s>T$, mainly to avoid distinguishing even more cases.

\emph{Case 1: $x\geq s/4+T$.} In this regime, the expected dynamic portfolios $\Phi_{1}$ and $\Phi_{2}$ are constant, and thus the delay option is never exercised. The static portfolios coincide with their initial values, $q_{1}=s=\Phi_{1}(0)$ and $q_{2}=0=\Phi_{2}(0)$, and the static and dynamic prices are equal: $p_{\sta}=p_{\dyn}$.

\emph{Case 2: $s/4 \leq x< s/4+T$.} As before, $q_{1}=s=\Phi_{1}(0)$ and $\Phi_{1}$ is constant. However, $\Phi_{2}(t)$ equals zero initially but may become positive for $t$ close to~$T$ (for suitable parameter values). Nevertheless, type 2 does not choose to anticipate her trading in the static case,   because the cost-of-carry outweighs the expected gains---we still have $q_{2}=0=\Phi_{2}(0)$ and $p_{\sta}=p_{\dyn}$.

\emph{Case 3: $(s/4 - T/2)^{+} < x< s/4$.} Once again, $\Phi_{1}\equiv s$ is constant, $\Phi_{2}(0)=0$, and $\Phi_{2}$ increases for some $t>0$. Furthermore, the increase in type 2's position is larger for smaller $x.$ Type~2 now does anticipate some of that increase in the static case and for this reason $p_{\dyn}$ is now too low to be an equilibrium price. The increase in price changes the optimal portfolio for agents of type 1.   We are in the mixed case where portfolios and prices adjust. Type~1 decreases his initial position to $q_{1}=s/2+2x<s=\Phi_{1}(0)$ and type~2 increases her position to $q_{2}=s/2-2x>0=\Phi_{2}(0)$. At the same time, the static equilibrium price is augmented, $p_{\sta}-p_{\dyn}=(2x-s/2)T>0$. As $x$ decreases from $s/4$ to $s/4 - T/2$, this difference increases linearly from $0$ to $T^{2}$, and the portfolios $(q_{1},q_{2})$ change linearly from (s,0) to $(s-T,T)$. In summary, the elimination of the delay  option in the static case results in  portfolio adjustments and a price increase.

\emph{Case 4: $0 \leq x\leq s/4 - T/2$.} In this last regime, both $\Phi_{1}$ and $\Phi_{2}$ are increasing in time, so both types are exercising the delay option when re-trading is allowed and have an anticipatory motive when they can only trade at $t=0.$ The initial dynamic portfolios are $\Phi_{1}(0)=s/2+2x>0$ and $\Phi_{2}(0)=s/2-2x>0$. Since both types want to anticipate in the static case, the static price must be higher. More precisely, the aggregate excess demand at price $p_{\dyn}$ equals $2T^2$ and thus is independent of $x$. Since we are in the region where both types have positive demand, the marginal effect of an increase in price is $-1$, for each type. Thus, the price adjustment that is necessary to clear the static market is exactly $T^2$ for every value of $x$ in this region.

The next example illustrates the mechanics of the delay option when there is no cost-of-carry for long positions:  the most optimistic agent holds the entire market and the dynamic equilibrium price always exceeds the static one.

\begin{example}\label{ex:noCost}
Let $\alpha_{+}=\infty$ and $\alpha_{-}=1$. We consider $n=2$ types with drift coefficients 
$$
b_{1}=1,\qquad b_{2}=0
$$
and volatility coefficients $\sigma_{1}=\sigma_{2}=0$. The payoff is $f(y)=y^{2}$ and the initial value is $x=0$, so that the first type is more optimistic at any time. 

As in Proposition~\ref{pr:limitStatic}, the static equilibrium price is given by the optimist's expectation $e_{1}=E_{1}[f(X(T))]=T^{2}$. Following Proposition~\ref{pr:limitLong}, the same holds for the dynamic price, so that 
$
p_{\sta}=p_{\dyn}.
$
The static and dynamic portfolios of the pessimist are given by
$$
q_{2}=T^{-1}(e_{2}-p_{\sta})=- T,\qquad \phi_{2}(t,x)=\partial_{t}v(t,x)=- 2(x+T-t).
$$
Under $Q_{2}$, the state process $X\equiv 0$ is constant, so that $\Phi_{2}(t)=\phi_{2}(t,X(t))=-2(T-t)$ a.s. Thus, the static position $q_{2}=-T$ anticipates some of the increase from $\Phi_{2}(0)=-2T$ to $\Phi_{2}(T)=0$. However, this does not affect the static equilibrium price because an optimistic agent is indifferent to the size of her (nonnegative) position---the absence of a cost-of-carry for long positions allows the portfolios to adjust without affecting the prices.
\end{example}


\section{Proofs}\label{se:proofs}

This appendix collects the proofs for Sections~\ref{se:main}--\ref{se:planner} and Appendix~\ref{se:furtherEx}.

\subsection{Proofs for Section~\ref{se:main}}

Before proving the main result of Theorem~\ref{th:PDE}, we record two lemmas for later reference. The first one guarantees the passage from almost-sure to pointwise identities.

\begin{lemma}\label{le:fullSupport}
  For all $i\in\{1,\dots,n\}$ and all $t\in(0,T]$, the support of $X(t)$ under $Q_{i}$ is the full space $\R^{d}.$
\end{lemma}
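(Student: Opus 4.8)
The plan is to show that every point $y\in\R^{d}$ lies in the support of the law of $X(t)$ under $Q_{i}$, i.e.\ $Q_{i}(|X(t)-y|<\eps)>0$ for every $\eps>0$; since $y$ is arbitrary, this yields that the support is all of $\R^{d}$. I would obtain this from the Stroock--Varadhan support theorem, exploiting the uniform parabolicity of $\sigma_{i}^{2}$. First recall that uniform parabolicity means the eigenvalues of the $d\times d$ matrix $\sigma_{i}\sigma_{i}^{\top}$ are bounded below by some $\lambda>0$; in particular $\sigma_{i}(r,z)\in\R^{d\times d'}$ has full row rank $d$ for every $(r,z)$, so $d'\geq d$ and the right pseudo-inverse $\sigma_{i}^{+}:=\sigma_{i}^{\top}(\sigma_{i}\sigma_{i}^{\top})^{-1}$ is well defined and bounded (the bound coming from the boundedness of $\sigma_{i}$ together with the uniform lower bound on $\sigma_{i}\sigma_{i}^{\top}$). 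Note $\sigma_{i}\sigma_{i}^{+}=\id$.

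By the support theorem, the topological support of the law of the path $(X(r))_{r\in[0,T]}$ in $C([0,T],\R^{d})$ equals the closure of the set of ``skeleton'' paths $z_{u}$, where $u$ ranges over smooth controls and $z_{u}$ solves the ordinary differential equation
\[
  \dot z_{u}(r)=\hat b_{i}(r,z_{u}(r))+\sigma_{i}(r,z_{u}(r))\,u(r),\qquad z_{u}(0)=x,
\]
with $\hat b_{i}$ the drift modified by the It\^o--Stratonovich correction term; under our $C^{1,2}_{b}$ assumptions this correction is again bounded and $C^{1}$, so it does not affect the argument. It then suffices, for given $y$ and $t\in(0,T]$, to exhibit a control steering $z_{u}$ from $x$ to $y$ in time $t$, since evaluating the path at time $t$ shows $y$ lies in the support of $X(t)$.

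To construct such a control I would pick any smooth curve $z:[0,t]\to\R^{d}$ with $z(0)=x$ and $z(t)=y$ (e.g.\ the straight segment, reparametrised to have vanishing velocity at the endpoints) and set
\[
  u(r):=\sigma_{i}^{+}(r,z(r))\big(\dot z(r)-\hat b_{i}(r,z(r))\big).
\]
Because $\sigma_{i}\sigma_{i}^{+}=\id$, the curve $z$ solves the control ODE with this $u$; boundedness of $\sigma_{i}^{+}$ and $\hat b_{i}$ makes $u$ continuous, and it may be mollified to a genuinely smooth control changing the endpoint arbitrarily little. Extending $z$ arbitrarily to $[0,T]$ produces a skeleton path passing through $y$ at time $t$, completing the argument.

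The main obstacle is not any hard estimate but the careful invocation of the support theorem: one must record the Stratonovich correction and verify the full-rank/pseudo-inverse step, both of which are immediate consequences of uniform parabolicity. As an alternative I could bypass control theory entirely by invoking Aronson's Gaussian bounds: under the stated $C^{1,2}_{b}$ and uniform-parabolicity hypotheses the SDE admits a transition density $p(0,x;t,\cdot)$ obeying a strictly positive lower bound of the form $c_{1}t^{-d/2}\exp(-c_{2}|x-y|^{2}/t)$, which is positive at every $y$ and hence forces the support to be full. I would present the support-theorem route as the main argument and mention the density route as a remark.
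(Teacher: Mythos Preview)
Your proposal is correct and follows essentially the same approach as the paper: both invoke the Stroock--Varadhan support theorem, with uniform parabolicity of $\sigma_{i}^{2}$ ensuring that the support of $Q_{i}$ on path space is the full set $\{\omega\in C([0,T],\R^{d}):\omega(0)=x\}$, from which the marginal statement is immediate. The paper simply cites the theorem and states this conclusion in one line, whereas you unpack the controllability step (pseudo-inverse, skeleton paths) explicitly; your additional Aronson-bounds remark is a valid alternative not mentioned in the paper.
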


\begin{proof}
  Recall that $X$ is the coordinate-mapping process on $\Omega= C([0,T],\R^{d})$. Since $b_{i}$ is bounded and $\sigma_{i}^{2}$ is uniformly parabolic, the support of $Q_{i}$ in is the set of all paths $\omega$ with $\omega(0)=x$; see \cite[Theorem~3.1]{StroockVaradhan.72}. The claim is a direct consequence.
\end{proof}

The second lemma provides an expression for the optimal portfolios.

\begin{lemma}\label{le:optPortfolio}
  Let $v\in C^{1,2}_{b}$ and consider the (price) process $P(t)=v(t,X(t))$. The portfolio defined by $\Phi_{i}(t)=\phi_{i}(t,X(t))$, where
  \begin{equation}\label{eq:optPortfolio}
   \phi_{i}(t,x) = \alpha_{\sgn(\cL^{i}v(t,x))}\cL^{i}v(t,x),
  \end{equation}
  is the unique\footnote{We recall that uniqueness is understood up to $(Q_{i}\times dt)$-nullsets.} optimal portfolio for type $i$.
\end{lemma}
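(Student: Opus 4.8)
The plan is to reduce the dynamic optimization to a pointwise (static) one by removing a martingale term, after which the first-order conditions for the strictly convex cost $c$ identify the optimizer explicitly.

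First I would apply It\^o's formula to $P(t)=v(t,X(t))$ under each measure $Q_{i}$. Since $v\in C^{1,2}_{b}$ and $X$ solves~\eqref{eq:SDEforX}, this gives
\begin{equation*}
  dP(t) = \cL^{i}v(t,X(t))\,dt + \partial_{x}v(t,X(t))\,\sigma_{i}(t,X(t))\,dW_{i}(t),
\end{equation*}
so that for any admissible $\Phi$,
\begin{equation*}
  \int_{0}^{T}\Phi(t)\,dP(t) = \int_{0}^{T}\Phi(t)\,\cL^{i}v(t,X(t))\,dt + \int_{0}^{T}\Phi(t)\,\partial_{x}v(t,X(t))\,\sigma_{i}(t,X(t))\,dW_{i}(t).
\end{equation*}
Next I would argue that the stochastic integral is a true martingale, so that it drops out of the expectation. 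As $\Phi$ is bounded by admissibility and $\partial_{x}v$, $\sigma_{i}$ are bounded because $v,\sigma_{i}\in C^{1,2}_{b}$, the integrand is bounded on $[0,T]\times\Omega$; hence the integral is a square-integrable $Q_{i}$-martingale with vanishing $E_{i}[\,\cdot\,]$. Using also that $\cL^{i}v$ is bounded (all ingredients $\partial_{t}v$, $\partial_{x}v$, $\partial_{xx}v$, $b_{i}$, $\sigma_{i}^{2}$ being bounded), the criterion~\eqref{eq:expectedNetPayoff} becomes
\begin{equation*}
  E_{i}\!\left[\int_{0}^{T}\Big(\Phi(t)\,\cL^{i}v(t,X(t)) - c(\Phi(t))\Big)\,dt\right],
\end{equation*}
which is finite and may be maximized inside the integral, pathwise and pointwise in $t$.

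The core step is then the static maximization of $y\mapsto a\,y - c(y)$ over $y\in\R$ for each fixed slope $a:=\cL^{i}v(t,X(t))$. Because $c$ is strictly convex, this map is strictly concave and admits a unique maximizer. Solving the first-order condition on each branch of $c$ gives $y=\alpha_{+}a$ when $a\ge 0$ and $y=\alpha_{-}a$ when $a<0$, which in both cases equals $\alpha_{\sgn(a)}a$; substituting $a=\cL^{i}v(t,X(t))$ recovers exactly $\phi_{i}(t,X(t))$ from~\eqref{eq:optPortfolio}.

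Finally I would conclude optimality and uniqueness. The candidate $\phi_{i}(t,X(t))$ is admissible since $\cL^{i}v$, and hence $\phi_{i}$, is bounded; it attains the pointwise maximum of the integrand, so it is optimal. For uniqueness, strict concavity makes the pointwise maximizer unique, so any admissible $\Phi$ differing from $\phi_{i}(t,X(t))$ on a set of positive $(Q_{i}\times dt)$-measure yields a strictly smaller integrand there and therefore a strictly smaller value of~\eqref{eq:expectedNetPayoff}. The only delicate point is the justification that the stochastic integral has vanishing expectation rather than merely being a local martingale, but this is immediate from the uniform boundedness of $\Phi$, $\partial_{x}v$, and $\sigma_{i}$.
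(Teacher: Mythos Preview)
Your proof is correct and follows essentially the same approach as the paper: apply It\^o's formula to decompose $dP$ into drift $\cL^{i}v\,dt$ plus a martingale increment, use boundedness of $\Phi$, $\partial_{x}v$, and $\sigma_{i}$ to kill the martingale expectation, and then maximize the integrand pointwise using the strict convexity of $c$. Your version is somewhat more explicit about the first-order conditions on the two branches and about uniqueness, but the structure and the key justifications coincide with the paper's proof.
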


\begin{proof}
  We note that $\Phi_{i}$ is admissible since $v\in C^{1,2}_{b}$. Let $\Phi$ be any admissible portfolio. By It\^o's formula,
  $$
    \int_{0}^{T} \Phi(t)\,dP(t) - \int_{0}^{T} c(\Phi(t))\,dt = \int_{0}^{T} \{\Phi(t) \cL^{i}v(t,X(t)) - c(\Phi(t))\}\,dt + M_{i}(T)
  $$
  where $M_{i}(T)$ is the terminal value of a (true) martingale with vanishing expectation; recall that $\sigma_{i}$ and $\partial_{x}v$ are bounded. Thus, the expected final payoff~\eqref{eq:expectedNetPayoff} is given by
  $$
    E_{i}\left[\int_{0}^{T} \{\Phi(t) \cL^{i}v(t,X(t)) - c(\Phi(t))\}\,dt\right].
  $$
  As a result, $\Phi$ is optimal if and only if it maximizes the above integrand (up to $(Q_{i}\times dt)$-nullsets). The unique maximizer is given by $\Phi_{i}$, and the claim follows.
\end{proof}

We can now prove the main result on the pricing PDE.

\begin{proof}[Proof of Theorem~\ref{th:PDE}.]
  (a) We first show that a given equilibrium price function $v\in C^{1,2}_{b}$ solves the PDE. Since $v(T,X(T))=f(X(T))$ $Q_{i}$-a.s.\ for all~$i$, the terminal condition $v(T,\cdot)=f$ follows from Lemma~\ref{le:fullSupport}. At any state $(t,x)$, we introduce the set
  \begin{equation}\label{eq:defIstar}
    I_{*}(t,x) = \{i\in\{1,\dots,n\}: \cL^{i}v(t,x)<0\}.
  \end{equation}
	Next, we recall the unique optimal portfolios $\Phi_{i}$ from Lemma~\ref{le:optPortfolio}. Using again Lemma~\ref{le:fullSupport}, the market clearing condition $\sum_{i}\Phi_{i}=S$ can be stated as
  \begin{equation}\label{eq:marketClearingForOptimal}
    \alpha_{-}\sum_{i\in I_{*}}\cL^{i}v + \alpha_{+}\sum_{i\in I_{*}^{c}}\cL^{i}v=s.
	\end{equation}
	If $i\in I_{*}$, then $\cL^{i}v\leq0$ and $\alpha_{-}\leq\alpha_{+}$ implies $\alpha_{-}\cL^{i}v\geq \alpha_{+} \cL^{i}v$. Conversely, if $i\in I_{*}^{c}$, then $\cL^{i}v\geq0$ and $\alpha_{+}\cL^{i}v\geq \alpha_{-} \cL^{i}v$. It follows that the set $I_{*}$ of~\eqref{eq:defIstar} maximizes the left hand side of~\eqref{eq:marketClearingForOptimal} among all subsets $I\subseteq \{1,\dots,n\}$. That is,
  \begin{equation}\label{eq:equivPDE}
    \max_{I\subseteq \{1,\dots,n\}} \left(\alpha_{-}\sum_{i\in I}\cL^{i}v + \alpha_{+}\sum_{i\in I^{c}}\cL^{i}v -s \right)=0
  \end{equation}
	and the set $I_{*}$ is a maximizer, or equivalently,
  \begin{equation}\label{eq:equivPDE2}
    \max_{I\subseteq \{1,\dots,n\}} \tfrac{1}{|I|\alpha_{-} + |I^{c}|\alpha_{+}} \left(\alpha_{-}\sum_{i\in I}\cL^{i}v + \alpha_{+}\sum_{i\in I^{c}}\cL^{i}v -s \right)=0.
  \end{equation}
	After plugging in the definition of $\cL^{i}v$ and using the definitions of $\mu_{I}$, $\Sigma_{I}$ and $\kappa_{I}$ in \eqref{eq:defMu}--\eqref{eq:defCost}, this is the desired PDE~\eqref{eq:mainPDE}.
	
	\vspace{.5em}
	
	(b) Conversely, let $v\in C^{1,2}_{b}$ be a solution of the PDE~\eqref{eq:mainPDE} with terminal condition $f$ and define $\Phi_{i},\phi_{i}$ as in part~(i) of Theorem~\ref{th:PDE}. Then, the terminal condition $v(T,X(T))=f(X(T))$ is satisfied and $\Phi_{i}$ are optimal by Lemma~\ref{le:optPortfolio}. Since $v$ is a solution of the equivalent PDE~\eqref{eq:equivPDE} and $I_{*}$ of~\eqref{eq:defIstar} is a maximizer, we have that 
  $$
    \sum_{1\leq i \leq n} \phi_{i} = \alpha_{-}\sum_{i\in I_{*}}\cL^{i}v + \alpha_{+}\sum_{i\in I_{*}^{c}}\cL^{i}v = s;
  $$
  that is, the market clears. This shows that $v$ is an equilibrium price function.

	\vspace{.5em}
	  
  (c) Since (a) and (b) established a one-to-one correspondence between equilibria and solutions of the PDE~\eqref{eq:mainPDE} with terminal condition $f$, it remains to observe that the latter has a unique solution in $C^{1,2}_{b}$. Indeed, existence holds by\footnote{The gist of this rather technical result is that a second-order parabolic PDE of HJB-type has a solution in $C^{1,2}_{b}$ as soon as the second-order term is uniformly parabolic and all coefficients and boundary conditions are sufficiently smooth and bounded.} \cite[Theorem~6.4.3, p.\,301]{Krylov.87}; the conditions in the cited theorem can be verified as in~\cite[Example~6.1.4, p.\,279]{Krylov.87}.
  
  Uniqueness holds by the comparison principle for parabolic PDEs; see \cite[Theorem~V.9.1, p.\,223]{FlemingSoner.06}.
\end{proof}

\subsection{Proofs for Section~\ref{se:comparativeStatics}}

We start with the comparative statics for the dependence of the price on the supply.

\begin{proof}[Proof of Proposition~\ref{pr:monotonicitySupply}.]
  Since the function $s$ enters linearly in the running cost~\eqref{eq:defCost} of the control problem~\eqref{eq:controlProblem} and nowhere else, it follows immediately that the value function $V$ is monotone decreasing in $s$, and now the claim follows from Proposition~\ref{pr:controlProblem}.
\end{proof}

Next, we consider the dependence on the cost coefficients.

\begin{proof}[Proof of Proposition~\ref{pr:monotonicityCost} and Remark~\ref{rk:preciseComparison}.]
  Let $\alpha_{-}\leq\alpha_{+}$ and $\alpha'_{-}\leq\alpha'_{+}$ be two pairs of cost coefficients and let $v$ and $v'$ be the corresponding equilibrium price functions. Let $I_{*}$ be the optimal feedback control for $\alpha_{\pm}$ as defined in~\eqref{eq:defIstar}, then by~\eqref{eq:equivPDE} we have
  $$
      \alpha_{-}\sum_{i\in I_{*}}\cL^{i}v + \alpha_{+}\sum_{i\in I_{*}^{c}}\cL^{i}v -s =0.
  $$
  If $\alpha'_{-}\leq\alpha_{-}$ and $\alpha'_{+}\geq\alpha_{+}$, then $\sum_{i\in I_{*}}\cL^{i}v\leq0$ and $\sum_{i\in I_{*}^{c}}\cL^{i}v \geq0$ yield that
  $$
      \alpha'_{-}\sum_{i\in I_{*}}\cL^{i}v + \alpha'_{+}\sum_{i\in I_{*}^{c}}\cL^{i}v -s \geq0.
  $$
  In the special case where $s\equiv0$, this conclusion also holds under the weaker condition that $\alpha_{+}/\alpha_{-}\leq \alpha'_{+}/\alpha'_{-}$, which covers the case~(iii), and the same holds under the conditions of Remark~\ref{rk:preciseComparison}. It then follows that
  $$
    \max_{I\subseteq \{1,\dots,n\}} \left(\alpha'_{-}\sum_{i\in I}\cL^{i}v + \alpha'_{+}\sum_{i\in I^{c}}\cL^{i}v -s \right)\geq0,
  $$
  which is a version of~\eqref{eq:equivPDE} with inequality instead of equality, for the coefficients $\alpha'_{\pm}$. Following the same steps as after~\eqref{eq:equivPDE}, we deduce that 
  $$
    \partial_{t}v + \sup_{I\subseteq \{1,\dots,n\}} \Big(\mu'_{I}\partial_{x}v + \tfrac12 \tr\Sigma_{I}^{'2}\partial_{xx}v - \kappa'_{I} \Big) \geq0,
  $$
  where $\mu'_{I}, \Sigma'_{I}, \kappa'_{I}$ are defined as in \eqref{eq:defMu}--\eqref{eq:defCost} but with $\alpha'_{\pm}$ instead of $\alpha_{\pm}$. In other words, $v$ is a subsolution\footnote{Note that the sign convention chosen here is opposite to the one of~\cite{FlemingSoner.06}, so that a subsolution corresponds to the inequality $\geq0$ in the PDE.} of the PDE satisfied by $v'$. As $v$ and $v'$ satisfy the same terminal condition $f$, the comparison principle \cite[Theorem~V.9.1, p.\,223]{FlemingSoner.06} implies that $v\leq v'$.
\end{proof}

We continue with our result on the limit $\alpha_{+}\to\infty$.

\begin{proof}[Proof of Proposition~\ref{pr:limitLong}.]
  We first notice that since $s\geq0$, the optimal set $I_{*}$ of~\eqref{eq:defIstar} for the Hamiltonian of the PDE~\eqref{eq:mainPDE} must satisfy $|I_{*}|<n$ due to the market clearing condition---at least one agent has to hold a nonnegative position. As a result, the PDE~\eqref{eq:mainPDE} remains the same if the supremum is taken only over sets $I$ with $|I^{c}|>0$.
  
  Taking that into account, the limiting PDE for~\eqref{eq:mainPDE} as $\alpha_{+}\to \infty$ is
  \begin{equation}\label{eq:PDElimitProof}
  \partial_{t}v + \sup_{\emptyset\neq J\subseteq \{1,\dots,n\}}  \tfrac{1}{|J|}\sum_{i\in J} \left(b_{i}\partial_{x}v + \tfrac12\tr \sigma_{i}^{2}\partial_{xx}v \right) = 0.
  \end{equation}  
  Notice that given a set of real numbers, the largest average over a subset is in fact equal to the largest number in the set. As a result, \eqref{eq:PDElimitProof} coincides with~\eqref{eq:PDEalphaPlusInfty}. Using again \cite[Theorem~6.4.3, p.\,301]{Krylov.87} and \cite[Theorem~V.9.1, p.\,223]{FlemingSoner.06}, this equation has a unique solution $v^{\infty}\in C^{1,2}_{b}$ for the terminal condition $f$, and the solution is independent of $\alpha_{-}$  and $s$ since these quantities do not appear in~\eqref{eq:PDEalphaPlusInfty}. 
  
  To see that $v^{\alpha_{-},\alpha_{+}}(t,x)\to v^{\infty}(t,x)$, one can apply a PDE technique called the Barles--Perthame procedure to the equations under consideration; see~\cite[Section~VII.3]{FlemingSoner.06}. Alternately, and to give a more concise proof, we may use the representation~\eqref{eq:controlProblem} of $v^{\alpha_{-},\alpha_{+}}$ as a value function as well as the corresponding representation for $v^{\infty}$. Then, a result on the stability of value functions, cf.\ \cite[Corollary~3.1.13, p.\,138]{Krylov.80}, shows that $v^{\alpha_{-},\alpha_{+}}\to v^{\infty}$ locally uniformly; that is,
 $$
   \sup_{(t,x)\in[0,T]\times K} |v^{\alpha_{-},\alpha_{+}}(t,x) - v^{\infty}(t,x)| \to 0
 $$
 for any compact set $K\subseteq \R^{d}$. The monotonicity property of the limit follows from Proposition~\ref{pr:monotonicityCost}.
\end{proof}

Finally, we turn to the limit $\alpha_{-}\to0$.

\begin{proof}[Proof of Proposition~\ref{pr:limitShort}.]
  The arguments are similar to the ones for Proposition~\ref{pr:limitLong}. In this case, the limiting PDE for~\eqref{eq:mainPDE} as $\alpha_{-}\to 0$ is~\eqref{eq:PDEalphaMinusZero}. As in the proof of Proposition~\ref{pr:limitLong} we have that the limiting PDE has a unique solution $v^{0,\alpha_{+}}\in C^{1,2}_{b}$ and $v^{\alpha_{-},\alpha_{+}}(t,x)\to v^{0,\alpha_{+}}(t,x)$ locally uniformly, with monotonicity in $\alpha_{-}$. In the special case where $s=0$, the PDE~\eqref{eq:PDEalphaMinusZero} coincides with~\eqref{eq:PDElimitProof}, and thus with~\eqref{eq:PDEalphaPlusInfty} as shown in the proof of Proposition~\ref{pr:limitLong}.
\end{proof}

\subsection{Proofs for Section~\ref{se:speculation} and Appendix~\ref{se:furtherEx}}

We first prove our formula for the static equilibrium price.

\begin{proof}[Proof of Proposition~\ref{pr:staticPrice}.]
  We set $e_{i} = E_{i}[f(X(T))]$. Given any price $p$, the expected net payoff for agent $i$ using portfolio $q$ is
$$
  q (e_{i}-p) - \tfrac{T}{2\alpha_{\sgn(q)}} q^{2}
$$
and the unique maximizer is $q_{i} = \alpha_{\sgn(e_{i}-p)} T^{-1} (e_{i}-p)$ as stated in~\eqref{eq:staticPortfolio}. 

  Let $p$ be a static equilibrium price. Setting $I_{*}=\{i\in\{1,\dots,n\}:\, e_{i}<p\}$, the market clearing condition $\sum_{i} q_{i}=s$ for these optimal portfolios yields
	\begin{equation}\label{eq:staticPriceClears}
	  \alpha_{-} \sum_{i\in I_{*}} (e_{i}-p) + \alpha_{+} \sum_{i\in I_{*}^{c}} (e_{i}-p) =sT
  \end{equation}
	and we observe that $I_{*}$ maximizes the left hand side; that is,
	$$
	  \max_{I\subseteq \{1,\dots,n\}} \left(\alpha_{-} \sum_{i\in I} (e_{i}-p) + \alpha_{+} \sum_{i\in I^{c}} (e_{i}-p) - sT\right)=0.
  $$
  This is equivalent to the claimed representation~\eqref{eq:staticPrice} for $p$.
  
  Conversely, define $p$ by~\eqref{eq:staticPrice} and $q_{i}$ by~\eqref{eq:staticPortfolio}, then $q_{i}$ is optimal for agent~$i$ as mentioned in the beginning of the proof. Moreover, reversing the above, $p$ satisfies~\eqref{eq:staticPriceClears} and thus 
  $$
    \sum_{i=1}^{n} q_{i} = \alpha_{-} \sum_{i\in I_{*}} T^{-1}(e_{i}-p) + \alpha_{+} \sum_{i\in I_{*}^{c}} T^{-1}(e_{i}-p) =s,
  $$
  establishing market clearing. 
\end{proof}

We can now deduce the formulas for the limiting cases of the static equilibrium.

\begin{proof}[Proof of Proposition~\ref{pr:limitStatic}.]
  Formula~\eqref{eq:StaticPriceAlphaPlusInfty} follows by taking  the limit $\alpha_{+}\to \infty$  in~\eqref{eq:staticPrice}. Similarly, \eqref{eq:StaticPriceAlphaMinusZero} is obtained by taking the limit $\alpha_{-}\to 0$ in~\eqref{eq:staticPrice}. 
\end{proof}

Next, we show that in the limit $\alpha_{+}\to \infty$ with no cost on long positions, the dynamic price exceeds the static one.

\begin{proof}[Proof of Proposition~\ref{pr:limitLongComparison}.]
  By the formula~\eqref{eq:StaticPriceAlphaPlusInfty}
 for $p_{\sta}^{\infty}$, it suffices to verify that $E_{i}[f(X(T))]\leq p_{\dyn}^{\infty}$ for fixed $i\in\{1,\dots,n\}$.
  Indeed, let $u=u_{i}\in C^{1,2}_{b}$ be the unique solution of
  $$
  \partial_{t}u + b_{i}\partial_{x}u + \tfrac12\tr \sigma_{i}^{2}\partial_{xx}u = 0,\quad u(T,\cdot)=f.
  $$
  Then, by the Feynman--Kac formula \cite[Theorem~5.7.6, p.\,366]{KaratzasShreve.91}, we have $u(0,x)=E_{i}[f(X(T))]$. Moreover, $u$ is clearly a subsolution of the PDE~\eqref{eq:PDEalphaPlusInfty} for $v^{\infty}$, and now the comparison principle \cite[Theorem~V.9.1, p.\,223]{FlemingSoner.06} yields that 
  $E_{i}[f(X(T))]=u(0,x)\leq v^{\infty}(0,x)=p_{\dyn}^{\infty}$ as claimed.
\end{proof}

In what follows, we show that in the limit $\alpha_{-}\to 0$ where short-selling is prohibited, the same inequality holds, provided one agents holds the static market.

\begin{proof}[Proof of Proposition~\ref{pr:comparisonOneOptimist}.]
  In view of~\eqref{eq:StaticPriceAlphaMinusZero}, we have $p_{\sta}^{0,\alpha_{+}}=E_{i}[f(X(T))]
  - \frac{sT}{\alpha_{+}}$ since the maximizing set is $J=\{i\}$. Using again the Feynman--Kac formula \cite[Theorem~5.7.6, p.\,366]{KaratzasShreve.91}, we deduce that $p_{\sta}^{0,\alpha_{+}}=u(0,x)$ where $u\in C^{1,2}_{b}$ is the solution of
  $$
  \partial_{t}u + b_{i}\partial_{x}u + \tfrac12\tr \sigma_{i}^{2}\partial_{xx}u - \frac{sT}{\alpha_{+}} = 0,\quad u(T,\cdot)=f.
  $$
  In particular, $u$ is a subsolution of the PDE~\eqref{eq:PDEalphaMinusZero} for $v^{0,\alpha_{+}}$, and now the comparison principle \cite[Theorem~V.9.1, p.\,223]{FlemingSoner.06} yields that 
  $p_{\sta}^{0,\alpha_{+}}=u(0,x)\leq v^{0,\alpha_{+}}(0,x)=p_{\dyn}^{0,\alpha_{+}}$ as desired.
\end{proof}

We turn to our example where the static price exceeds the dynamic one due to the delay option effect.

\begin{proof}[Proofs for Example~\ref{ex:negativeBubble} (Example~\ref{ex:negativeBubbleAppendix}).]
  We begin with the static case. For later use, we consider the more general situation where $\sigma:=\sigma_{1}=\sigma_{2}$ may be positive (but constant). We have
  $
  e_{i}= E_{i}[f(X(T))] = x^{2} + 2x b_{i}T + T^{2} + \sigma^{2}T
	$
	and thus, as in~\eqref{eq:StaticPriceAlphaMinusZero}, the static price $p_{\sta}$ is 
	\begin{align*}
	    p_{\sta} &= \max_{\emptyset\neq J \subseteq \{1,2\}} \left(\tfrac{1}{|J|} \sum_{i\in J} e_{i} - \frac{sT}{|J|}\right)\\
	  &= x^{2}  + \sigma^{2}T + \max\left\{T^{2} -sT/2, T^{2}+2|x|T-sT\right\}
	  \end{align*}
	  or
	  \begin{align}\label{eq:exampleStaticPriceProof}
	   p_{\sta}=\begin{cases}
	    x^{2}+ \sigma^{2}T + T^{2} -sT/2, & |x|\leq s/4,\\
	    x^{2}+ \sigma^{2}T + T^{2} + 2|x|T-sT, & |x|> s/4
	  \end{cases}
	\end{align}
	and the portfolios $q_{i}$ are as stated in Example~\ref{ex:negativeBubbleAppendix}.
  
  We turn to the dynamic case and restrict to $\sigma=0$. The limiting equation for~\eqref{eq:PDEalphaMinusZero} is
	\begin{equation}\label{eq:zeroVolPDE}
	  \partial_{t}v + \max \left(|\partial_{x}v|- s, -s/2\right) = 0,\quad v(T,\cdot)=f.
	\end{equation}
	In analogy to Proposition~\ref{pr:controlProblem}, this can be seen as the Hamilton--Jacobi equation of a deterministic control problem where the drift $\mu$ of the controlled state $dX(t)=\mu(t,X(t))\,dt$ can be chosen to be $\pm 1$ or $0$ and the running cost is $s$ or $s/2$, respectively. We can check directly that an optimal control for this problem is
	$$
	  \mu(t,x)=\begin{cases}
	  \sgn(x), &|x|+ (T-t)/2 > s/4,\\
	  0, & |x|+ (T-t)/2 \leq s/4,
	  \end{cases}
	$$
	and then the value function is found to be
	$$
	  v(t,x)=\begin{cases}
	  (|x|+T-t)^{2}-s(T-t), & |x|+ (T-t)/2 > s/4,\\
	  x^{2}-s(T-t)/2, & |x|+ (T-t)/2 \leq s/4.
	  \end{cases}
	$$
	Indeed, $v$ is continuous and the unique viscosity\footnote{
	As is often the case for deterministic control problems, the value function is not $C^{1,1}$ and~\eqref{eq:zeroVolPDE} has no classical solution.} %
	solution of~\eqref{eq:zeroVolPDE}. The indicated formulas for $p_{\dyn}-p_{\sta}=v(0,x)-p_{\sta}$ and for the optimal controls $\phi_{i}$ follow. 
\end{proof}

Next, we prove the continuity of the prices in the small volatility limit.

\begin{proof}[Proof of Proposition~\ref{pr:limitSmallVolatilityEx}.]
  For the static case, the formula for $p_{\sta}^{\sigma}$ stated in~\eqref{eq:exampleStaticPriceProof} shows that $p_{\sta}^{\sigma} - p_{\sta}^{0}=\sigma^{2}T\downarrow0$. Turning to the dynamic case, we first show that $p_{\dyn}^{\sigma}=v^{\sigma}(0,x)$ is monotone with respect to $\sigma$. Since $f$ is convex, $x\mapsto v^{\sigma}(t,x)$ is convex and thus $\partial_{xx}v^{\sigma}\geq0$. Given $\sigma_{1}\geq\sigma_{2}>0$, it follows that $v^{\sigma_{2}}$ is a subsolution to the equation~\eqref{eq:PDEalphaMinusZero} for $v^{\sigma_{1}}$, and thus the comparison principle for parabolic PDEs implies that $v^{\sigma_{1}}\geq v^{\sigma_{2}}$. To see that $v^{\sigma}(t,x)\to v^{0}(t,x)$ and in particular $p_{\dyn}^{\sigma}\to p_{\dyn}^{0}$, we may again use a general result on the stability of value functions; cf.\ \cite[Corollary~3.1.13, p.\,138]{Krylov.80}.
\end{proof}

It remains to provide the calculations for our symmetric example with $\alpha_{-}=\alpha_{+}=1$.

\begin{proof}[Proofs for Example~\ref{ex:symmetricCosts}.]
  Following Remark~\ref{rk:symmetricCost}, the equilibrium price function in the dynamic case is
  $$
    v(t,x)=E[f(x+ \mu\tau+ \Sigma B_{\tau})], \quad\mbox{where}\quad \tau:=T-t
  $$
  and $B_{\tau}$ is a centered Gaussian with variance $\tau$.
  As $f(y)=y^{2}$,
  $$
    v(t,x) = x^{2} + 2x\mu\tau + \mu^{2}\tau^{2} + \Sigma^{2}\tau
  $$
  and the optimal portfolios in feedback form are given by
  $$
    \phi_{i}(t,x)=\cL^{i}v(t,x) = x(b_{i}-b_{j}) + \tfrac12 \tau (b_{i}^{2}-b_{j}^{2}) + \tfrac12 (\sigma_{i}^{2}-\sigma_{j}^{2}).
    $$
  For the static case, we have
  $$
    e_{i}= x^{2} + 2x b_{i}T + b_{i}^{2}T^{2} + \sigma_{i}^{2}T
  $$
  and thus 
  $$
    p_{\sta} = \frac{e_{1}+e_{2}}{2} = x^{2} + 2x \mu T + \frac{b_{1}^{2}+b_{2}^{2}}{2} T^{2} + \Sigma^{2}T
  $$
  as well as
  $$
    q_{i} = T^{-1}(e_{i}-p_{\sta}) = T^{-1}\frac{e_{i}-e_{j}}{2} = x(b_{i}-b_{j}) + \tfrac12 T(b_{i}^{2}-b_{j}^{2}) + \tfrac12 (\sigma_{i}^{2}-\sigma_{j}^{2}).
  $$
\end{proof}

\subsection{Proofs for Section~\ref{se:planner}}\label{se:plannerProof}

In this section we discuss the planner's problem.

\begin{proof}[Proof of Theorem~\ref{th:principal}.]
  Let $\cA$ be the set of all assignments; that is, all measurable functions $\alpha(t,x)=(\alpha_{1}(t,x),\dots,\alpha_{n}(t,x))$ with $\alpha_{-}\leq \alpha_{i}\leq \alpha_{+}$.
  
  (i)  Fix $\alpha\in\cA$ and denote $\|\alpha\|=\sum_{i=1}^{n}\alpha_{i}$. 
  Suppose first that $w\in C^{1,2}_{b}$ is a given equilibrium price function for $\alpha$. As in Lemma~\ref{le:optPortfolio}, the unique optimal portfolio for agent~$i$ is $\Phi_{i}(t)=\phi_{i}(t,X(t))$ where
  \begin{equation}\label{eq:optPortfolioPrincipal}
   \phi_{i}(t,x) = \alpha_{i}(t,x)\cL^{i}w(t,x).
  \end{equation}
  The market clearing condition then implies 
  $ 
    \sum_{i=1}^{n} \alpha_{i} \cL^{i}w =s
	$
	which is equivalent to
  \begin{equation}\label{eq:principalPDElinearAlpha}
    \frac{1}{\|\alpha\|}\left(\sum_{i=1}^{n} \alpha_{i} \cL^{i}w -s \right) =0
  \end{equation}
	or
  \begin{equation}\label{eq:principalLinearPDE}
  \partial_{t}w + \frac{1}{\|\alpha\|}\sum_{i=1}^{n} \alpha_{i}b_{i} \partial_{x}w + \frac12\tr\frac{1}{\|\alpha\|}\sum_{i=1}^{n}\alpha_{i}\sigma^{2}_{i} \partial_{xx}w - \frac{s}{\|\alpha\|} = 0.
\end{equation}
  Together with the terminal condition $w(T,\cdot)=f$, this implies by It\^o's formula that~$w$ has the Feynman--Kac representation
  \begin{equation}\label{eq:principalFeynmanKac}
    w(t,x)=E\left[f(X^{t,x}_{\alpha}(T)) - \int_{t}^{T} \kappa_{\alpha}(r,X^{t,x}_{\alpha}(r))\,dr\right]
\end{equation}
where $\kappa_{\alpha}=s/\|\alpha\|$ and $X^{t,x}_{\alpha}$ is a diffusion with initial condition $X^{t,x}(t)=x$, drift  $\mu_{\alpha}=\frac{1}{\|\alpha\|}\sum_{i=1}^{n} \alpha_{i}b_{i}$ and volatility  $\sigma_{\alpha}=\frac{1}{\|\alpha\|}\sum_{i=1}^{n}\alpha_{i}\sigma_{i}$. 

Conversely, suppose that $\alpha$ is sufficiently regular so that~\eqref{eq:principalLinearPDE} has a solution $w\in C^{1,2}_{b}$, then reversing the above arguments shows that $w$ is an equilibrium price function given the assignment~$\alpha$. For examples of sufficient regularity conditions on $\alpha$ see e.g.\ \cite[p.\,147]{Friedman.75}.

%
%
  
  (ii) Consider the nonlinear PDE
  $$
    \sup_{\alpha\in\cA} \sum_{i=1}^{n} \alpha_{i} \cL^{i}w  = s.
  $$
  Noting that the supremum is attained for $\alpha_{i}=\alpha_{\sgn(\cL^{i}w)}$, we observe that this is the same PDE as~\eqref{eq:marketClearingForOptimal}. To wit, after stating it in the equivalent form
  \begin{equation}\label{eq:plannerPDE}
    \sup_{\alpha\in\cA} \frac{1}{\|\alpha\|} \left(\sum_{i=1}^{n} \alpha_{i} \cL^{i}w - s \right)=0
  \end{equation} 
  we see that this is just a rewriting of~\eqref{eq:mainPDE}. In particular, for the terminal condition $f$, the unique solution of~\eqref{eq:plannerPDE} in $C^{1,2}_{b}$ is given by the equilibrium price function~$v$ of Theorem~\ref{th:PDE} and the stated assignment associated with $I_{*}$ attains that price. (As $v\in C^{1,2}_{b}$, this assignment is indeed ``sufficiently regular'' in the sense used in~(i) above.)
  
  To see that any other assignment leads to a lower price, consider a fixed (and sufficiently regular) assignment $\alpha=(\alpha_{1},\dots,\alpha_{n})$ and its equilibrium price~$v_{\alpha}$. As $v_{\alpha}$ solves~\eqref{eq:principalPDElinearAlpha},
  $$
    \sup_{\alpha'\in\cA} \frac{1}{\|\alpha'\|} \left(\sum_{i=1}^{n} \alpha'_{i} \cL^{i}v_{\alpha} - s \right) \geq \frac{1}{\|\alpha\|} \left(\sum_{i=1}^{n} \alpha_{i} \cL^{i}v_{\alpha} - s \right) =0.
  $$
  This shows that $v_{\alpha}$ is a subsolution of~\eqref{eq:mainPDE} and hence $v_{\alpha}\leq v$ by the comparison principle of \cite[Theorem~V.9.1, p.\,223]{FlemingSoner.06}.
\end{proof}

\begin{remark}\label{rk:irregularAssignment}
  The difference for a general (measurable)~$\alpha$ is that the smoothness of the solution to the PDE~\eqref{eq:principalLinearPDE} is not clear. However, one may substitute the classical solution of the PDE by a suitable weaker concept to derive the conclusions of Theorem~\ref{th:principal}. We sketch this for the case when the types disagree on the drift but agree on the volatility $\sigma:=\sigma_{i}$.
  
  Let $Q_{0}$ be a probability under which 
  $$
    dX(t)=\sigma(t,X(t))\,dW^{0}(t)
  $$
  where $W^{0}$ is a $Q_{0}$-Brownian motion. For $1\leq i\leq n$, let $Q_{i}$ be an equivalent probability such that $dW^{i}(t):=dW^{0}(t)-\sigma^{-1}(t,X(t))b_{i}(t,X(t))\,dt$ is a Brownian motion under $Q_{i}$ and thus $dX(t)=b_{i}(t,X(t))\,dt + \sigma(t,X(t))\,dW^{i}(t)$ under $Q_{i}$ as desired for type~$i$.
  Consider under $Q_{0}$ the linear backward stochastic differential equation (BSDE)
  $$
  dY(t)= g(t,X(t),Z(t))\,dt + Z(t)\,dW^{0}(t),\quad Y(T)=f(X(T))
  $$  
  where 
  $$
    g(t,x,z)=\kappa_{\alpha}(t,x) - \mu_{\alpha}(t,x) \sigma(t,x)^{-1} z.
  $$
  This equation has a unique square-integrable solution $(Y,Z)$, cf.~\cite[Proposition~2.2]{ElKarouiPengQuenez.97}, and in fact $Y$ is bounded in the present case. We remark that this solution corresponds to~\eqref{eq:principalLinearPDE} in the sense that if~\eqref{eq:principalLinearPDE} has a smooth solution~$w$ then $Y(t)=w(t,X(t))$ and $Z(t)=\sigma(t,X(t))\partial_{x}w(t,X(t))$. Under the measure $Q_{i}$ of agent~$i$ we have
  \begin{equation}\label{eq:BSDE}
    dY(t)= [g(t,X(t),Z(t)) + Z(t)\sigma^{-1}(t,X(t))b_{i}(t,X(t))]\,dt + Z(t)\,dW^{i}(t).
  \end{equation}
  Similarly as in~\eqref{eq:optPortfolioPrincipal}, this implies that if $Y$ is the price process, the optimal portfolio for agent $i$ is
  $$
    \Phi_{i}(t) = \alpha_{i}(t,X(t)) [g(t,X(t),Z(t))+ Z(t)\sigma^{-1}(t,X(t))b_{i}(t,X(t))].
  $$
  Moreover, the definition of $g$ yields that these portfolios satisfy the market clearing condition $\sum_{i} \Phi_{i}(t)=S(t)$; that is, $P:=Y$ is an equilibrium price process. Conversely, any bounded equilibrium price process~$P'$ induces a square-integrable solution of~\eqref{eq:BSDE} and hence $P'=Y$ by uniqueness.
  
  Since the BSDE is Markovian, one can show that the process $Y$ is necessarily of the form $Y(t)=v_{\alpha}(t,X(t))$ for a deterministic function~$v_{\alpha}$. Even if $v_{\alpha}$ is not necessarily smooth, it is still a viscosity solution of the related PDE and that is sufficient to apply the comparison principle as in part~(ii) of the above proof in order to see that the equilibrium price of Theorem~\ref{th:PDE} dominates~$Y(0)$. Alternately, one can apply the comparison principle of BSDEs; cf.~\cite{ElKarouiPengQuenez.97}.
\end{remark} 

\bibliography{supply}
\bibliographystyle{mynat}

\end{document}